\newcolumntype{C}{>{\(}p{3mm}<{\)}}
\setlist[enumerate,1]{label={\upshape(\roman*)}, align=left, widest=iii, leftmargin=*}
\setlist[enumerate,2]{label={\upshape(\alph*)}, align=left, widest=a, leftmargin=*}
\def\blfootnote{\xdef\@thefnmark{}\@footnotetext}
\newtheorem{thm}{Theorem}[section]
\newtheorem{lem}{Lemma}[section]
\newtheorem{cor}{Corollary}[section]
\newtheorem{prop}{Proposition}[section]
\theoremstyle{definition} 
\newtheorem{exa}{Example}[section]
\newtheorem{exas}{Examples}[section]
\newtheorem{defn}{Definition}[section]
\theoremstyle{remark} 
\newtheorem{rem}{Remark}[section]
\newtheorem{rems}{Remarks}[section]
\newcommand{\uplp}{{\upshape (}}
\newcommand{\uprp}{{\upshape )}}
\renewcommand{\le}{\leqslant}
\renewcommand{\ge}{\geqslant}
\renewcommand{\tilde}{\widetilde}
\newcommand{\omax}{\omega_{\text{max}}}
\newcommand{\kmax}{\kappa_{\text{max}}}
\title{Darboux dressing and undressing for the\\ ultradiscrete KdV equation}
\author[1]{Jonathan {J.C.} Nimmo${^\dagger}$}
\author[1]{Claire {R.}  Gilson}
\author[2]{Ralph Willox\thanks{willox@ms.u-tokyo.ac.jp}}
\affil[1]{School of Mathematics and Statistics, University of Glasgow, {Glasgow G12 8SQ,} UK.}
\affil[2]{Graduate School of Mathematical Sciences, the University of Tokyo,\break 3-8-1 Komaba, Meguro-ku, 153-8914 Tokyo, Japan.}
\date{}
\begin{document}
\maketitle
\begin{abstract}
{We solve the direct scattering problem for the ultradiscrete Korteweg de Vries (udKdV) equation, over $\mathbb R$ for any potential with compact (finite) support, by explicitly constructing bound state and non-bound state eigenfunctions. We then show how to reconstruct the potential in the scattering problem at any time, using an ultradiscrete analogue of {a} Darboux transformation. This is achieved by obtaining data uniquely characterising the soliton content and the `background' from the initial potential {by Darboux transformation}.} 
\end{abstract}
\blfootnote{{$^\dagger$ Sadly Jon Nimmo passed away on 20th June 2017}}

\section{Introduction}
The ultradiscrete KdV equation was first introduced by Takahashi and Satsuma in 1990 under the name soliton cellular automaton \cite{MR1082435}. A state in this system at time step $t$ is a sequence $(U^t_i)_{i\in\mathbb Z}$, with all but a finite number of terms non-zero. The state at time $t+1$ is determined by the update rule \eqref{eq:update rule}. In Takahashi and Satsuma's original formulation, $U^t_i$ takes binary values only, that is, cell values are restricted to $\{0,1\}$. The update rule then always gives another binary sequence and the system has an interpretation, which is now more commonly used, as a so-called box and ball system (BBS) in which the value 0 corresponds to an empty box and a 1 to a box occupied by a ball \cite{MR1192582}. Over the past decade it has become clear that the BBS and the mathematical tools used to describe its properties, are intimately related to many topics in mathematical physics such as Yang-Baxter maps, crystal base theory in quantum groups and tropical geometry, to name but a few  (cf. \cite{Inouereview} for an exhaustive review). 

At a more fundamental level, it has been known from the earliest papers that, in the BBS, all states evolve as $t\to\pm\infty$ into a finite number of blocks of consecutive 1s separated by blocks of 0s, where a block of $k$ consecutive 1s translating at speed $k$ is interpreted as a soliton of mass $k$. The two asymptotic states have the same block structure, with phase-shifts, and the evolution can be thought to represent interacting solitons. The initial value problem for the BBS was solved relatively recently, by introducing combinatorial quantities that play the role of action-angle variables and that are related to the Kerov-Kirillov-Reshetikhin bijection \cite{Takagi2005}, \cite{Inouereview} (or \cite{JIS} for an alternative, elementary, construction), as well as through a direct and explicit construction of the general $N$-soliton state that results from arbitrary initial conditions \cite{MR2451672}. 

By the mid-1990's \cite{TokiTaka1996} it had been realized the BBS can be seen to arise from the discrete KdV equation \eqref{eq:dkdv 1} via a limiting procedure called \emph{ultradiscretization} (also known in tropical mathematics as Maslov dequantisation, see \cite{MR2148995} for example). Hence the system is also referred to as the ultradiscrete KdV equation.  

In the last few years there has also been some interest in more general versions of the ultradiscrete KdV equation. These have the same update rule as the BBS but $U^t_i$ takes arbitrary integer \cite{MR2517833,MR2738130,MR2780417} or arbitrary real values \cite{MR2985392,MR3366309}. In this non-binary case, the most general solution describes the interaction of pair-wise interacting solitons of arbitrary positive mass, similar to the BBS solitons, overlayed on a simply evolving background. We will give more details of these solutions later. In particular, in \cite{MR2738130,MR2985392}, a procedure for solving the initial value problem for the ultradiscrete KdV equation by an inverse scattering method was described. This method bears a striking resemblance to the classical IST scheme for the (continuous) KdV equation \cite{DeiftTrub1979} and the action-angle variables that appear in it play exactly the same role as those in the continuous case. A key part of this procedure, and the most difficult part, is the use of Darboux transformations to ``undress'' all of the solitons at time $t=0$ one by one, and to determine their defining parameters, mass and phase. This parameter data, and the background that remains, evolve very simply, in fact linearly, in time and the solution at time $t$ can be reconstructed by means of a sequence of B\"acklund transformations \cite{MR2545618}.

The main results of this paper are a general and explicit expression for the special undressing eigenfunction that removes the heaviest soliton in a given state. It is also shown that the B\"acklund transformation used to reconstruct the solution comes from exactly the same Darboux transformation but with a different type of eigenfunction, one that does not correspond to a bound state. {This eigenfunction is constructed explicitly as well.} The paper is organised as follows: in Section \ref{sec:dKdV and udKdV} we state some standard results on the discrete and ultradiscrete KdV equation including the ultradiscrete linear system (Lax pair) for the ultradiscrete KdV equation.  Section~\ref{sec:sol prop} gives some details of the general solution of ultradiscrete KdV in the case that the solution $U^t_i\in\mathbb R$ and we give criteria which characterise solutions with and without soliton content. An alternative version of the update rule {for udKdV} is described in Section \ref{sec:update}. This alternative method enables us to deduce some simple properties of the system, including the conservation of total mass, and might have more general interest. In Section \ref{sec:conserved} we establish another two (as far as we know) new conserved quantities which will play a vital role in finding the dressing and undressing transformations. {In Section \ref{sec:darboux} we review some properties of Darboux transformations for the discrete KdV and introduce an ultradiscrete analogue.} In Section \ref{sec:line sol} we obtain a two parameter family of solutions of the ultradiscrete linear system expressed as the maximum of two basis functions {and we use such solutions to construct dressing Darboux transformations. In Section \ref{sec:bound states} we find the special choices of parameters for which the \emph{minimum} of the basis solutions is also a solution and we show how this solution defines an undressing Darboux transformation.} Finally in Section \ref{sec:cauchy} we {apply the dressing and undressing Darboux transformations to the Cauchy problem for the udKdV equation on $\mathbb R$ and we give a detailed example of the calculations involved.}

\section{Discrete and ultradiscrete KdV equations}\label{sec:dKdV and udKdV}
We first give a summary of some known results concerning the discrete and ultradiscrete KdV equations. 
\subsection{Discrete KdV}
The discrete KdV equation (dKdV) \cite{MR0460934} is the integrable partial difference equation 
\begin{equation}\label{eq:dkdv 1}
  \frac1{u^{t+1}_{i+1}}+\delta u^{t+1}_i=\frac1{u^{t}_{i}}+\delta u^{t}_{i+1},
\end{equation}
where $i,t\in\mathbb Z$ and $u^t_i\in(0,\infty)$ and where $\delta$ is a real constant not equal to 0 or 1 (values for which there is no continuum limit to the KdV equation). This equation arises directly from the system
\begin{equation}\label{eq:dkdv system}
  u^{t+1}_{i}=\frac{v^t_i}{1+\delta u^t_iv^t_i},\quad v^t_{i+1}=u^t_i(1+\delta u^t_iv^t_i),
\end{equation}
(a reduction of the Hirota-Miwa equation  \cite{MR2481229}) when $v^t_i$ is eliminated. Alternatively, assuming boundary conditions $u^t_i\to1$ and $v^t_i\to1/(1-\delta)$ as $i\to\pm\infty$ for any $t$, one may express $v^t_i$ as an infinite product in two ways
\begin{equation}\label{eq:v}
v^t_i=\frac1{1-\delta}\prod_{j<i}\frac{u^t_j}{u^{t+1}_j}\quad\text{or}\quad
v^t_i=\frac1{1-\delta}\prod_{j\ge i}\frac{u^{t+1}_j}{u^{t}_j}.
\end{equation}
From this, assuming convergence, it follows that
\begin{equation}\label{eq:cons u}
\prod_{i\in\mathbb Z} u^t_i\ 
\end{equation}
is independent of $t$.

Eliminating $v^t_i$ from \eqref{eq:dkdv system} using \eqref{eq:v} gives an alternative form of the dKdV equation
\begin{equation}\label{eq:dkdv 2}
\frac1{u^{t+1}_i}=\delta u^t_i+(1-\delta)\prod_{j<i}\frac{u^{t+1}_j}{u^{t}_j},
\end{equation}
which now defines an evolution in the positive $i$ direction, in which the values $\{u_j^t | j\leq i\}$ define $u_i^{t+1}$. Notice that equation \eqref{eq:dkdv 1} is invariant under the changes $(u, \delta ; i, t) \to (u, -\delta ; t, i)$ and $(u, \delta ; i, t) \to (1/u, 1/\delta ; i, -t)$ and that $\delta$ can therefore be restricted to $\delta\in(0,1)$ without loss of generality. From the alternative form \eqref{eq:dkdv 2} it follows that for such $\delta$, positive initial values $\{u_i^0\}$ always result in positive values for $u_i^t$, for all $t$ and $i$.

The dKdV equation has bilinear form \cite{MR0460934} 
\begin{equation}\label{eq:H dKdV}
\tau_{i+1}^{t+1}\tau_{i}^{t-1}=(1-\delta)\tau_{i+1}^{t}\tau_{i}^{t}+\delta\tau_{i}^{t+1}\tau_{i+1}^{t-1},
\end{equation}
where
\begin{equation}\label{eq:u tau}
  u_{i}^t=\frac{\tau_{i}^{t+1}\tau_{i+1}^{t}}{\tau_{i+1}^{t+1}\tau_{i}^{t}},
\end{equation}
and Lax pair
\begin{gather}
  \label{eq:dlin1}
  \phi_{i-1}^{t}-\frac2{1+\delta}\left(\frac1{u_{i}^{t}}+\delta u_{i-1}^{t}\right)\phi_{i}^{t}+(1-\lambda^2)\phi_{i+1}^{t}=0\\
  \label{eq:dlin2}
  \beta\phi_{i}^{t+1}=\phi_{i-1}^{t}+(\beta-1)u_{i-1}^{t}\phi_{i}^{t},
\end{gather} 
where $\delta\in(0,1)$ and $\beta:=(1-\delta)/(1+\delta)$ also lies in the interval $(0,1)$. 

This linear problem is not self-adjoint. Its adjoint is 
\begin{gather}
\label{eq:dlinad1}
\psi_{i+1}^t-\frac{2}{1+\delta} {\left(\frac{1}{u^{t-1}_{i-1}}+\delta u^{t-1}_{i}\right)} \psi^t_{i}+(1-\lambda^2)\psi^t_{i-1}=0\\
\label{eq:dlinad2}
\beta\psi^{t-1}_i=\psi^t_{i+1}+(\beta-1)u^{t-1}_{i}\psi^t_i.
\end{gather}
However, this adjoint is gauge equivalent to \eqref{eq:dlin1}, \eqref{eq:dlin2} since any solution $\phi^t_i$ of the linear problem \eqref{eq:dlin1} and \eqref{eq:dlin2} gives a solution 
\begin{equation}\label{eq:adjoint}
  \psi_i^t:=(1-\lambda^2)^i(1-\lambda^2/\beta^2)^{-t}\phi_i^t,
\end{equation} 
of \eqref{eq:dlinad1} and \eqref{eq:dlinad2}.

The squared eigenfunction potential $\Omega^t_i(\phi,\psi)=\Omega(\phi^t_i,\psi_i^t)$ is defined by the compatible difference equations
\begin{equation}\label{eq:Omega}
  \Omega_{i}^t-\Omega_{i-1}^t=\phi^t_i\psi^t_{i-1},\quad \Omega_{i}^{t+1}-\Omega_{i}^t=-\frac1\beta\phi^t_i\psi^{t+1}_{i}.
\end{equation}
In general, under the assumption that $u^t_i\to1$ as $i\to\pm\infty$,
\begin{equation}\label{eq:infty}
\phi^t_i\sim a(1+\lambda)^{-i}(1+\lambda/\beta)^t+b(1-\lambda)^{-i}(1-\lambda/\beta)^t,
\end{equation}
as $i\to-\infty$ and
\begin{equation}\label{eq:-infty}
\phi^t_i\sim c(1+\lambda)^{-i}(1+\lambda/\beta)^t+d(1-\lambda)^{-i}(1-\lambda/\beta)^t,
\end{equation}
as $i\to+\infty$, for some constants $a$, $b$, $c$ and $d$.

From here on we shall always assume that $0<\lambda<\beta<1$ (which is known to correspond to right-going solitons for the dKdV equation with $u_i^t\geq 1$). In this case,
\begin{equation}
  (1+\lambda)^{-i}\to
  \begin{cases}
    \infty&\text{as }i\to-\infty\\    
    0&\text{as }i\to+\infty
  \end{cases},\quad
  (1-\lambda)^{-i}\to
  \begin{cases}
    0&\text{as }i\to-\infty\\    
    \infty&\text{as }i\to+\infty
  \end{cases},
\end{equation}
and so if the constants $a$ and $d$ are both zero, $\phi^t_i\to0$ as $i\to\pm\infty$. {Then}
\begin{equation}
\phi^t_i\psi^t_{i-1}\sim b^2\frac{(1+\lambda)^i(1-\lambda/\beta)^t}{(1-\lambda)^i(1+\lambda/\beta)^{t}}\to0
\end{equation}
as $i\to-\infty$ and
\begin{equation}
\phi^t_i\psi^t_{i-1}\sim c^2\frac{(1-\lambda)^i(1+\lambda/\beta)^t}{(1+\lambda)^i(1-\lambda/\beta)^{t}}\to0
\end{equation}
as $i\to+\infty$. In such cases we can express $\Omega_t^i$ as a semi-infinite sum
\begin{equation}
  \Omega_{i}^t=\sum_{j\le i} \phi^t_{j}\psi^t_{j-1},
\end{equation}
and then define the norm, 
\begin{equation}\label{def: dnorm}
\Vert\phi^t\Vert:=\left\lvert\lim_{i\to+\infty}\Omega_i^t\right\rvert=\left\lvert\sum_{i\in\mathbb Z}\phi^t_i\psi^t_{i-1}\right\rvert,
\end{equation}
which, given the asymptotics of $\phi_i^t\psi_{i-1}^t$, is finite, and $\phi^t_i$ is said to correspond to a bound state. We have seen, in fact, that $\phi^t_i\to0$ as $i\to\pm\infty$ is sufficient for $\phi_i^t$ to be a bound state.

Notice also that for $i\to\pm\infty$,
\begin{equation}
\phi^t_i\psi^{t+1}_{i}\sim \phi_i^t\psi_{i-1}^t\to0.
\end{equation}
Taking the limit as $i\to\pm\infty$ in the second equation in \eqref{eq:Omega} we see that 
\begin{equation}
  \Vert\phi^{t+1}\Vert=\Vert\phi^t\Vert=\Vert\phi\Vert,
\end{equation}
and so the norm of an eigenfunction for a bound state is $t$-independent.

\subsection{Ultradiscrete KdV}
When taking the ultradiscrete limit, one assumes that $u^t_i=O(\delta^{-U^t_i})$, for positive $\delta\approx0$, and keeping only the lowest order terms in \eqref{eq:dkdv 1} one obtains the naive ultradiscrete form of dKdV, 
\begin{equation}\label{eq:naive}
  \max(U^{t+1}_{i}-1,-U^{t+1}_{i+1})=\max(U^t_{i+1}-1,-U^{t}_{i}),
\end{equation}
for all $i,t\in\mathbb Z$. From now on, we shall only consider solutions with finite support, that is we assume that $U^t_i=0$ for $|i|$ sufficiently large. By the same limiting process, \eqref{eq:cons u} gives 
\begin{equation}\label{eq:cons U}
\sum_{i\in\mathbb Z}U^t_i\ \ \text{is independent of $t$.}
\end{equation} 

The naive form \eqref{eq:naive} cannot be used to determine the time evolution uniquely and instead one considers the ultradiscrete limit of \eqref{eq:dkdv 2} to obtain the \emph{update rule}
\begin{equation}\label{eq:update rule}
  U^{t+1}_{i}=\min\Big(1-U^{t}_i,\sum_{j<i}(U^t_j-U^{t+1}_j)\Big).
\end{equation}
Changing $t$ to $t-1$ and rewriting \eqref{eq:update rule} using the conserved quantity \eqref{eq:cons U} gives
\begin{align*}
  U^{t}_{i}+U^{t-1}_i&=\min\Big(1,\sum_{j\le i}U^{t-1}_j-\sum_{j<i}U^{t}_j\Big)\\
                     &=\min\Big(1,\sum_{j\ge i}U^{t}_j-\sum_{j>i}U^{t-1}_j\Big),
\end{align*}
which gives the \emph{downdate rule}
\begin{equation}\label{eq:downdate rule}
  U^{t-1}_{i}=\min\Big(1-U^{t}_i,\sum_{j>i}(U^t_j-U^{t-1}_j)\Big).
\end{equation}
Note that if $U^t_i$ satisfies \eqref{eq:update rule} then \eqref{eq:naive} is also satisfied using the associativity and commutativity of $\min$. 

The bilinear form \eqref{eq:H dKdV} has ultradiscrete limit ($\tau^t_i=O(\delta^{-T^t_i})$)
\begin{equation}\label{eq:H udKdV}
T_{i+1}^{t+1}+T_{i}^{t-1}=\max(T_{i+1}^{t}+T_{i}^{t},T_{i}^{t+1}+T_{i+1}^{t-1}-1),
\end{equation}
where 
\begin{equation}\label{eq:ud H sub}
U^t_i=T^t_{i+1}+T^{t+1}_i-T^t_{i}-T^{t+1}_{i+1}.
\end{equation}

We assume that there is a solution $\phi_i^t$ of \eqref{eq:dlin1},\eqref{eq:dlin2} that is positive for all $i,t$. Then, taking the ultradiscrete limit ($\phi^t_i=O(\delta^{-\Phi^t_i}),1-\lambda^2=O(\delta^\kappa),1-\lambda^2/\beta^2=O(\delta^\omega)$) on the dKdV Lax pair leads to 
\begin{gather}
\label{eq:lin1}
  \max(\Phi^{t}_{i+1}-\kappa,\Phi^{t}_{i-1})=\Phi^{t}_{i}+\max(U^t_{i-1}-1,-U^t_i),\\
\label{eq:lin2}  
 \max(\Phi^{t+1}_{i+1}-\kappa,\Phi^{t+1}_{i-1})=\Phi^{t+1}_{i}+\max(U^t_{i}-1,-U^t_{i-1}),\\
\label{eq:lin3}  
  \max(\Phi^{t+1}_{i+1},\Phi^{t}_{i+1}+U^t_i-1)=\Phi^t_i,\\
 \label{eq:lin4}
  \max(\Phi^{t}_{i}+\kappa-\omega,\Phi^{t+1}_{i}+U^t_i+\kappa-1)=\Phi^{t+1}_{i+1},
\end{gather}
where $\omega$ is a nonnegative free parameter (spectral parameter) and $\kappa=\min(1,\omega)$. We call this system the ultradiscrete linear system for udKdV  \cite{MR2738130}, where `linearity' has to be understood as over the $(\max,+)$ semi-field: if $\Phi_i^t$ and $\widehat\Phi_i^t$ satisfy the system then so do $\max(\Phi_i^t,\widehat\Phi_i^t)$ and $\Phi_i^t+\phi$, for arbitrary constant $\phi$. {The rationale for imposing four linear equations in the ultradiscrete case instead of the customary two (as in the discrete or continuous case) is explained in \cite{MR2738130}.}

\section{Description of the general solution of udKdV}\label{sec:sol prop}
It is well known that the general solution of the box and ball system (udKdV with $U^t_i\in\{0,1\}$ and finite support) consists of interacting solitons made up of strings of $k$ consecutive 1s, propagating at speed $k$. The most general solution in the case $U^t_i\in\mathbb R$ is more complicated but can still be described completely: it consists of interacting solitons parametrised by {positive parameters $\omega$, plus a ``background'' that moves with speed 1, which is the minimal speed in the system \cite{MR2985392,MR3366309} (see also Remark \ref{rem:separation}).}

Any pair of a positive real number $\omega$ and a real phase constant $\phi$ fully describes a soliton solution to udKdV. Its $T$-function can be obtained as the ultradiscrete limit of a dKdV soliton but can also be verified directly by substitution in \eqref{eq:H udKdV}. It is given by
\begin{equation}
  T^t_i=\max(0,\kappa(i-\phi)-\omega t)=\max(0,\kappa(i-\varphi^t)),
\end{equation}
where $\kappa=\min(1,\omega)$ (the wave number) and $\varphi^t=\phi+c t$ (a time-dependent phase) in which the wave speed $c=\max(1,\omega)$ is always at least 1 {(note that $\kappa c = \omega$)}. For $\omega\le1$, $\kappa=\omega$ and $c=1$ whereas for $\omega\ge1$, $\kappa=1$ and $c=\omega$. From \eqref{eq:ud H sub} the solution $U^t_i$ is expressed in terms of four copies of the $T$-function: $U^t_i=T^t_{i+1}+T^{t+1}_{i}-T^t_{i}-T^{t+1}_{i+1}$.

It is beneficial here to study this solution $U^t(x)=U^t_i$ with the discrete space variable $i$ replaced by a real-valued variable $x$ \cite{MR2985392}. The expression for $U^t(x)$ is written explicitly as
\begin{align}
  U^t(x)
  &=\begin{cases}
    \kappa(x+1-\varphi^t)&\varphi^t-1\le x<\varphi^t\\
    \kappa&\varphi^t\le x<\varphi^{t+1}-1\\
    -\kappa(x-\varphi^{t+1})&\varphi^{t+1}-1\le x<\varphi^{t+1}\\
    0&\text{otherwise.}
  \end{cases}
\end{align}
\begin{figure}[htbp]\centering
\includegraphics*[width=13cm]{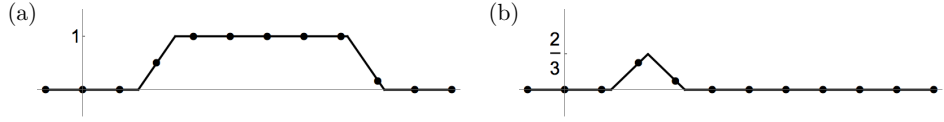}\vskip-.25cm
\caption{\label{fig:sol} Solitons plotted on $\mathbb R$ for $t=0$. In each plot the solid line is the graph of $U^t(x)$ and the points show the values of $U^t_i$ ($i\in\mathbb Z$). In (a) $\omega=17/3$, $\phi=5/2$ and in (b) $\omega=2/3$, $\phi=9/4$.}
\end{figure}

In Figure~\ref{fig:sol} (a) $\omega=17/3>1$ and so $\kappa=1$ and $c=17/3$. The solution at integer points is $U^0_i=\dots,0,0,1/2,1,1,1,1,1,1/6,0,\dots$. When considered in terms of the real variable $x$, the soliton simply translates at speed $c$ but when $c$ is non-integer, due to a stroboscopic effect, the soliton is not of fixed form on the integer lattice. In (b) $\omega=2/3\le1$ and so $\kappa=2/3$ and $c=1$. Hence $U^t_i=\dots,0, 0, 1/2, 1/6, 0,\dots$ which propagates without change at speed 1. 

In all cases, the area under the (real) curve $U^t(x)$ is $\kappa c=\omega$ and so we call $\omega$ the soliton mass. If we restrict back to integer $i$, each soliton has formula

\begin{equation}\label{eq:exp sol}
  U^t_i
  =\begin{cases}
    \kappa(1-\{\varphi^t\})&i=\lfloor \varphi^t\rfloor\\
    \kappa&\lfloor \varphi^t\rfloor+1\le i<\lceil \varphi^{t+1}\rceil-1\\
    \kappa(1-\{-\varphi^{t+1}\})&i=\lceil \varphi^{t+1}\rceil-1\\
    0&\text{otherwise,}
  \end{cases}
\end{equation}
where $\lfloor x\rfloor$ and $\lceil x\rceil$ denote, respectively, the floor and ceiling of a real number $x$ and $\{x\}:=x-\lfloor x\rfloor$ denotes the fractional part of $x$. Notice that the sum over all  $U_i^t$ always equals the soliton mass: 
\begin{align}
\sum_i U_i^t&=\kappa(1-\{\varphi^t\})+(\lceil\varphi^{t+1}\rceil-1-(\lfloor \varphi^t\rfloor+1))\kappa+\kappa(1-\{-\varphi^{t+1}\})\\
&=\kappa(-\varphi^t+\lfloor\varphi^t\rfloor+\lceil\varphi^{t+1}\rceil-\lfloor \varphi^t\rfloor-(-\varphi^{t+1})+\lfloor-\varphi^{t+1}\rfloor)\\
  &=\kappa(\varphi^{t+1}-\varphi^{t})=\kappa c = \omega, 
\end{align}
since for any $x$, $\lfloor x\rfloor+\lceil -x\rceil=0$.

In Figure~\ref{fig:sol ani}, (a) and (b) both show an animation over one period of a soliton with $\omega=5/3$ but with different phases. It is seen that each moves with (average) speed $c=5/3$, a translation of 5 space units in 3 time units, but at integer points, the {solutions depicted in (a) and (b) }never agree. 
\begin{figure}[htbp]\centering
\includegraphics*[width=15cm]{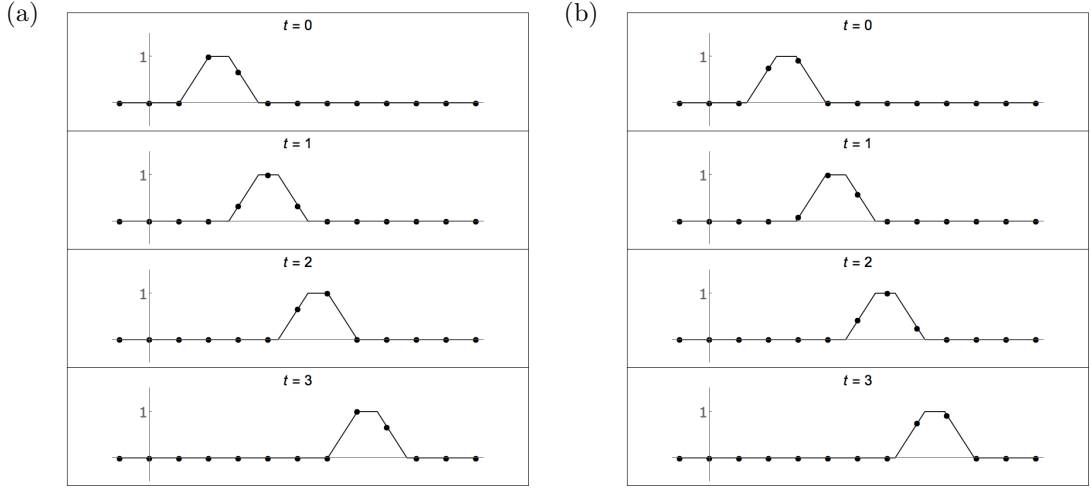}\vskip-.15cm
\caption{\label{fig:sol ani} Solitons plotted on $\mathbb R$ at times $t=0,1,2,3$. In (a) $\omega=5/3$, $\phi=2$ and in (b) $\omega=5/3$, $\phi=9/5$. In each case, the stroboscoptic effect is illustrated and the intial state is recovered, translated by 5 in 3 units of time, giving speed $5/3$.}
\end{figure}

Define the maximal local sum 
\begin{equation}\label{eq:V^t} 
V^t:=\max_i (U^t_i+U^t_{i+1}),
\end{equation} 
which is non-negative since $U^t_i=0$ for large enough $|i|$. The state $U^t_i=0$ for all $i,t$ is called \emph{trivial}. If $U^t_i$ is nontrivial but $V^t=0$ then we call $U^t_i$ \emph{background}. A background state translates with speed 1 in the direction of increasing $i$. Indeed, any state with $V^t\le1$ can be shown to translate at speed 1 ({see} Corollary~\ref{cor:cons}). This makes it impossible, by superficial examination of the asymptotic state $U^t_i$ at $t\to\infty$ alone, to distinguish background from solitons of mass $\omega\le1$. 

{In \cite{MR2517833}, Hirota has shown how to construct the $T$-function for a background solution to the ultradiscrete KdV equation. As will be {proved} in Corollary~\ref{cor:cons},  a background, say $B_i^t$, translates at speed 1 and so $B^t_i=B^0_{i-t}$. 
Now we observe that the Kronecker $\delta$-function can be written as
\begin{equation}
  \delta_{i,j}=\dfrac{1}{2} \big(|i+1-j|+|i-1-j|-2|i-j|\big),
\end{equation}
and so 
\begin{equation}
  B^t_i=\sum_{j\in\mathbb Z}\delta_{i-t,j}B^0_j=T^t_{i+1}+T^{t+1}_{i}-T^t_{i}-T^{t+1}_{i+1},
\end{equation}
where 
\begin{equation}\label{backgroundT}
  T^t_i=\dfrac{1}{2}\sum_{j\in\mathbb Z}|i-t-j|B^0_j,
\end{equation}
is the background $T$-function \cite{MR2517833}.
}

{
When $V^t>1$, in general, not much can be said about the $T$-functions at the level of the bilinear equation \eqref{eq:H udKdV}, besides their asymptotic behaviour in $i$. Let us define
\begin{equation}\label{Zdef}
Z_i^t= T_i^t - T_i^{t+1},
\end{equation}
which acts as a discrete potential for $U_i^t$: $U_i^t=\Delta Z_i^t = Z_{i+1}^t-Z_i^t$. Since $U_i^t$ has finite support it is clear {that asymptotically, for $|i|\gg 1$, $Z_i^t$  will take values $Z_+^t$ or $Z_-^t$ depending on the sign of $i$ but otherwise independent of $|i|$.} In fact, these two values do not depend on $t$ either. Replacing $T_i^{t+1}$ by $T_i^t-Z_i^t$ and $T_i^{t-1}$ by $T_i^t + Z_i^{t-1}$ in the ultradiscrete bilinear equation \eqref{eq:H udKdV}, we find in the asymptotic regime where 
$U_i^t=0~{\forall} |i|\gg1$ that
\begin{equation}
Z_\pm^{t-1} - Z_\pm^t = \max(0, Z_\pm^{t-1} - Z_\pm^t-1),
\end{equation}
and that $Z_\pm^t=Z_\pm^{t-1}$ for all $t$. The asymptotic values of this discrete potential, $Z_+$ and $Z_-$, are obviously related by
\begin{equation}
Z_+ - Z_- = \sum_{i\in\mathbb{Z}} U_i^t .
\end{equation}
Note that since $U_i^t$  has finite support, this last sum is actually a finite sum.}

{
For example, the asymptotics for the background $T$-function \eqref{backgroundT} is
\begin{equation}\label{BGZ}
Z_\pm = \lim_{i\to\pm\infty} \frac12 \sum_{j\in\mathbb{Z}} \big( |i-t-j| - |i-t-1-j| \big) B_j^0 = \pm \frac12 \sum_{j\in\mathbb{Z}} B_j^0,
\end{equation}
i.e. +1 or -1 times half the mass of the background.
}

{
Notice also that, by using the natural gauge freedom
\begin{equation}\label{Tgauge}
T_i^t \mapsto T_i^t + \alpha\, i + \beta\, t + \gamma\qquad (\alpha, \beta, \gamma\in\mathbb{R})
\end{equation}
we have in defining a $T$-function by \eqref{eq:H udKdV} or \eqref{eq:ud H sub} (both relations are invariant under such a transformation, but $Z_i^t\mapsto Z_i^t-\beta$), we can assign any value we choose to either $Z_+$ or $Z_-$.
}

{It was already mentioned in passing that the general solution of udKdV (with finite support) consists of a finite number of solitons of masses $0<\omega_1\le\cdots\le \omega_n$ plus a background. In the next few sections we shall develop a set of tools that will allow us to prove this statement and that yield an algorithm for obtaining analytic expressions for the $T$-functions for such general solutions to the udKdV equation.}

\section{Alternative version of the update and downdate rules}\label{sec:update}
In this section, we give alternative descriptions of the update and downdate rules. These allow us to prove some basic results on the udKdV evolution in an elementary way. They also have some computational advantage over the more usual formulae.

Let $a=(a_i)_{i\in\mathbb Z}$ denote a real sequence. It is assumed that $a$ is summable, that is, $\sum a:=\sum_{i\in\mathbb Z}a_i$ is finite and so necessarily $a_i\to0$ as $i\to\pm\infty$.  
\begin{defn}
Define $R_j:\mathbb R\to\mathbb R$, for $j\in\mathbb Z$, by
\begin{equation}
R_j(a_j)=\min(1,a_j),\ \ R_j(a_{j+1})=a_{j+1}+a_j-\min(1,a_{j}),\ \ R_j(a_i)=a_i\ (i\ne j,j+1).
\end{equation}
This definition is extended to real sequences in the natural way: $R_j(a)=(R_j(a_1),R_j(a_2),\dots)$.
\end{defn}
\begin{rem}
One may visualise $(a_i)$ as the contents of an array of cells labelled $i$ whose preferred capacities are each 1. Then the action of operator $R_j$ is to transfer the excess content of an overfull cell $j$ to the right neighbouring cell $j+1$ (even though it might already be overfull or become overfull). If cell $j$ is not overfull, $a_j\le1$, then $R_j$ acts as the identity. Note also that for any $j$, $R_j(a_{j})+R_j(a_{j+1})=a_{j}+a_{j+1}$
and so acting by any $R_j$ leaves the sum of the sequence invariant.
\end{rem}
\begin{defn}
Define $R:\mathbb R\to\mathbb R$ by  $R=\cdots\circ R_3\circ R_2\circ R_1\circ\cdots$.
\end{defn}
\begin{rems}
\begin{enumerate}
  \item For any sequence $a$, the resulting sequence $R(a)$ has terms which do not exceed 1 and so $R$ is idempotent: $R^2(a)=R(a)$.
  
  \item Although $R$ is the composition of infinitely many functions, all but finitely many of them act as the identity. Since for any given $a$, $a_i\to0$ as $i\to-\infty$ then certainly $a_i\le1$ for $i<m$ (say). Also, since the sum of the sequence is finite, it follows that for some $n\ge m$,
  \begin{equation}
  R(a)=R_n\circ\cdots\circ R_{m+1}\circ R_{m}(a).
  \end{equation}
  
\end{enumerate}
\end{rems}  

\begin{exa}
Consider sequence $a=(\dots,0,\tfrac12,\tfrac52,-\tfrac16,-\tfrac13,0,\dots)$ where the first non-zero term is $a_1$. The smallest $i$ for which $a_i>1$ is $2$. Then 
\begin{align*}
R(a)&=\dots\circ R_4\circ R_3\circ R_2\circ R_1\circ\cdots(\dots,0,\tfrac12,\tfrac52,-\tfrac16,-\tfrac13,0,\dots)\\
    &=\dots\circ R_4\circ R_3\circ R_2(\dots,0,\tfrac12,\tfrac52,-\tfrac16,-\tfrac13,0,\dots)\\
    &=\dots\circ R_4\circ R_3(\dots,0,\tfrac12,1,\tfrac43,-\tfrac13,0,\dots)\\
    &=\dots\circ R_4(\dots,0,\tfrac12,1,1,0,0,\dots)\\
    &=(\dots,0,\tfrac12,1,1,0,0,\dots).
\end{align*}
\end{exa}
\begin{lem}\label{lem:R}
For any summable sequence $a$, 
\begin{equation}\label{eq:R}
R(a)_i=\min\left(1,\sum_{j\le i}a_j-\sum_{j<i}R(a)_j\right).  
\end{equation}
\end{lem}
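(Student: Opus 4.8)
The plan is to exploit the fact, recorded in the Remarks, that only finitely many of the operators $R_j$ act nontrivially, so that $R$ is really a \emph{finite} composition applied from left to right. I would introduce the intermediate sequences $a^{(k)}:=R_k\circ\cdots\circ R_m(a)$ (with $a^{(m-1)}:=a$ and $m$ chosen, as in the Remarks, so small that every operator to its left is the identity), so that $R(a)=a^{(n)}$ for large enough $n$, and make the simple but crucial observation that once $R_k$ has acted, cell $k$ is frozen: every later operator $R_\ell$ with $\ell>k$ touches only cells $\ell$ and $\ell+1$, both of which exceed $k$. Consequently
\begin{equation}
R(a)_k=a^{(k)}_k=\min\bigl(1,a^{(k-1)}_k\bigr),
\end{equation}
that is, the final value in cell $k$ is just the value held there immediately before $R_k$ caps it at $1$. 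The whole problem therefore reduces to identifying this pre-capping value $a^{(k-1)}_k$.

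To compute it I would appeal to the invariance of the total sum under each $R_j$, also noted in the Remark. Fix $i$ and inspect the intermediate state $a^{(i-1)}$, obtained after applying every operator up to and including $R_{i-1}$. In this state the cells split into three groups: for $j<i$ the cell has already been frozen, so $a^{(i-1)}_j=R(a)_j$; for $j>i$ the cell has not yet been touched by any of $R_m,\dots,R_{i-1}$ (the largest index they reach is $i$), so $a^{(i-1)}_j=a_j$; and cell $i$ itself holds the unknown value $a^{(i-1)}_i$. Since summability guarantees $\sum_j a^{(i-1)}_j=\sum_j a_j$, solving for the middle term gives
\begin{equation}
a^{(i-1)}_i=\sum_{j\in\mathbb Z}a_j-\sum_{j<i}R(a)_j-\sum_{j>i}a_j=\sum_{j\le i}a_j-\sum_{j<i}R(a)_j.
\end{equation}
Substituting this, with $k=i$, into the first displayed formula for $R(a)_i$ then yields exactly \eqref{eq:R}.

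The only genuinely delicate point is the bookkeeping around the doubly infinite index set: I must justify that the ``freezing'' and sum-conservation statements survive the passage from the finite composition to the formally two-sided infinite one. This is handled precisely by the Remarks --- summability forces $a_j\le1$ for all sufficiently negative $j$, so those operators act as the identity and the left tail is already in final form (where \eqref{eq:R} holds trivially, both sides reducing to $a_i$ since $R(a)_j=a_j$ for $j\le i$ there), while finiteness of $\sum a$ makes the overflow terminate on the right after finitely many steps. Once the composition is genuinely finite the three-group decomposition above is rigorous, and no convergence issue beyond the assumed summability arises.
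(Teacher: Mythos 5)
Your proof is correct and follows essentially the same route as the paper's: the same intermediate sequences $a^{(k)}=R_k\circ\cdots\circ R_m(a)$, the same freezing observation $R(a)_k=a^{(k)}_k=\min\bigl(1,a^{(k-1)}_k\bigr)$, and the same identification of the pre-capping value $a^{(i-1)}_i=\sum_{j\le i}a_j-\sum_{j<i}R(a)_j$. The only difference is cosmetic: the paper derives that identity by induction on $i$ via the local recursion $a^{(i)}_{i+1}=a_{i+1}+a^{(i-1)}_i-a^{(i)}_i$, whereas you get it in one step from global sum conservation together with the three-group decomposition of $a^{(i-1)}$ --- equivalent arguments, since each $R_j$ preserves the local sum $a_j+a_{j+1}$.
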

\begin{proof}
Let $m$ be the smallest index such that $a_m>1$ and for each $j$, where $j\geq m$, define $a^{(j)}=R_j\circ R_{j-1}\circ R_{j-2}\circ\cdots\circ R_m(a)$. 

We first observe that for any $i$, $a^{(i)}_k=a_k$ for all $k>i+1$ and $a^{(i)}_k=a^{(k)}_k$ for all $k<i$. It follows that for any $i$, $R(a)_j=a^{(j)}_j$. Also we have 
\begin{equation}\label{eq:alt ind 1}
a_i^{(i)}=\min(1,a^{(i-1)}_i)\text{ and }a_{i+1}^{(i)}=a_{i+1}+a^{(i-1)}_{i}-a^{(i)}_{i}.    
\end{equation}  
It follows by a straightforward inductive argument starting at $i=m$ that 
\begin{equation}
a_i^{(i-1)}=\sum_{j\le i}a_j-\sum_{j<i}a^{(j)}_j,
\end{equation}  
and then \eqref{eq:R} follows from \eqref{eq:alt ind 1}.
\end{proof}

\begin{thm}
A real summable sequence $(U^t_i)$ has udKdV time update $(U^{t+1}_i)$ where
\begin{equation}
  U^{t+1}_i+U^t_i=R(U^t_{i-1}+U^t_i).
\end{equation}
\end{thm}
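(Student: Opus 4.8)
The plan is to show that the claimed identity is nothing but a repackaging of the update rule \eqref{eq:update rule} through the recursive characterisation of $R$ supplied by Lemma~\ref{lem:R}. I would write $a_i:=U^t_{i-1}+U^t_i$ and $b_i:=U^{t+1}_i+U^t_i$, so that the goal is precisely $b=R(a)$. To keep the bookkeeping transparent I would first introduce the partial sums $S^t_i:=\sum_{j\le i}U^t_j$, all of which are finite because $(U^t_i)$ has finite support.

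First I would rewrite \eqref{eq:update rule} in terms of these partial sums. Since $\sum_{j<i}(U^t_j-U^{t+1}_j)=S^t_{i-1}-S^{t+1}_{i-1}$, the update rule reads $U^{t+1}_i=\min(1-U^t_i,\,S^t_{i-1}-S^{t+1}_{i-1})$. Adding $U^t_i$ inside the minimum and using $S^t_{i-1}+U^t_i=S^t_i$ gives the compact form
\[
b_i=U^{t+1}_i+U^t_i=\min\bigl(1,\,S^t_i-S^{t+1}_{i-1}\bigr).
\]

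The key step is then the elementary identity $\sum_{j\le i}a_j-\sum_{j<i}b_j=S^t_i-S^{t+1}_{i-1}$: indeed $\sum_{j\le i}a_j=S^t_{i-1}+S^t_i$ (shifting the index in the $U^t_{j-1}$ part) while $\sum_{j<i}b_j=S^{t+1}_{i-1}+S^t_{i-1}$, and the two $S^t_{i-1}$ terms cancel. Substituting this into the compact form above shows that $b$ satisfies
\[
b_i=\min\Bigl(1,\,\sum_{j\le i}a_j-\sum_{j<i}b_j\Bigr),
\]
which is exactly the relation \eqref{eq:R} that characterises $R(a)$ in Lemma~\ref{lem:R}.

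Finally I would argue that this recursion pins down the sequence uniquely, so that satisfying it forces $b=R(a)$. This is the only point requiring a little care: \eqref{eq:R} expresses the $i$-th term in terms of $a$ and the earlier terms $\{b_j:j<i\}$, so a strong induction in $i$ yields uniqueness once a common starting behaviour is established. Both $b$ and $R(a)$ vanish for all sufficiently negative $i$ — for $b$ because $(U^t_i)$ has finite support, and for $R(a)$ because the Remarks following the definition of $R$ show that $R$ leaves unchanged the far-left terms of $a$, which here are zero — so the induction anchors and propagates to all $i$. Hence $b=R(a)$, which is the assertion. The main (mild) obstacle is therefore not the algebra, which is a short telescoping computation, but making this uniqueness step precise: recognising that Lemma~\ref{lem:R} is a genuine \emph{characterisation} of $R(a)$ and supplying the boundary behaviour that upgrades ``$b$ satisfies \eqref{eq:R}'' to ``$b=R(a)$''.
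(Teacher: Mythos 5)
Your proof is correct and is essentially the paper's own argument: both rest on Lemma~\ref{lem:R} together with the same telescoping rearrangement that identifies the recursion \eqref{eq:R} with the update rule \eqref{eq:update rule}. The only difference is one of direction---the paper checks that $R(a)_i-U^t_i$ satisfies \eqref{eq:update rule} whereas you check that $U^{t+1}_i+U^t_i$ satisfies \eqref{eq:R}---and your explicit handling of the uniqueness/anchoring step, which the paper leaves implicit, is a welcome addition rather than a divergence.
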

\begin{proof}
Let $a$ be the sequence with $a_i=U^t_{i-1}+U^t_i$. Then, by Lemma~\ref{lem:R},
\begin{align*} 
R(a)_i-U^t_i&=\min\left(1,\sum_{j<i}U^t_j+\sum_{j\le i}U^t_j-\sum_{j<i}R(a)_j\right)-U^t_i\\
          &=\min\left(1-U^t_i,\sum_{j<i}U^t_j-\sum_{j<i}(R(a)_j-U^t_j)\right),
\end{align*} 
which is the time update rule \eqref{eq:update rule} with $R(a)_i-U^t_i=U^{t+1}_i$.
\end{proof}
\begin{rem}
There is an entirely analogous alternative version of the downdate rule in which excess cell capacity is moved left rather than right:
\begin{equation}
  U^{t-1}_i+U^t_i=L(U^t_i+U^t_{i+1}),
\end{equation}
where $L=\cdots\circ L_1\circ L_2\circ L_3\circ\cdots$ and
\begin{equation}
L_j(a_j)=\min(1,a_j),\ \ L_j(a_{j-1})=a_{j-1}+a_j-\min(1,a_{j}),\ \ L_j(a_i)=a_i\ (i\ne j,j-1).
\end{equation}
\end{rem}

\begin{exas}
This example will illustrate how to use this version of the downdate rule. We use an over bar to indicate negative numbers and values of $U^t_i$ outside the interval shown are 0.
\[
\setlength{\arraycolsep}{0pt}
\renewcommand{\arraystretch}{1.5}
  \begin{array}{rCCCCCCCCCCCCCCCCCCCCCCCCCCCCC}
    U^t_i\ :&&0&0&0&1&\frac12&0&\frac12&1&1&\overline{\frac12}&0&0\\
    U^t_i+U^t_{i+1}\ :&&0&0&1&\frac32&\frac12&\frac12&\frac32&2&\frac12&\overline{\frac12}&0&0\\
    L(U^t_i+U^t_{i+1})\ :&&0&1&1&1&1&1&1&1&\frac12&\overline{\frac12}&0&0\\
    L(U^t_i+U^t_{i+1})-U^t_i=U^{t-1}_i\ :&&0&1&1&0&\frac12&1&\frac12&0&\overline {\frac12}&0&0&0\\
   \end{array}
\]
A second example illustrates the update rule with real values. 
\[
\setlength{\arraycolsep}{1pt}
\renewcommand{\arraystretch}{1.5}
  \begin{array}{rCCCCCCCCCCCCCCCCCCCCCCCCCCCCC}
    U^t_i\ :&&0&\pi&0&0&0&0&0&0&0\\
    U^t_{i-1}+U^t_{i}\ :&&0&\pi&\pi&0&0&0&0&0&0\\
    R(U^t_{i-1}+U^t_{i})\ :&&0&1&1&1&1&1&1&\alpha&0\\
    R(U^t_{i-1}+U^t_{i})-U^t_i=U^{t+1}_i\ :&&0&\beta&1&1&1&1&1&\alpha&0\\
   \end{array}
\]
where $\alpha=2\pi-6\in(0,1)$ and $\beta=1-\pi<0$. Note that the total masses in the two examples, $\frac72$ and $\pi=5+\alpha+\beta$ respectively, are preserved under the evolution.
\end{exas}

Using the alternative formulation of the update/downdate rule the following properties are apparent:
\begin{cor}\label{cor:cons}\par\noindent
\begin{enumerate}
  \item Total mass $\sum_{i\in\mathbb Z}U^t_i$ is conserved \uplp cf. \eqref{eq:cons u}\uprp.
  \item For any $t$, $V^t$
   defined in equation \eqref{eq:V^t} satisfies: $V^t\le1\iff U^{t+1}_i=U^t_{i-1}$ for all $i$. If $V^t\le1$ then it is $t$-invariant.
\end{enumerate}
\end{cor}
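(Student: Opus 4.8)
The plan is to work throughout with the alternative update rule $U^{t+1}_i+U^t_i=R(U^t_{i-1}+U^t_i)$ established above, and with the sequence $a=(a_i)$ defined by $a_i:=U^t_{i-1}+U^t_i$, so that $U^{t+1}_i+U^t_i=R(a)_i$. For part (i), I would invoke the remark following the definition of $R_j$, namely that each $R_j$ leaves the sum of a sequence invariant, together with the fact that for a summable sequence only finitely many $R_j$ act nontrivially. Hence $\sum_i R(a)_i=\sum_i a_i=2\sum_i U^t_i$. Since the left-hand side also equals $\sum_i(U^{t+1}_i+U^t_i)=\sum_i U^{t+1}_i+\sum_i U^t_i$, cancelling one copy of $\sum_i U^t_i$ yields $\sum_i U^{t+1}_i=\sum_i U^t_i$, which is the asserted conservation of mass.

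For part (ii) I would first record the harmless index shift $V^t=\max_i(U^t_i+U^t_{i+1})=\max_i a_i$, so that the hypothesis $V^t\le 1$ is exactly the statement that $a_i\le 1$ for every $i$. If $a_i\le 1$ for all $i$, then each $R_j$ acts as the identity on $a$ (since $R_j(a_j)=\min(1,a_j)=a_j$ forces $R_j(a_{j+1})=a_{j+1}$ as well), so $R(a)=a$ and therefore $U^{t+1}_i+U^t_i=a_i=U^t_{i-1}+U^t_i$, i.e.\ $U^{t+1}_i=U^t_{i-1}$ for all $i$. Conversely, if $U^{t+1}_i=U^t_{i-1}$ then $R(a)_i=U^{t+1}_i+U^t_i=U^t_{i-1}+U^t_i=a_i$, so $R(a)=a$; but by the first remark following the definition of $R$ the sequence $R(a)$ has all terms at most $1$, whence $a_i\le 1$ for all $i$ and $V^t=\max_i a_i\le 1$. (That $V^t\ge 0$ is already known from finite support.) This gives the equivalence.

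For the invariance of $V^t$ I would argue that when $V^t\le 1$ the state merely translates: $U^{t+1}_i=U^t_{i-1}$ says $U^{t+1}$ is the right shift of $U^t$, and since a maximum over all $i$ is unaffected by shifting, $V^{t+1}=\max_i(U^{t+1}_i+U^{t+1}_{i+1})=\max_i(U^t_{i-1}+U^t_i)=V^t\le 1$; iterating gives $V^s=V^t$ for all $s\ge t$. To extend this to $s<t$ I would invoke the symmetric downdate statement $V^t\le 1\iff U^{t-1}_i=U^t_{i+1}$, proved by the identical argument with the left-moving operator $L$ and the downdate rule $U^{t-1}_i+U^t_i=L(U^t_i+U^t_{i+1})$ in place of $R$, which yields $V^{t-1}=V^t$ and hence $V^s=V^t$ for all $s\in\mathbb Z$.

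The proof is short and essentially mechanical once the three structural properties of $R$ are in hand: sum preservation, the bound $R(a)_i\le 1$, and the fact that $R$ fixes any sequence all of whose entries are at most $1$. The only points that need care are the index shift identifying $\max_i a_i$ with $V^t$, and the observation that the forward update rule alone gives invariance only for $s\ge t$; obtaining genuine $t$-invariance in both directions requires appealing to the downdate/$L$ version of the argument as well.
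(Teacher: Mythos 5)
Your proof is correct and follows essentially the same route as the paper's: both parts rest on the structural properties of $R$ (sum preservation, the bound $R(a)_i\le 1$, and the fact that $R$ fixes any sequence whose terms are all at most $1$), which is exactly the paper's own, much terser, argument. The only point where you go beyond the paper is in spelling out that the backward $t$-invariance of $V^t$ requires the $L$/downdate version of the same reasoning — a detail the paper's proof leaves implicit.
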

\begin{proof}
(i) This follows since the action of $R$ preserves the sum of any sequence. 

(ii) A sequence $a$ is invariant under the action of $R$ if and only if $a_i\le1$ for all $i$ and $U^{t+1}=R(U^t_{i-1}+U^t_i)-U^t_i=U^t_{i-1}\iff R$ acts as the identity.
\end{proof}

\section{Two more conserved quantities}\label{sec:conserved}
We have already seen that when $U^t_i$ satisfies udKdV then the sum $\sum_iU^t_i$ is conserved. In this section we establish two more constants of the motion. These conserved quantities have essential applications to the study of the solution of the ultradiscrete linear problem \eqref{eq:lin1}--\eqref{eq:lin4}. Some proofs given in this section make use of the following general result. 
\begin{lem}\label{lem:seq}
Let $(a_n)$, $(b_n)$, $(n\in\mathbb Z)$ be bounded real sequences. If there exists \uplp at least one\uprp{} $m$ such that $a_m>a_{m-1}$, and for every such $m$, $b_m\ge a_m$ then $\sup_i b_i\ge \sup_i a_i$.

By reversing the order of these sequences this criterion is equivalently restated: if there exists \uplp at least one\uprp{} $m$ such that $a_m>a_{m+1}$, and for every such $m$, $b_m\ge a_m$ then $\sup_i b_i\ge \sup_i a_i$.
\end{lem}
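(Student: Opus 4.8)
The two formulations are interchanged by the index reversal $n\mapsto -n$: replacing $(a_n),(b_n)$ with $(a_{-n}),(b_{-n})$ leaves both suprema unchanged and turns the condition $a_m>a_{m-1}$ into $a_m>a_{m+1}$, so it suffices to prove the first version. The plan is to argue by contraposition. Suppose $\sup_i b_i<\sup_i a_i$ and set $B:=\sup_i b_i$ and $A:=\sup_i a_i$; both are finite because the sequences are bounded, and $B<A$. At every ascent index $m$ (one where $a_m>a_{m-1}$) the hypothesis gives $a_m\le b_m\le B$, so the situation to be contradicted is encoded cleanly as
\[
a_m\le B\quad\text{at every ascent index }m,\qquad\text{yet}\qquad \sup_i a_i=A>B.
\]

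The engine of the proof is a backward descent. Since $A>B$ I may fix an index $k$ with $a_k>B$. This $k$ cannot be an ascent, for an ascent would force $a_k\le B$ by the displayed condition; hence $a_k\le a_{k-1}$ and therefore $a_{k-1}\ge a_k>B$. Running the same test at $k-1$, then at $k-2$, and so on, I either reach an ascent index whose value exceeds $B$ — an immediate contradiction with the displayed condition — or I generate an infinite, strictly backward run of indices $k>k-1>k-2>\cdots$, none of which is an ascent and all of which carry values in $(B,A]$ that are non-decreasing as the index decreases. An entirely symmetric forward test shows that to the right of $k$ the values are non-increasing until they first drop to $\le B$, which helps localise where an ascent can possibly occur.

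The main obstacle, and the crux of the whole lemma, is to rule out the second alternative, i.e.\ to show that the ascent-free backward run cannot persist indefinitely. This is precisely where boundedness must be brought to bear: the run produces a monotone, bounded family of values, and I would exploit this monotone–bounded structure together with the standing hypothesis that at least one ascent index exists to force the descent to terminate at an ascent, which then contradicts the displayed condition. I expect the rigorous justification of this termination — reconciling the infinite ascent-free backward run with the guaranteed existence of an ascent and with the boundedness of $(a_n)$ — to be the delicate heart of the argument. Once it is secured, the contrapositive is complete, and the second formulation follows immediately from the index reversal noted at the outset.
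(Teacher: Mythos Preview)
Your argument stops short precisely where it matters: you never actually prove that the ascent-free backward run terminates, you only ``expect'' it to. In fact this step cannot be completed, because the lemma as literally stated is false. Take $a_n=2$ for $n\le 0$, $a_1=0$, and $a_n=1$ for $n\ge 2$; take $b_n\equiv 1$. Both sequences are bounded, the only ascent index is $m=2$ (where $a_2=1>0=a_1$), and there $b_2=1\ge a_2$; yet $\sup_i b_i=1<2=\sup_i a_i$. In your language, the backward run from $k=0$ (where $a_0=2>B=1$) never reaches an ascent, and neither boundedness nor the existence of an ascent far to the right forces termination.

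The paper's own proof makes the equivalent leap in a different guise: it asserts as ``obvious'' that $\sup_{i\in\mathbb Z}a_i=\sup_{i\in I}a_i$, where $I=\{m:a_m>a_{m-1}\}$, but the example above has $\sup_{i\in I}a_i=1<2=\sup_{i\in\mathbb Z}a_i$. What rescues the paper is that in every application the sequences $X^t_i$ and $Y^t_i$ vanish for $|i|$ large, so the relevant $(a_n)$ are eventually constant and the supremum is actually attained; the leftmost maximiser $k$ then satisfies $a_{k-1}<a_k$, so $k\in I$ and $\sup_{i\in I}a_i=\sup_i a_i$ after all. With this implicit finite-support hypothesis your contrapositive approach can also be completed (the backward run must exit the region where $a_n$ exceeds its eventual constant value), but the paper's direct observation is considerably cleaner once that hypothesis is on the table.
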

\begin{proof}
Let $I$ be the (necessarily non-empty) subset of $\mathbb Z$ at which the sequence $(a_n)$ increases, that is $I=\{m\in\mathbb Z:a_m>a_{m-1}\}$. It is obvious that $\sup_{i\in\mathbb Z} a_i=\sup_{i\in I} a_i$ and if $b_m\ge a_m$ for all $m\in I$ then $\sup_{i\in I} b_i\ge\sup_{i\in I} a_i$. Hence $\sup_{i\in I} b_i\ge\sup_{i\in\mathbb Z} a_i$.
\end{proof}

Now consider two particular sequences defined by
\begin{equation}\label{eq:X}
  X^t_i :=\sum_{j\ge i}(U^{t+1}_j-U^{t-1}_j)=\sum_{j<i}(U^{t-1}_j-U^{t+1}_j),
\end{equation}
and
\begin{equation}\label{eq:Y}
Y^t_i :=\sum_{j>i}U^{t}_j-\sum_{j\ge i}U^{t-1}_j=\sum_{j<i}U^{t-1}_j-\sum_{j\le i}U^{t}_j,
\end{equation}
where the alternative expressions are obtained using the conservation of total mass \eqref{eq:cons U}. Since all time updates and downdates of $U^t_j$ equal zero for $|j|$ large enough, both $X^t_i$ and $Y^t_i$ are zero for $|i|$ sufficiently large. These sequences therefore attain maximum values. Further, they will be seen to be the ultradiscrete conserved densities of two more conserved quantities: $\max_i X^t_i$ is the \emph{maximum soliton mass} and $1+\max_i Y^t_i$ is the \emph{maximum soliton speed}. A discrete conserved density is a sequence the \emph{sum} of which is time independent; an ultradiscrete conserved density is a sequence the \emph{maximum} of which is time independent.

We will now show some properties that will allow us to apply Lemma~\ref{lem:seq} to establish the connection between the maximum values of these two sequences. First we record a few basic formulae. The notation $\Delta$ for the forward difference operator $\Delta A_i=A_{i+1}-A_i$ is used:
\begin{align}
  \Delta X^t_i&=U^{t-1}_{i}-U^{t+1}_{i}\label{eq:basic 1}\\
  \Delta Y^t_i&=U^{t-1}_{i}-U^{t}_{i+1}\label{eq:basic 2}\\
  X^t_i&=U^{t}_{i}+U^{t+1}_{i}+Y^t_{i}+Y^{t+1}_{i}=U^{t-1}_{i-1}+U^{t}_{i-1}+Y^t_{i-1}+Y^{t+1}_{i-1}.\label{eq:basic 3}
\end{align}
The downdate rule \eqref{eq:downdate rule} may be written in terms of $Y^t_i$ as 
\begin{equation}
  \min(1-U^t_i-U_i^{t-1},Y^t_i)=0,
\end{equation}
and so it is clear that for all $i,t$ 
\begin{equation}\label{eq:basic}
  Y^t_i\ge0\quad\text{and}\quad U^t_i+U_i^{t-1}\le1 ,
\end{equation}
and
\begin{equation}
\label{eq:basic 4}
  Y^t_i>0\implies U^t_i+U^{t-1}_i=1,\quad U^t_i+U^{t-1}_i<1\implies Y^t_i=0.
\end{equation}

\begin{lem}\label{lem:V^t}
The following three statements are equivalent for any $t$:
\begin{enumerate}
  \item $V^t\le1$,
  \item for all $i$, $U^{t+1}_i=U^t_{i-1}$,
  \item for all $i$, $Y^t_i=0$.
\end{enumerate}
Moreover, these three conditions are $t$-invariant.
\end{lem}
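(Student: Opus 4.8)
The plan is to prove Lemma~\ref{lem:V^t} by establishing the cycle of implications $(i)\Rightarrow(ii)\Rightarrow(iii)\Rightarrow(i)$, exploiting the fact that some of the hard work has already been done. In particular, Corollary~\ref{cor:cons}(ii) already asserts the equivalence $V^t\le1\iff U^{t+1}_i=U^t_{i-1}$ for all $i$, so the equivalence of (i) and (ii) can be invoked directly rather than reproved. This reduces the task to linking (ii) and (iii) to one of these, and the natural bridge is the quantity $Y^t_i$.

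First I would prove $(iii)\Rightarrow(ii)$. Assuming $Y^t_i=0$ for all $i$, the difference formula \eqref{eq:basic 2}, namely $\Delta Y^t_i=U^{t-1}_i-U^t_{i+1}$, forces $U^{t-1}_i=U^t_{i+1}$ for all $i$; reindexing gives $U^t_i=U^{t-1}_{i-1}$, which is precisely statement (ii) written one time-step earlier. Since (ii) is the content of Corollary~\ref{cor:cons}(ii), this shows $(iii)\Rightarrow(ii)\iff(i)$. For the reverse direction $(ii)\Rightarrow(iii)$, I would argue that if $U^{t+1}_i=U^t_{i-1}$ for all $i$, then again by \eqref{eq:basic 2} (shifted appropriately) each forward difference $\Delta Y^t_i$ vanishes, so $Y^t_i$ is constant in $i$; combined with the fact that $Y^t_i=0$ for $|i|$ large (noted just after \eqref{eq:Y}), the constant must be zero, giving (iii). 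This closes the three-way equivalence.

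For the final clause, the $t$-invariance of the three conditions, I would rely on the $t$-invariance already contained in Corollary~\ref{cor:cons}(ii), which states that if $V^t\le1$ then $V^t$ is $t$-invariant. Since the three conditions are equivalent at each fixed $t$, once condition (i) holds at some $t$ it propagates to all $t$, and hence so do (ii) and (iii). Alternatively, one can see invariance directly: condition (ii) says the evolution acts as a pure shift $U^{t+1}_i=U^t_{i-1}$, and a pure shift trivially preserves the property of being a pure shift (and leaves $V$ unchanged), so the condition is self-perpetuating forward in time; the downdate rule gives the backward direction.

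The main obstacle, such as it is, lies less in any single implication than in the careful index bookkeeping relating statements phrased at time $t$ versus $t-1$ versus $t+1$: the definition \eqref{eq:Y} of $Y^t_i$ mixes $U^{t-1}$ and $U^t$, the update-as-shift statement (ii) relates $U^{t+1}$ and $U^t$, and \eqref{eq:basic 2} relates $U^{t-1}$ and $U^t$, so one must track the shifts consistently to be sure the equivalences are stated at the same $t$. I would also want to confirm that the decay of $Y^t_i$ at infinity together with constancy genuinely forces the value zero, which is immediate here but worth stating explicitly. Once the indexing is pinned down, each implication is a one-line consequence of \eqref{eq:basic 2} and the earlier corollary.
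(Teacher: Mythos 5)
Your proposal is correct and follows essentially the same route as the paper: both delegate the equivalence of (i) and (ii) and the $t$-invariance to Corollary~\ref{cor:cons}(ii), and both pass between (ii) and (iii) via the difference formula \eqref{eq:basic 2} together with the vanishing of $Y^t_i$ at infinity, using the $t$-invariance of (ii) to absorb the one-step time shift. Your explicit remark that constancy in $i$ plus decay forces $Y^t_i\equiv 0$ just spells out what the paper leaves implicit.
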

\begin{proof}
The equivalence and $t$-invariance of the first two statements were established in Corollary~\ref{cor:cons} (ii). 
Suppose that $Y_i^t=0$ for all $i$, at some $t$. Then relation \eqref{eq:basic 2} gives $U^{t-1}_{i}=U^{t}_{i+1}$,  which is equivalent to (ii) since this condition is $t$-invariant by Corollary~\ref{cor:cons}. Conversely,  if condition (ii) holds then for all $i$ and $t$ we have $U^{t-1}_{i}=U^{t}_{i+1}$ and $Y_i^t=0$, by \eqref{eq:basic 2}, for all $i$ at every instant $t$.
\end{proof}
This Lemma implies:
\begin{cor}\label{cor:newcor}\par\noindent
\begin{enumerate}
  \item If the conditions in Lemma~\ref{lem:V^t} are met then: for all $i$, $X^t_i\le V^t$, with equality for some $i$.
  \item If $V^t>1$ then there exists some $i$ for which $X_i^t>1$.
\end{enumerate}
\end{cor}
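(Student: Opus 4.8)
The plan is to derive both statements directly from the two representations of $X_i^t$ in \eqref{eq:basic 3}, the sign and support properties of $Y_i^t$ collected in \eqref{eq:basic} and \eqref{eq:basic 4}, and the characterisation of $V^t\le1$ in Lemma~\ref{lem:V^t}. No appeal to Lemma~\ref{lem:seq} appears to be necessary for this particular corollary, although that lemma is presumably being staged for the sharper identification of $\max_i X_i^t$ and $\max_i Y_i^t$ later on.

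For (i), I would first use Lemma~\ref{lem:V^t} to translate the hypothesis. Since the three conditions there are equivalent and $t$-invariant, assuming they hold gives $Y_i^t=0$ and $U_i^{t+1}=U_{i-1}^t$ for all $i$, and, applying the lemma also at time $t+1$, likewise $Y_i^{t+1}=0$ for all $i$. Substituting these into the first form of \eqref{eq:basic 3} collapses it to $X_i^t=U_i^t+U_i^{t+1}=U_{i-1}^t+U_i^t$. The right-hand side is exactly the (shifted) summand appearing in the definition \eqref{eq:V^t} of $V^t$, so $X_i^t\le V^t$ for every $i$, and since the maximum over $i$ of $U_{i-1}^t+U_i^t$ equals $V^t$ and is attained, equality holds for some $i$. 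This is the claim.

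For (ii), the idea is to exhibit a single index at which $X^t$ exceeds $1$. Since $V^t>1$, condition (iii) of Lemma~\ref{lem:V^t} must fail, and combined with $Y_i^t\ge0$ from \eqref{eq:basic} this furnishes an index $i_0$ with $Y_{i_0}^t>0$. By the first implication in \eqref{eq:basic 4} this forces $U_{i_0}^t+U_{i_0}^{t-1}=1$. I would then evaluate the \emph{second} form of \eqref{eq:basic 3} at $i=i_0+1$, obtaining $X_{i_0+1}^t=U_{i_0}^{t-1}+U_{i_0}^t+Y_{i_0}^t+Y_{i_0}^{t+1}=1+Y_{i_0}^t+Y_{i_0}^{t+1}$. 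Since $Y_{i_0}^{t+1}\ge0$ and $Y_{i_0}^t>0$, this is strictly greater than $1$, which proves the statement.

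The computations are short; the only genuine decision is the bookkeeping of which of the two representations in \eqref{eq:basic 3} to use and at which shifted index. The crux of (ii) is to align the representation $X_i^t=U_{i-1}^{t-1}+U_{i-1}^t+Y_{i-1}^t+Y_{i-1}^{t+1}$ with the equality $U_{i_0}^t+U_{i_0}^{t-1}=1$ coming from $Y_{i_0}^t>0$, which is precisely why evaluating at $i_0+1$ rather than at $i_0$ is the right move; the nonnegative $Y$-terms then supply the strict inequality. As a final sanity check I would confirm that both $X_i^t$ and $Y_i^t$ vanish for $|i|$ large, so that the suprema involved are genuinely attained, but this is immediate from the finite support of $U_i^t$ together with the conservation of total mass \eqref{eq:cons U}.
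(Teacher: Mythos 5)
Your proof is correct and follows essentially the same route as the paper: part (ii) is identical (locate an index with $Y_{i_0}^t>0$, invoke \eqref{eq:basic 4} and then \eqref{eq:basic 3} at the shifted index $i_0+1$), and part (i) differs only in that you reduce $X_i^t$ to $U_{i-1}^t+U_i^t$ via \eqref{eq:basic 3} with vanishing $Y$-terms rather than by telescoping the defining sum \eqref{eq:X} directly, which amounts to the same computation.
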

\begin{proof}
(i) Suppose that $U^{t+1}_i=U^t_{i-1}$ for all $i$. Then \eqref{eq:X} gives $X^t_i=U^t_i+U^t_{i-1}$ which is of course less than or equal to $\max_i(U^{t}_i+U^t_{i+1})$. Hence, $X^t_i\le V^t$ \eqref{eq:V^t}, and the equality must be attained for some $i$. Notice that these relations are $t$-invariant.

(ii) By Lemma \ref{lem:V^t}, if $V^t>1$ then $Y_i^t>0$ for some $i$. Then by \eqref{eq:basic 4}, $U^t_i+U^{t-1}_i=1$ and by \eqref{eq:basic 3}, $X^t_{i+1}=1+Y^t_i+Y^{t+1}_{i}$. Since $Y_i^t>0$ and $Y^{t+1}_i\ge0$, $X^t_{i+1}>1$. 
\end{proof}
{The second inequality} in \eqref{eq:basic} can be slightly sharpened. In the case $V^t=\max_i(U^{t}_i+U^t_{i+1})\le1$ we have $U^{t-1}_i=U^t_{i+1}$ and hence for any $U^t_i$, 
\begin{equation}\label{eq:basic gen}
  U^t_i+U_i^{t-1}\le\min(1,V^t). 
\end{equation}
While investigating the relationship between the maxima of $X^t_i$ and $Y^t_i$ we will consider two cases, $V^t\le1$ and $V^t>1$. In the case $V^t\le1$, it is clear from {Lemma~\ref{lem:V^t} and Corollary~\ref{cor:newcor} } that
\begin{equation}\label{eq:<=1 max}
\max_i Y^t_i=0\text{ and \/}\max_iX_i^t=V^t\text{ for all $t$}.
\end{equation}
We will see below that this is a special case of a result that holds in all cases: $\max_iX_i^t$ and $1+\max_i Y^t_i$ are independent of $t$ and give the mass and speed of the heaviest soliton in $U^t_i$.

In the next two propositions we consider the case in which $V^t>1$. Notice that this implies that $V^t>1$ for all $t$ (since any value $V^t\leq1$ is $t$-invariant). So, by {Corollary~\ref{cor:newcor}  and Lemma~\ref{lem:V^t}}, for any $t$ there exists $i$, such that $X^t_i>1$ and $Y^t_i>0$.

\begin{prop}\label{prop:1}
Let $V^t>1$. 
\begin{enumerate}
  \item For any $t$, there exists $m$ such that $Y^t_m>Y^t_{m-1}$ and for any such $m$, $X^t_{m+1}\ge 1+Y_m^t$ and so 
\begin{equation}
  \max_i X^t_i\ge 1+\max_i Y^t_i.
\end{equation}
\item For any $t$, there exists $m$ such that $X^t_{m+1}>X^t_{m}$ and for any such $m$, $1+Y_m^t\ge X^t_{m+1}$ and so 
\begin{equation}
   1+\max_i Y^t_i\ge\max_i X^t_i.
\end{equation}
\end{enumerate}
Hence 
\begin{equation}
  \max_i X^t_i=1+\max_i Y^t_i.
\end{equation}
\end{prop}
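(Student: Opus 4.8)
The plan is to prove inequalities (i) and (ii) separately and then combine them, using Lemma~\ref{lem:seq} as the engine in each case. In both parts the substance is a pointwise comparison, at the places where one of the sequences increases, between $1+Y^t_i$ and a unit shift of $X^t_i$; the inputs are the algebraic identities \eqref{eq:basic 1}--\eqref{eq:basic 3} together with the sign information \eqref{eq:basic}--\eqref{eq:basic 4} coming from the downdate rule, plus the fact (recorded after \eqref{eq:Y}) that $X^t_i$ and $Y^t_i$ vanish for $|i|$ large.

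For part (i) I would first produce the required increasing index. Since $V^t>1$, Lemma~\ref{lem:V^t} guarantees that $Y^t_i$ is not identically zero, while \eqref{eq:basic} gives $Y^t_i\ge0$ and the finite support forces $Y^t_i=0$ for $|i|$ large; taking the least index at which $Y^t$ first becomes positive yields an $m$ with $Y^t_m>Y^t_{m-1}$. At any such $m$ one has $Y^t_m>0$, so \eqref{eq:basic 4} gives $U^t_m+U^{t-1}_m=1$, and inserting this into the second expression for $X^t_{m+1}$ in \eqref{eq:basic 3} yields $X^t_{m+1}=1+Y^t_m+Y^{t+1}_m\ge 1+Y^t_m$ because $Y^{t+1}_m\ge0$. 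Applying Lemma~\ref{lem:seq} with $a_i=1+Y^t_i$ and $b_i=X^t_{i+1}$ (whose suprema are $1+\max_i Y^t_i$ and $\max_i X^t_i$) then gives $\max_iX^t_i\ge 1+\max_iY^t_i$.

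For part (ii) the existence step comes instead from Corollary~\ref{cor:newcor}(ii): when $V^t>1$ there is an index with $X^t_i>1$, and since $X^t_i\to0$ for $|i|$ large the sequence must increase somewhere, giving an $m$ with $X^t_{m+1}>X^t_m$. The key estimate is that at any such $m$ one has $1+Y^t_m\ge X^t_{m+1}$. By \eqref{eq:basic 1} the hypothesis $X^t_{m+1}>X^t_m$ is equivalent to $U^{t-1}_m>U^{t+1}_m$. The obstacle is the term $Y^{t+1}_m$ appearing in \eqref{eq:basic 3}, which I would show must vanish: if instead $Y^{t+1}_m>0$, then \eqref{eq:basic 4} applied with $t$ replaced by $t+1$ forces $U^{t+1}_m+U^t_m=1$, whence $U^{t-1}_m+U^t_m>U^{t+1}_m+U^t_m=1$, contradicting the bound $U^{t-1}_m+U^t_m\le1$ in \eqref{eq:basic}. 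Hence $Y^{t+1}_m=0$ and \eqref{eq:basic 3} collapses to $X^t_{m+1}=U^{t-1}_m+U^t_m+Y^t_m\le 1+Y^t_m$. Lemma~\ref{lem:seq} with $a_i=X^t_{i+1}$ and $b_i=1+Y^t_i$ then gives $1+\max_iY^t_i\ge\max_iX^t_i$.

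Combining the two inequalities produces $\max_iX^t_i=1+\max_iY^t_i$. I expect the genuinely delicate point to be the vanishing of $Y^{t+1}_m$ in part (ii): it is the only place where the downdate rule at the neighbouring time level $t+1$ must be invoked, and it is the clash between the monotonicity statement $U^{t-1}_m>U^{t+1}_m$ and the universal bound $U^{t-1}_m+U^t_m\le1$ that closes the argument.
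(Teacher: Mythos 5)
Your proposal is correct and follows essentially the same route as the paper: both parts apply Lemma~\ref{lem:seq} with the same choices of $a_n$ and $b_n$, and the key pointwise estimates come from \eqref{eq:basic 3} together with \eqref{eq:basic 4}, exactly as in the paper's proof (your contrapositive derivation of $Y^{t+1}_m=0$ in part (ii) is just a rephrasing of the paper's direct deduction from $U^{t+1}_m+U^t_m<1$). The only difference is that you spell out the existence of the increasing index $m$, which the paper dismisses as clear.
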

\begin{proof}
In both cases, the existence of $m$ is clear (since $V^t>1$ and $X_i^t, Y_i^t=0$ for $|i|$ sufficiently large) and the conclusions are obtained using the principle established in Lemma~\ref{lem:seq}. In (i) we take $a_n=1+Y^t_n$ and $b_n=X^t_{n+1}$, and in (ii) $a_n=X^t_{n+1}$ and $b_n=1+Y^t_n$. We now establish the required inequalities.

\begin{enumerate}
  \item Let $Y^t_m>Y^t_{m-1}$ (i.e. $a_m>a_{m-1}$). From \eqref{eq:basic}, all $Y^t_i\ge0$ and it follows that $Y^t_m>0$. Using \eqref{eq:basic 3} and \eqref{eq:basic 4}
\begin{equation}
X^t_{m+1}=U^{t-1}_m+U^{t}_m+Y^t_m+Y^{t+1}_m=1+Y^t_m+Y^{t+1}_m\ge1+Y^t_m,
\end{equation}
(i.e. $b_m\ge a_m$) and the result follows.
  \item Similarly, let $X^t_{m+1}>X^t_{m}$. Then from \eqref{eq:basic 1}, $U^{t+1}_{m}<U^{t-1}_{m}$ and since from \eqref{eq:basic} $U^t_{m}+U^{t-1}_{m}\le1$, we obtain $U^{t+1}_{m}+U^{t}_{m}<1$ and so from \eqref{eq:basic 4} $Y^{t+1}_{m}=0$. Therefore, using \eqref{eq:basic 3},
\begin{equation*}
X^t_{m+1}=U^{t-1}_m+U^{t}_m+Y^t_m + Y^{t+1}_m\le 1+Y^t_m.\qedhere
\end{equation*}
\end{enumerate}
\end{proof}

\begin{prop}\label{prop:2}
Let $V^t>1$. 
\begin{enumerate}
  \item For any $t$, there exists $m$ such that $Y^{t+1}_m>Y^{t+1}_{m+1}$ and for any such $m$, $X^t_{m}\ge 1+Y_m^{t+1}$ and so 
\begin{equation}
  \max_i X^t_i\ge 1+\max_i Y^{t+1}_i.
\end{equation}
\item For any $t$, there exists $m$ such that $X^t_{m}>X^t_{m+1}$ and for any such $m$, $1+Y_m^{t+1}\ge X^t_{m}$ and so 
\begin{equation}
   1+\max_i Y^{t+1}_i\ge\max_i X^t_i.
\end{equation}
\end{enumerate}
Hence 
\begin{equation}
  \max_i X^t_i=1+\max_i Y^{t+1}_i.
\end{equation}
\end{prop}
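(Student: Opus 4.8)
The plan is to mirror the proof of Proposition~\ref{prop:1} almost verbatim, with two changes: throughout I replace $Y^t$ by $Y^{t+1}$, and I appeal to the second (reversed-order) formulation of Lemma~\ref{lem:seq}, which tests the sequences at their points of \emph{decrease}, $a_m>a_{m+1}$. In both parts the existence of a suitable $m$ is immediate in the same way as in Proposition~\ref{prop:1}: since $V^t>1$ holds for all $t$, Lemma~\ref{lem:V^t} (applied at $t+1$) shows that $Y^{t+1}_i$ is not identically zero, while both $X^t_i$ and $Y^{t+1}_i$ vanish for $|i|$ large, so each must strictly decrease somewhere. The substance is the pointwise comparison $b_m\ge a_m$ at every such $m$.

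For part (i) I take $a_n=1+Y^{t+1}_n$ and $b_n=X^t_n$. If $Y^{t+1}_m>Y^{t+1}_{m+1}$ then $Y^{t+1}_m>0$ (all $Y^{t+1}_i\ge0$ by \eqref{eq:basic}), so \eqref{eq:basic 4} read at time $t+1$ gives $U^{t+1}_m+U^t_m=1$, and the first form of \eqref{eq:basic 3} yields
\begin{equation}
X^t_m=U^t_m+U^{t+1}_m+Y^t_m+Y^{t+1}_m=1+Y^t_m+Y^{t+1}_m\ge 1+Y^{t+1}_m,
\end{equation}
which is exactly $b_m\ge a_m$; Lemma~\ref{lem:seq} then delivers $\max_iX^t_i\ge1+\max_iY^{t+1}_i$.

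For part (ii) I swap the roles, taking $a_n=X^t_n$ and $b_n=1+Y^{t+1}_n$, and I expect this to be the only delicate step. Assuming $X^t_m>X^t_{m+1}$, relation \eqref{eq:basic 1} gives $U^{t-1}_m<U^{t+1}_m$. The key move is to combine this with \eqref{eq:basic} read at time $t+1$, namely $U^t_m+U^{t+1}_m\le1$: chaining these as $U^{t-1}_m<U^{t+1}_m\le1-U^t_m$ produces $U^{t-1}_m+U^t_m<1$, whence \eqref{eq:basic 4} forces $Y^t_m=0$. Feeding $Y^t_m=0$ together with $U^t_m+U^{t+1}_m\le1$ back into \eqref{eq:basic 3} gives
\begin{equation}
X^t_m=U^t_m+U^{t+1}_m+Y^{t+1}_m\le1+Y^{t+1}_m,
\end{equation}
i.e.\ $b_m\ge a_m$, so Lemma~\ref{lem:seq} yields $1+\max_iY^{t+1}_i\ge\max_iX^t_i$. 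Together the two inequalities give the asserted equality $\max_iX^t_i=1+\max_iY^{t+1}_i$. The one point needing genuine care is precisely this deduction of $Y^t_m=0$: one must invoke the bound $U^t_m+U^{t+1}_m\le1$ at the \emph{shifted} time (rather than $U^t_m+U^{t-1}_m\le1$) and chain it correctly with $U^{t-1}_m<U^{t+1}_m$, which is the analogue of, but not identical to, the step deducing $Y^{t+1}_m=0$ in Proposition~\ref{prop:1}(ii).
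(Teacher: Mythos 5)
Your proposal is correct and follows the paper's own proof essentially verbatim: the same choice of sequences $a_n$, $b_n$ fed into the reversed form of Lemma~\ref{lem:seq}, the same use of \eqref{eq:basic 4} at time $t+1$ in part (i), and the same chain $U^{t-1}_m<U^{t+1}_m\le1-U^t_m\implies Y^t_m=0$ via \eqref{eq:basic 1}, \eqref{eq:basic} and \eqref{eq:basic 3} in part (ii). No gaps.
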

\begin{proof}
Again the existence of $m$ is clear and as in the proof of Proposition~\ref{prop:1}, we apply Lemma~\ref{lem:seq}.  In proving (i) we take $a_n=1+Y^{t+1}_n$ and $b_n=X^t_{n}$, and in (ii), $a_n=X^t_{n}$ and $b_n=1+Y^{t+1}_n$.
\begin{enumerate}
  \item Let $Y^{t+1}_m>Y^{t+1}_{m+1}$. Hence $Y^{t+1}_m>0$ and so using \eqref{eq:basic 3} and \eqref{eq:basic 4}
\begin{equation}
X^t_{m}=U^{t+1}_m+U^{t}_m + Y^t_m+Y^{t+1}_m =1+Y^t_m+Y^{t+1}_m\ge1+Y^{t+1}_m.
\end{equation}

\item Let $X^t_{m}>X^t_{m+1}$. Then from \eqref{eq:basic 1}, $U^{t-1}_{m}<U^{t+1}_{m}$ and since $U^t_{m}+U^{t+1}_{m}\le1$, we obtain $U^{t-1}_{m}+U^{t}_{m}<1$ and so from \eqref{eq:basic 4} $Y^{t}_{m}=0$. Therefore, using \eqref{eq:basic 3},
\begin{equation*}
X^t_{m}=U^{t+1}_m+U^{t}_m+Y^t_m+Y^{t+1}_m\le 1+Y^{t+1}_m.\qedhere
\end{equation*}

\end{enumerate}
\end{proof}
The results of this section so far are summarised in a theorem.
\begin{thm}\label{thm:cons X Y}
Let $U^t_i$ satisfy the udKdV equation \eqref{eq:update rule}. Then $\max_i X^t_i$ and $\max_i Y^t_i$, with $X^t_i$ and $Y^t_i$ as defined in \eqref{eq:X} and \eqref{eq:Y}, are conserved under time evolution. 

Moreover, if $V^t>0$,
\begin{equation}\label{eq:max}
  \omax=\max_i X^t_i\quad\text{and}\quad1+\max_iY^t_i=c_{\text{max}}=\max(1,\omax),
\end{equation}
are the mass and speed of the solitons of maximal mass contained in the state $U_i^t$.
\end{thm}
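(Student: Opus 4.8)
The plan is to separate the two assertions---conservation, and identification with soliton data---and to exploit throughout that the dichotomy $V^t\le1$ versus $V^t>1$ is $t$-invariant (Corollary~\ref{cor:cons}, Lemma~\ref{lem:V^t}), so each regime can be analysed once and for all. For the conservation statement I would treat the two regimes in turn. If $V^t\le1$, equation \eqref{eq:<=1 max} gives $\max_iY^t_i=0$ and $\max_iX^t_i=V^t$ directly, and $V^t$ is itself $t$-invariant by Corollary~\ref{cor:cons}(ii), so both maxima are constant. If $V^t>1$, I would combine the two propositions across adjacent times: Proposition~\ref{prop:2} at time $t$ gives $\max_iX^t_i=1+\max_iY^{t+1}_i$, while Proposition~\ref{prop:1} at time $t+1$ gives $\max_iX^{t+1}_i=1+\max_iY^{t+1}_i$; equating yields $\max_iX^t_i=\max_iX^{t+1}_i$, and conservation of $\max_iY^t_i$ then follows from the relation $\max_iY^t_i=\max_iX^t_i-1$ of Proposition~\ref{prop:1}.

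The speed relation $c_{\text{max}}=1+\max_iY^t_i=\max(1,\omax)$, with $\omax:=\max_iX^t_i$, is then a purely formal consequence of the same results. When $V^t\le1$ one has $\max_iY^t_i=0$ and $\omax=V^t\le1$, so $1+\max_iY^t_i=1=\max(1,\omax)$; when $V^t>1$, Corollary~\ref{cor:newcor}(ii) forces $\omax>1$ and Proposition~\ref{prop:1} gives $1+\max_iY^t_i=\omax=\max(1,\omax)$. No soliton interpretation is needed up to this point.

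It remains to identify $\omax$ with the mass $\omega_n$ of the heaviest soliton. Since both maxima are now known to be conserved, I would evaluate them at large $|t|$, where the heaviest soliton---travelling at speed $\max(1,\omega_n)$, strictly greater than the speed of every lighter soliton and, when $\omega_n>1$, of the speed-$1$ background---has separated and sits in isolation. On a window containing only this soliton the state is the single-soliton profile \eqref{eq:exp sol}, for which every $U\ge0$; hence from \eqref{eq:X} one has $X^t_i=\sum_{j\ge i}U^{t+1}_j-\sum_{j\ge i}U^{t-1}_j\le\omega_n$, while for any integer $i$ lying between the time-$(t-1)$ and time-$(t+1)$ copies the subtracted sum vanishes and the first sum equals $\omega_n$, giving $\max_iX^t_i=\omega_n$. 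Such an $i$ exists because the two copies are separated by $\omega_n-1>0$ lattice units. When $\omega_n\le1$ the whole state falls in the regime $V^t\le1$, where \eqref{eq:<=1 max} gives $\max_iX^t_i=V^t$ and $V^t=\omega_n$ can be read off the profile \eqref{eq:exp sol}. In either case conservation carries the value to all $t$.

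The main obstacle is precisely this last step, since it rests on the asymptotic description of the general finite-support solution as a collection of solitons of masses $\omega_1\le\cdots\le\omega_n$ over a background: both the separation of the heaviest soliton when $\omega_n>1$ and the fact that $V^t\le1$ with $V^t=\omega_n$ when $\omega_n\le1$ are consequences of that structure---which is exactly what the later sections are designed to establish. A self-contained argument must therefore either import this structure (cf.\ \cite{MR2985392,MR3366309}) or prove the single-soliton identities in isolation and let the general case follow once the solution structure is available. The remaining ingredients are routine: the floor/ceiling bookkeeping in \eqref{eq:exp sol}, and the check that an admissible lattice point lies between the two copies, which holds since $\varphi^{t+1}-\varphi^t=\omega_n$.
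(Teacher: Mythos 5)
Your proposal is correct and follows essentially the same route as the paper, which presents this theorem as a summary of the preceding results: conservation comes from combining Propositions~\ref{prop:1} and~\ref{prop:2} (together with \eqref{eq:<=1 max} for the case $V^t\le1$), and the identification of $\omax$ with the maximal soliton mass is the same asymptotic-separation argument via \eqref{eq:exp sol} that the paper gives in Remark~\ref{rem:omega max}. The caveat you raise about the last step is also the paper's own: it explicitly labels that argument heuristic and defers the rigorous justification to the squared-eigenfunction analysis of the later sections.
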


\begin{rem}\label{rem:omega max}
We define $\kmax:=\min(1,\omax)$. Then, in the case $0<V^t\le1$, from \eqref{eq:<=1 max}, $\omax=\kmax=V^t\leq1$ and $c_\text{max}=\omax/\kmax=1$. Such a state $U_i^t$ consists of solitons that move in tandem with the background, at speed 1. The maximal mass $\omax$ of the solitons in such a state can therefore not be ascertained by mere (visual) inspection of $U_i^t$ if the solitons are engulfed in the background. 

For $V^t>1$ however \uplp from Corollary \ref{cor:newcor}\uprp\, we have $\omax>1$ and $\kmax=1$ and $\omax$ (being independent of $t$) may be calculated asymptotically as $t\to\pm\infty$. A more rigorous analysis of this case, in terms of the ultradiscrete squared eigenfunction for such a state will be presented later on, but there is also the following  simple heuristic argument. As discussed in Section \ref{sec:sol prop}, arbitrary initial conditions $U^0_i$ evolve into a finite train of solitons each characterised by a mass $\omega>0$ and phase $\phi$. For $V^t>1$, the solitons of mass $\omega>1$ move at speed $c=\omega$ and hence become well separated as $t\to\pm\infty$ whereas solitons of mass $\omega\le1$ move at speed 1 along with the background. As we shall see, the ultradiscrete spectrum  is not simple and an arbitrary number of solitons may have the same mass $\omega$ and speed. However such sets of solitons have a minimum separation, also equal to $\omega$. For large enough $t$ all solitons are therefore well separated and it is straightforward to use \eqref{eq:exp sol} to show that any local maximum of $X^t_i$ is equal to the mass of the soliton which includes site $i$, and is sub-maximal otherwise. The interpretation of $\omax$ as maximal soliton mass follows.
\end{rem}

\begin{rem}\label{newposremark4}
Using the quantity $\kmax=\min(1,\omax)$ inequality \eqref{eq:basic gen} can be reformulated, for general $U_i^t$, as
\begin{equation}\label{eq:basic kmax}
U^t_i+U_i^{t-1}\le\kmax,
\end{equation} 
for any $i,t$.
\end{rem}

\begin{rem}\label{newposremark5}
If $U^t_i$ is nontrivial but $V^t=0$ at some instant $t$, it is 0 for all $t$ and we have just background propagating unchanged at speed 1. There are no solitons: since $V^t=0$, from Lemma \ref{lem:V^t} and Corollary \ref{cor:newcor} we have $\omax=0$. Moreover, $\kmax=\min(1,\omax)=0$. Note that the converse is also true. If $\omax=0$, then from Theorem \ref{thm:cons X Y} we have $Y_i^t=0$ for all $i$ and thus, by Corollary \ref{cor:newcor}, $U_{i+1}^{t+1} = U_i^t$, for all $i$ and $t$. From remark \ref{newposremark4} we then have $V^t=0$ since $U_i^t+U_{i+1}^t=U_i^t+U_i^{t-1}\le 0$ for all $i$.
\end{rem}

\begin{prop}\label{prop:X ineq}
For any $U^t_i$ we have
\begin{equation}\label{eq: X ineq}
  \Delta X_i^t>0\implies U^t_i+U^{t+1}_i<\kmax,\quad
  \Delta X_i^t<0\implies U^t_i+U^{t-1}_i<\kmax.
\end{equation}
\end{prop}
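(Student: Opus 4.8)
The plan is to reduce both implications to the sharpened bound \eqref{eq:basic kmax}, namely $U^t_i+U^{t-1}_i\le\kmax$, which was established for every site $i$ and every time $t$. The only additional ingredient I need is the explicit expression for the forward difference of $X$, namely $\Delta X^t_i=U^{t-1}_i-U^{t+1}_i$ from \eqref{eq:basic 1}. This identity converts the sign of $\Delta X^t_i$ directly into an ordering of $U^{t-1}_i$ and $U^{t+1}_i$, after which a one-line comparison with \eqref{eq:basic kmax} finishes each case.

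For the first implication I would start from the hypothesis $\Delta X^t_i>0$, which by \eqref{eq:basic 1} is exactly $U^{t+1}_i<U^{t-1}_i$. Adding $U^t_i$ to both sides of this strict inequality and then invoking \eqref{eq:basic kmax} at time $t$ gives $U^t_i+U^{t+1}_i<U^t_i+U^{t-1}_i\le\kmax$, which is precisely the required conclusion.

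For the second implication the argument is the mirror image, except that one must apply \eqref{eq:basic kmax} at the shifted time. From $\Delta X^t_i<0$ one reads off $U^{t-1}_i<U^{t+1}_i$; adding $U^t_i$ and using \eqref{eq:basic kmax} with $t$ replaced by $t+1$ (that is, $U^{t+1}_i+U^t_i\le\kmax$) yields $U^t_i+U^{t-1}_i<U^t_i+U^{t+1}_i\le\kmax$.

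I do not expect any serious obstacle here. The strictness of each conclusion is supplied for free by the strict sign hypothesis on $\Delta X^t_i$, while the bound by $\kmax$ comes entirely from \eqref{eq:basic kmax}. The only point that requires a little care is to remember that the second implication needs the $\kmax$-bound evaluated at time $t+1$ rather than at $t$; this is legitimate precisely because \eqref{eq:basic kmax} holds for all $i$ and all $t$.
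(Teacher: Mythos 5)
Your proof is correct and uses exactly the same ingredients as the paper's own argument: the identity $\Delta X^t_i=U^{t-1}_i-U^{t+1}_i$ from \eqref{eq:basic 1} combined with the bound \eqref{eq:basic kmax} applied at time $t$ for the first implication and at time $t+1$ for the second. The paper merely packages the same algebra as a pair of inequalities on $\Delta X^t_i$ itself, so there is nothing to add.
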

\begin{proof}
From \eqref{eq:basic 1}, $\Delta X^t_m=U^{t-1}_i-U^{t+1}_i$. Then, from \eqref{eq:basic kmax},
\begin{equation}
	\Delta X^t_m\le\kmax-U^{t}_i-U^{t+1}_i\text{\ \ or \ \ }		\Delta X^t_m\ge U^{t}_i+U^{t-1}_i-\kmax.	
\end{equation}
The results follow immediately.
\end{proof}
\begin{exas}\label{exa:omax}
Consider a state $U^t_i$. The first value shown has index $i=1$ and the state is 0 outside of the interval shown. We use an over bar to indicate a negative number: for $x>0$, $\overline{x}$ denotes $-x)$.
\[
\setlength{\arraycolsep}{0pt}
\renewcommand{\arraystretch}{1.5}
  \begin{array}{rCCCCCCCCCCCCCCCCCCCCCCCCCCCCCCCCCCCCCC}
U^t_i\ :&&0&0&0&0&1&1&0&1&1&0&0&0&0&1&1&1&0&0&0&0\\
U^{t-1}_i\ :&&0&1&1&1&0&0&1&0&0&0&1&1&1&0&0&0&0&0&0&0\\
U^{t+1}_i\ :&&0&0&0&0&0&0&1&0&0&1&1&1&0&0&0&0&1&1&1&0\\
U^{t+1}_i-U^{t-1}_i\ :&&0&\overline1&\overline1&\overline1&0&0&0&0&0&1&0&0&\overline1&0&0&0&1&1&1&0\\
\sum_{j\ge i}(U^{t+1}_j-U^{t-1}_j)=X^t_i\ :&&0&0&1&2&3&3&3&3&3&3&2&2&2&3&3&3&3&2&1&0
\end{array}
\]
Hence $\omax=3$, which is attained throughout two different regions (in $i$). This indicates the presence of at least two solitons of mass 3. In principle, there may be several maximal mass solitons within each maximal region, as another example illustrates:
\[
\setlength{\arraycolsep}{0pt}
\renewcommand{\arraystretch}{1.5}
  \begin{array}{rCCCCCCCCCCCCCCCCCCCCCCCCCCCCCCCCCCCCCC}
U^t_i\ :&&0&0&0&0&1&1&0&0&1&1&0&0&0&0&0\\
U^{t-1}_i\ :&&0&0&1&1&0&0&1&1&0&0&0&0&0&0&0\\
U^{t+1}_i\ :&&0&0&0&0&0&0&1&1&0&0&1&1&0&0&0\\
U^{t+1}_i-U^{t-1}_i\ :&&0&0&\overline1&\overline1&0&0&0&0&0&0&1&1&0&0&0\\
X^t_i\ :&&0&0&0&1&2&2&2&2&2&2&2&1&0&0&0
\end{array}
\]
giving $\omax=2$, attained throughout a single region although $U_i^t$ contains two mass 2 solitons.
Finally, we give an example with non-binary values: 
\[
\setlength{\arraycolsep}{0pt}
\renewcommand{\arraystretch}{1.5}
  \begin{array}{rCCCCCCCCCCCCCCCCCCCCCCCCCCCCCCCCCCCCCC}
U^t_i\ :&&0&0&\overline{\frac{1}{2}}&\frac{1}{3}&\frac{5}{3}&0&0&1&0&0&0\\
U^{t-1}_i\ :&&0&0&\frac{3}{2}&\frac{2}{3}&\overline{\frac{2}{3}}&0&1&0&0&0&0\\
U^{t+1}_i\ :&&0&0&0&\overline{\frac{1}{2}}&\overline{\frac{2}{3}}&1&1&0&1&\frac{2}{3}&0\\
U^{t+1}_i-U^{t-1}_i\ :&&0&0&\overline{\frac{3}{2}}&\overline{\frac{7}{6}}&0&1&0&0&1&\frac{2}{3}&0\\
X^t_i\ :&&0&0&0&\frac{3}{2}&\frac{8}{3}&\frac{8}{3}&\frac{5}{3}&\frac{5}{3}&\frac{5}{3}&\frac{2}{3}&0
\end{array}
\]
giving $\omax=\frac83$, indicating the presence of a mass $\frac83$ soliton, something which is not immediately visible on the state $U_i^t$ (or on its up- or downdates).
\end{exas}

The dynamics in $t$ of the regions where $X_i^t$ takes its maximal values seems interesting but is probably too complicated to allow for a complete description. We  can however show the following Proposition and Corollary.

\begin{prop}\label{lem:m-1} For any $U^t_i$ we have that
\begin{enumerate}
  \item If $\Delta X_i^t<0$ then $X_i^{t-1}<\omax$, and if $\,\Delta X_i^{t-1}>0$ then $X_{i+1}^t<\omax$.
  \item If $m$ is the left-most index of a region in $i$ where $X_i^t$ is maximal, i.e. if $X_m^t=\omax$ and $\Delta X_{m-1}^t>0$, then $X_{m-1}^{t-1}=\omax$.
  \item If $m$ is the right-most index of a region in $i$ where $X_i^t$ is maximal, i.e. if $X_m^t=\omax$ and $\Delta X_{m}^t<0$, then $X_{m+1}^{t+1}=\omax$.
\end{enumerate}  
\end{prop}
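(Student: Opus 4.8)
The plan is to derive all three parts from the single identity \eqref{eq:basic 3}, used in both of its forms, together with the observation that a definite sign of $\Delta X$ at a site forces a neighbouring value of $Y$ to vanish. Throughout I would use three standing facts: that $\omax=\max_i X_i^t$ is $t$-independent (Theorem~\ref{thm:cons X Y}), so in each case an inequality $X\ge\omax$ is actually an equality; that $Y_i^t\ge0$ by \eqref{eq:basic}; and, for the strict inequalities in (i), that $\max_i Y_i^t=\omax-\kmax$, which follows from \eqref{eq:max} (for $V^t>1$) and \eqref{eq:<=1 max} (for $V^t\le1$), with $\kmax=\min(1,\omax)\le1$.

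The mechanism behind every step is the following pair of facts. By Proposition~\ref{prop:X ineq}, $\Delta X_i^t>0$ gives $U_i^t+U_i^{t+1}<\kmax$ while $\Delta X_i^t<0$ gives $U_i^t+U_i^{t-1}<\kmax$; in either case the relevant sum is strictly below $\kmax\le1$, so \eqref{eq:basic 4} forces the matching $Y$ to vanish, namely $Y_i^{t+1}=0$ in the first case and $Y_i^t=0$ in the second. The second ingredient is that the two forms of \eqref{eq:basic 3} telescope: writing $X_m^t$ by the second (index-shifted) form and $X_{m-1}^{t-1}$ by the first form, the $U$-terms and two of the $Y$-terms cancel, and similarly after advancing one step in $t$, leaving
\begin{gather}
X_{m-1}^{t-1}=X_m^t-Y_{m-1}^{t+1}+Y_{m-1}^{t-1},\\
X_{m+1}^{t+1}=X_m^t-Y_m^t+Y_m^{t+2}.
\end{gather}

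For part (i), with $i$ fixed and $\Delta X_i^t<0$, I would deduce $Y_i^t=0$ as above and substitute into the first form of \eqref{eq:basic 3} at $(t-1,i)$, obtaining $X_i^{t-1}=(U_i^{t-1}+U_i^t)+Y_i^{t-1}<\kmax+(\omax-\kmax)=\omax$; the second implication ($\Delta X_i^{t-1}>0\Rightarrow X_{i+1}^t<\omax$) is the mirror image, substituting $Y_i^t=0$ into the second form of \eqref{eq:basic 3} at $(t,i+1)$. For part (ii), the hypothesis $\Delta X_{m-1}^t>0$ yields $Y_{m-1}^{t+1}=0$, so the first telescoping identity gives $X_{m-1}^{t-1}=\omax+Y_{m-1}^{t-1}\ge\omax$, and since $\omax=\max_i X_i^{t-1}$ this forces $X_{m-1}^{t-1}=\omax$. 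Part (iii) is identical under the $(t,m)\mapsto(t+1,m+1)$ reflection: $\Delta X_m^t<0$ yields $Y_m^t=0$, the second telescoping identity gives $X_{m+1}^{t+1}=\omax+Y_m^{t+2}\ge\omax$, and $t$-invariance of $\omax$ upgrades this to equality.

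The only genuine work is the bookkeeping, and that is where I expect the main obstacle to lie: one must identify exactly which of the two forms of \eqref{eq:basic 3} to apply at which shifted pair $(t',i')$ so that the telescoping leaves precisely one surviving $Y$, and then check that the sign hypothesis annihilates that particular $Y$ rather than some other. Once the two displayed identities are in hand, the remainder is immediate from $Y\ge0$ and the conservation of $\omax$. A minor point still to verify is the uniform bound $\max_i Y_i^t=\omax-\kmax$ invoked for the strict inequalities in (i), which should be checked to hold both when $V^t>1$ and when $V^t\le1$ (where it reads $\max_i Y_i^t=0=\omax-\kmax$).
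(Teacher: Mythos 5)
Your proposal is correct and follows essentially the same route as the paper's proof: Proposition~\ref{prop:X ineq} combined with \eqref{eq:basic 4} to annihilate the relevant $Y$, the two forms of \eqref{eq:basic 3} to produce exactly the telescoping identities $X_m^t-X_{m-1}^{t-1}=Y_{m-1}^{t+1}-Y_{m-1}^{t-1}$ and $X_{m+1}^{t+1}-X_m^t=Y_m^{t+2}-Y_m^t$, and the conservation of $\max_i X_i^t$ and $\max_i Y_i^t$ (with $\max_i Y_i^t=\omax-\kmax$ in all cases) to close the estimates. The only cosmetic difference is that in (ii) and (iii) you invoke $Y\ge0$ and maximality of $\omax$ at the shifted time directly, where the paper first deduces $Y_{m-1}^{t+1}\ge Y_{m-1}^{t-1}$ and then sets both to zero; the two arguments are interchangeable.
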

\begin{proof}
\begin{enumerate}
  \item Suppose first that $\Delta X_i^t<0$. From Proposition \ref{prop:X ineq} we then have that 
  \begin{equation}
   U_i^t + U_i^{t-1} < \kappa_{\max}=\min(1,\omax)\leq 1,
  \end{equation}
  and thus from \eqref{eq:basic 4} that $Y_i^t=0$. Expressing $X_i^{t-1}$ by means of \eqref{eq:basic 3}, we then find
  \begin{equation}
  X_i^{t-1} = U_i^t + U_i^{t-1} + Y_i^{t-1} < \min(1,\omax) + \big(\max(1,\omax) -1\big) = \omax,
  \end{equation}
  because of Theorem \ref{thm:cons X Y}. 
  Next, suppose that $\Delta X_i^{t-1}>0$. As above, we find that $U_i^t + U_i^{t-1}<\min(1,\omax)$ and $Y_i^t=0$, and by expressing $X_{i+1}^{t}$ by means of \eqref{eq:basic 3} that:
  \begin{equation}
  X_{i+1}^{t} = U_i^t + U_i^{t-1} + Y_i^{t+1} < \min(1,\omax) + \big(\max(1,\omax) -1\big) = \omax.
  \end{equation}
  \item Using \eqref{eq:basic 3} it is easily checked that the difference of $X_m^t$ and  $X_{m-1}^{t-1}$ can be expressed as
\begin{equation}
X_m^t -  X_{m-1}^{t-1} = Y_{m-1}^{t+1} - Y_{m-1}^{t-1}.
\end{equation}
Moreover, since $X_m^t=\omax=\max_i X_i^{t'}$ for all times $t'$ (Theorem \ref{thm:cons X Y}), we have that this difference is always non-negative and as a consequence that
\begin{equation}\label{Ydiff}
Y_{m-1}^{t+1} \geq Y_{m-1}^{t-1}.
\end{equation}
As $\Delta X_{m-1}^t>0$ we have from Proposition \ref{prop:X ineq} that $U_{m-1}^t+ U_{m-1}^{t+1}$ is less than $\kappa_{\max}$, i.e. less than 1, and \eqref{eq:basic 4} then tells us that $Y_{m-1}^{t+1}=0$.
Because of \eqref{Ydiff} we then find that $Y_{m-1}^{t-1}=0$ as well, since we know from \eqref{eq:basic} that it cannot be negative. We therefore have $Y_{m-1}^{t+1} = Y_{m-1}^{t-1} =0$ and $X_{m-1}^{t-1} = X_m^t = \omax$.
  \item Similarly, we have $X_{m+1}^{t+1} -  X_{m}^{t} = Y_{m}^{t+2} - Y_{m}^{t}$ and from $X_{m+1}^{t+1} \leq  X_{m}^{t}$ we obtain $0\leq Y_{m}^{t+2} \leq Y_{m}^{t}$. Then, as $\Delta X_m^t<0$ we have from Proposition \ref{prop:X ineq} that  $U_{m}^t+ U_{m}^{t-1} <1$ and thus $Y_m^t=0$ by \eqref{eq:basic 4}. We therefore also have $Y_{m}^{t+2}=0$ which shows that $X_{m+1}^{t+1} =  X_{m}^{t} =\omax$.
\end{enumerate}
\end{proof}

The above results show that although the lengths of the regions where $X_i^t$ is maximal might change over time, and although a region in $X_i^{t-1}$ might have some overlap with a region in $X_i^t$, this overlap must stop before the right-most index in that region for $X_i^t$ (Proposition \ref{lem:m-1} (i)). The same clause in that proposition also tells us that a region where $X_i^t$ is maximal cannot extend into such a region in $X_i^{t-1}$. Furthermore, clauses (ii) and (iii) in Proposition \ref{lem:m-1} tell us that such regions, at different time steps, cannot be too far apart either: before each region (counted in $i$) where $X_i^t$ is maximal there is a similar region in $X_i^{t-1}$, not further removed than one index (Proposition \ref{lem:m-1} (ii)) and to the right of such a region in $X_i^t$ there is a similar region in $X_i^{t+1}$ not further away than one index (Proposition \ref{lem:m-1} (iii)).
\begin{cor}
The number of distinct regions in $i$ where $X_i^t$ takes its maximal value is the same for all $t$.
\end{cor}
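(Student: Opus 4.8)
The plan is to prove the sharper statement that $|A_{t-1}|=|A_t|$, where $A_t$ denotes the (finite) set of \emph{maximal regions} of $X_i^t$, i.e.\ the maximal blocks of consecutive indices $i$ on which $X_i^t=\omax$; iterating in $t$ then gives the Corollary. First I would dispose of the degenerate case $V^t\le1$. By Lemma~\ref{lem:V^t} we then have $U_i^{t+1}=U_{i-1}^t$ for all $i,t$, and a short telescoping from the definition \eqref{eq:X} shows $X_i^t=U_{i-1}^t+U_i^t$ and hence $X_i^{t+1}=X_{i-1}^t$. Thus $X^{t+1}$ is simply a translate of $X^t$, so its region structure, and in particular $|A_t|$, is manifestly $t$-independent (this covers both $V^t=0$, where $\omax=0$ by Remark~\ref{newposremark5}, and $0<V^t\le1$). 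It remains to treat the case $V^t>1$, where $\omax>1>0$ by Corollary~\ref{cor:newcor}; since $X_i^t=0$ for $|i|\gg1$ while $\omax>0$, each $A_t$ is finite.

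For $V^t>1$ the whole strategy is to produce injections in \emph{both} directions between $A_{t-1}$ and $A_t$: for finite sets, injections both ways force equal cardinality. Both injections are read directly off Proposition~\ref{lem:m-1}. I would define $\sigma\colon A_t\to A_{t-1}$ by sending a region with left-most index $a$ (so $X_a^t=\omax$ and $\Delta X_{a-1}^t>0$) to the region of $A_{t-1}$ containing $a-1$; this is well defined because Proposition~\ref{lem:m-1}(ii) guarantees $X_{a-1}^{t-1}=\omax$. Symmetrically, I would define $\tau\colon A_{t-1}\to A_t$ by sending a region with right-most index $b$ (so $\Delta X_b^{t-1}<0$) to the region of $A_t$ containing $b+1$, which is legitimate by Proposition~\ref{lem:m-1}(iii).

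The substance is injectivity, and here the two clauses of Proposition~\ref{lem:m-1}(i) do the work. For $\sigma$, take distinct regions of $A_t$ with left-most indices $a_1<a_2$ and let $b_1$ be the right-most index of the first, so $a_1\le b_1<a_2$ and $\Delta X_{b_1}^t<0$. Proposition~\ref{lem:m-1}(i) gives $X_{b_1}^{t-1}<\omax$, whereas $\sigma$ places $a_1-1$ and $a_2-1$ in maximal regions of $A_{t-1}$. Since $a_1-1<b_1$, and since $b_1=a_2-1$ would contradict $X_{a_2-1}^{t-1}=\omax$, we get $a_1-1<b_1<a_2-1$; the non-maximal index $b_1$ thus separates $a_1-1$ from $a_2-1$, so their regions in $A_{t-1}$ differ. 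An entirely parallel argument, using the second clause of Proposition~\ref{lem:m-1}(i) (if $\Delta X_i^{t-1}>0$ then $X_{i+1}^t<\omax$) at the left-most index of the later region, establishes injectivity of $\tau$.

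The main obstacle is purely in this bookkeeping: one must handle the strict inequalities at the block endpoints so that the separating index ($b_1$ for $\sigma$, the left endpoint of the later region for $\tau$) lies \emph{strictly} between the two image indices rather than coinciding with one of them. This boundary case is precisely where the equalities $X_{a_2-1}^{t-1}=\omax$ (resp.\ $X_{b_1+1}^t=\omax$) supplied by the definitions of $\sigma$ and $\tau$ are used to derive a contradiction. Once both injections are secured, finiteness of $A_{t-1}$ and $A_t$ yields $|A_{t-1}|=|A_t|$, and induction on $t$ completes the proof.
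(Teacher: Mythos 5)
Your proof is correct and follows essentially the same route as the paper: the paper derives the corollary from the informal discussion immediately preceding it, which uses exactly the three clauses of Proposition~\ref{lem:m-1} to match maximal regions at consecutive time steps, and your two injections $\sigma$ and $\tau$ are simply a rigorous formalisation of that matching (with the $V^t\le1$ case handled by the translation property, as in the paper). The careful treatment of the boundary case where the separating non-maximal index could coincide with an image index is a welcome addition that the paper leaves implicit.
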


Finally in this section we state a technical lemma that is needed later.  A proof is given in Appendix~\ref{ap:proof of 9}.

\begin{lem}\label{lem:m}
\begin{enumerate}
  \item If $X^t_m>1$ is locally maximal at $m$, at time $t$, then 
\begin{equation}\label{eq:lem 9 1}
(U^t_{m-1}+U^t_{m}\ge1\text{ or \/}U^{t-1}_m+U^t_{m}=1)\text{ and \/} (U^t_{m-1}+U^t_{m}\ge1\text{ or \/}U^t_{m-1}+U^{t+1}_{m-1}=1).
\end{equation} 
  \item If $X_m^t$ is a global maximum, i.e. if $X_m^t=\omax$, then
\begin{equation}\label{eq:lem 9 2}
U^t_{m-1}+U^{t-1}_{m-1}=\kmax.
\end{equation} 
\end{enumerate}
\end{lem}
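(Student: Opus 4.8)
The plan is to derive everything from the two decompositions of $X^t_i$ in \eqref{eq:basic 3}, the downdate dichotomy \eqref{eq:basic 4} (together with the bound $U^t_i+U^{t-1}_i\le\kmax$ coming from \eqref{eq:basic} and \eqref{eq:basic kmax}), and the difference formulae \eqref{eq:basic 1}, \eqref{eq:basic 2}. For part (ii) I will in addition use the conserved value of the maximal $Y$, namely $\max_iY^t_i=\omax-\kmax$, which is just a restatement of \eqref{eq:max} in Theorem~\ref{thm:cons X Y} (since $\max(1,\omax)=1+\omax-\kmax$).

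For part (i), first note that $X^t_m>1$ forces $\omax>1$ and hence $\kmax=1$. Local maximality at $m$ means $\Delta X^t_{m-1}\ge0$ and $\Delta X^t_m\le0$, which by \eqref{eq:basic 1} read $U^{t-1}_{m-1}\ge U^{t+1}_{m-1}$ and $U^{t-1}_m\le U^{t+1}_m$. Each conjunct in \eqref{eq:lem 9 1} is an ``or'' whose first alternative is $U^t_{m-1}+U^t_m\ge1$; so I assume $U^t_{m-1}+U^t_m<1$ throughout and prove the respective equality by contradiction. For the first conjunct, assuming also $U^{t-1}_m+U^t_m<1$, \eqref{eq:basic 4} gives $Y^t_m=0$, and then the first form of \eqref{eq:basic 3} at $m$, with $X^t_m>1$ and $U^t_m+U^{t+1}_m\le1$, forces $Y^{t+1}_m>0$, so $U^t_m+U^{t+1}_m=1$ and $X^t_m=1+Y^{t+1}_m$. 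Substituting $U^{t+1}_m=1-U^t_m$ into \eqref{eq:basic 2} at time $t+1$ and using $U^t_{m-1}+U^t_m<1$ yields $Y^{t+1}_{m-1}>Y^{t+1}_m>0$, whence $U^t_{m-1}+U^{t+1}_{m-1}=1$; comparing with the second form of \eqref{eq:basic 3} at $m$ then gives $U^{t-1}_{m-1}+U^t_{m-1}<1$, so $U^{t+1}_{m-1}>U^{t-1}_{m-1}$, i.e.\ $\Delta X^t_{m-1}<0$, contradicting local maximality on the left. Since $U^{t-1}_m+U^t_m\le1$ always, this proves $U^{t-1}_m+U^t_m=1$. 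The second conjunct is the mirror image, driven by the right inequality $\Delta X^t_m\le0$: assuming $U^t_{m-1}+U^{t+1}_{m-1}<1$ one gets $Y^{t+1}_{m-1}=0$, then $Y^t_{m-1}>0$ from the second form of \eqref{eq:basic 3}, then $Y^t_m>Y^t_{m-1}>0$ from \eqref{eq:basic 2}, and finally $U^t_m+U^{t+1}_m<1$ from the first form of \eqref{eq:basic 3}; combined with $U^{t-1}_m+U^t_m=1$ (which follows from $Y^t_m>0$) this gives $\Delta X^t_m>0$, the required contradiction.

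For part (ii), which is global, the argument is much shorter. Evaluate the second form of \eqref{eq:basic 3} at $m$, so that $\omax=X^t_m=(U^{t-1}_{m-1}+U^t_{m-1})+Y^t_{m-1}+Y^{t+1}_{m-1}$. Suppose $U^{t-1}_{m-1}+U^t_{m-1}<\kmax$; since $\kmax\le1$ this is in particular $<1$, so \eqref{eq:basic 4} gives $Y^t_{m-1}=0$. But $Y^{t+1}_{m-1}\le\max_iY^{t+1}_i=\omax-\kmax$, whence $X^t_m<\kmax+(\omax-\kmax)=\omax$, contradicting $X^t_m=\omax$. As $U^{t-1}_{m-1}+U^t_{m-1}\le\kmax$ by \eqref{eq:basic kmax}, we conclude $U^t_{m-1}+U^{t-1}_{m-1}=\kmax$.

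I expect the bookkeeping in part (i) to be the only real obstacle. Because it is a purely local statement, one cannot invoke the conserved bound on $\max_iY$ that makes part (ii) immediate; instead one must chain the implications of the downdate dichotomy \eqref{eq:basic 4} in exactly the right order---each strict inequality of the form $U^{t'}_i+U^{t'-1}_i<1$ switching off the corresponding $Y^{t'}_i$, and each positive $Y$ switching on an equality $U^{t'}_i+U^{t'-1}_i=1$---until a sign of $\Delta X^t_{m-1}$ or $\Delta X^t_m$ emerges that violates local maximality. Arranging the two directions so that the first conjunct draws its contradiction from the left inequality and the second from the right is the delicate part of the accounting.
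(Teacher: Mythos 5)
Your argument is correct, and both parts check out line by line. For part (i) you follow essentially the same route as the paper: negate the first disjunct, propagate the dichotomy \eqref{eq:basic 4} through \eqref{eq:basic 2} and \eqref{eq:basic 3}, and contradict local maximality via the sign of $\Delta X^t_{m-1}$ or $\Delta X^t_m$. The differences are organisational: the paper derives the intermediate monotonicity facts $\Delta Y^t_{m-1}\ge0$, $\Delta Y^{t+1}_{m-1}\le0$ from the naive equation \eqref{eq:naive} and then case-splits on whether the relevant $Y$'s vanish, keeping an extra escape clause ``$X^t_m\le1$'' in its two auxiliary implications; you instead feed the hypothesis $X^t_m>1$ into \eqref{eq:basic 3} at the outset to force $Y^{t+1}_m>0$ (resp. $Y^t_{m-1}>0$), which eliminates that branch and lets you extract the needed inequalities $U^t_m<U^{t-1}_{m-1}$ and $U^{t+1}_m>U^t_{m-1}$ directly from the resulting equalities $U^{t'}_i+U^{t'-1}_i=1$ rather than from \eqref{eq:naive}. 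For part (ii) your route is genuinely different and shorter: by invoking the conserved bound $\max_iY^{t+1}_i=c_{\text{max}}-1=\omax-\kmax$ from Theorem~\ref{thm:cons X Y}, the second decomposition in \eqref{eq:basic 3} gives $X^t_m<\kmax+(\omax-\kmax)=\omax$ in one line, whereas the paper splits into $V^t\le1$ versus $V^t>1$ and, in the latter case, into $Y^{t+1}_{m-1}=0$ versus $Y^{t+1}_{m-1}>0$, using $\Delta X^t_{m-1}<0$ to contradict global maximality. What the paper's version buys is independence from the quantitative statement \eqref{eq:max} (it needs only the $t$-invariance results and the local identities); what yours buys is brevity and a uniform treatment of the two regimes, at the price of leaning on the full strength of Theorem~\ref{thm:cons X Y} --- which is legitimate here, since that theorem precedes Lemma~\ref{lem:m} and its proof does not use it.
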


\section{On Darboux transformations for dKdV and udKdV}\label{sec:darboux}

Let $u^t_i$ be a solution of the dKdV equation \eqref{eq:dkdv 1} and let $\theta^t_i$ be a non-zero solution of linear system \eqref{eq:dlin1}, \eqref{eq:dlin2} for any choice of parameter $\lambda$. Then \cite{MR3150627} 
\begin{equation}
\label{eq:darboux phi}
\tilde\phi^t_i=\frac{\phi_{i-1}^t\theta^t_i-\phi_{i}^t\theta_{i-1}^t}{\theta^t_i},
\end{equation}
satisfies \eqref{eq:dlin1}, \eqref{eq:dlin2} with 
\begin{equation}\label{eq:DT tau}
\tilde \tau^t_i=\tau^t_i\theta_i^t 
\end{equation}
and thus, using \eqref{eq:u tau},
\begin{equation}\label{eq:DT u}
\tilde u^t_i=u^t_i\frac{\theta_i^{t+1}\theta_{i+1}^{t}}{\theta_i^t\theta_{i+1}^{t+1}},
\end{equation}
is a solution of the dKdV equation \eqref{eq:dkdv 1}.

Because of the presence of a negative sign, there is no obvious ultradiscrete counterpart of \eqref{eq:darboux phi}. However, the transformation of the tau function \eqref{eq:DT tau} and potential \eqref{eq:DT u} have ultradiscrete limits
\begin{equation}\label{eq:ud darboux}
\tilde T^t_i=T^t_i+\Theta^t_{i}\quad\text{and}\quad
\tilde U^t_i=U^t_i+\Theta^t_{i+1}+\Theta^{t+1}_i-\Theta^t_{i}-\Theta^{t+1}_{i+1}.
\end{equation}
In Sections \ref{sec:line sol} and \ref{sec:bound states} we will construct solutions to the linear problem \eqref{eq:lin1}--\eqref{eq:lin4} for which \eqref{eq:ud darboux} indeed acts as an ultradiscrete  Darboux transformation, i.e.: it maps a solution $U^t_i$ of udKdV \eqref{eq:update rule} with finite support to another solution of udKdV, also with finite support (and similarly for the associated $T$-functions). 

We now derive some properties of the discrete Darboux transformation and the corresponding properties they imply for the ultradiscrete Darboux transformation.
\begin{lem}
Let $\theta^t_i$, any non-zero solution of \eqref{eq:dlin1}, \eqref{eq:dlin2}, determine the Darboux transformation \eqref{eq:DT u} from $u^t_i$ to $\tilde u^t_i$. Then
\begin{equation}
\label{eq:DT id}
\frac1{\tilde u^t_i}+\delta\tilde u^t_{i-1}=
\frac{\theta^t_i\strut^2}{\theta^t_{i+1}\theta^t_{i-1}}
\left(\frac1{u^t_i}+\delta u^t_{i-1}\right).
\end{equation}
The transformed linear problem has particular solution 
\begin{equation}\label{eq:1/theta}
\tilde\theta^t_i:=\frac{1}{\rho^t_i}=\frac{(1-\lambda^2)^{-i}(1-\lambda^2/\beta^2)^{t}}{\theta^t_i},   
\end{equation}
where $\rho^t_i$ is the adjoint eigenfunction related to $\theta^t_i$ \uplp cf.\ \eqref{eq:adjoint}\uprp.
The Darboux transformation defined by $\tilde\theta^t_i$ maps $\tilde u^t_i$ back to $u^t_i$.
\end{lem}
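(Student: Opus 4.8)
The plan is to prove the three assertions in sequence, since the second and third depend on the first. For identity \eqref{eq:DT id}, I would substitute the explicit Darboux formula \eqref{eq:DT u} for $\tilde u^t_i$ directly into the left-hand side. Writing $\tilde u^t_i = u^t_i\,\theta_i^{t+1}\theta_{i+1}^{t}/(\theta_i^t\theta_{i+1}^{t+1})$ and the corresponding expression for $\tilde u^t_{i-1}$, the quantity $1/\tilde u^t_i + \delta\tilde u^t_{i-1}$ becomes a rational expression in the $\theta$'s and $u$'s. The key observation is that $\theta^t_i$ itself satisfies the linear equation \eqref{eq:dlin1}, which I can rearrange to express the combination $1/u^t_i + \delta u^t_{i-1}$ in terms of $\theta^t_{i-1}$, $\theta^t_i$, $\theta^t_{i+1}$ (up to the factor $\tfrac{2}{1+\delta}$ and the $(1-\lambda^2)$ term). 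The aim is to manipulate the left-hand side until the factor $\theta^t_i{}^2/(\theta^t_{i+1}\theta^t_{i-1})$ factors out, leaving exactly $1/u^t_i + \delta u^t_{i-1}$; this should follow after clearing denominators and using \eqref{eq:dlin1} to eliminate the awkward cross terms.

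For the claim that $\tilde\theta^t_i = 1/\rho^t_i$ solves the transformed linear problem, I would first recall from \eqref{eq:adjoint} that $\rho^t_i := (1-\lambda^2)^i(1-\lambda^2/\beta^2)^{-t}\theta^t_i$ solves the adjoint system \eqref{eq:dlinad1}, \eqref{eq:dlinad2}, so that $\tilde\theta^t_i = (1-\lambda^2)^{-i}(1-\lambda^2/\beta^2)^{t}/\theta^t_i$ as stated. The strategy is to verify that $\tilde\theta^t_i$ satisfies \eqref{eq:dlin1} and \eqref{eq:dlin2} with $u^t_i$ replaced by $\tilde u^t_i$. For the spatial equation \eqref{eq:dlin1}, the transformed coefficient is precisely $\tfrac{2}{1+\delta}(1/\tilde u^t_i + \delta\tilde u^t_{i-1})$, which by the identity \eqref{eq:DT id} just proved equals $\tfrac{2}{1+\delta}\,\theta^t_i{}^2/(\theta^t_{i+1}\theta^t_{i-1})\,(1/u^t_i + \delta u^t_{i-1})$. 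Substituting $\tilde\theta^t_i \propto 1/\theta^t_i$ into the transformed equation and using that $\theta^t_i$ solves the original \eqref{eq:dlin1}, the equation should collapse to an identity; the factor $\theta^t_i{}^2/(\theta^t_{i+1}\theta^t_{i-1})$ produced by \eqref{eq:DT id} is exactly what is needed to reconcile the $1/\theta$ substitution with the original relation. The time equation \eqref{eq:dlin2} is handled analogously using \eqref{eq:dlinad2}.

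For the final statement—that the Darboux transformation generated by $\tilde\theta^t_i$ returns $\tilde u^t_i$ to $u^t_i$—I would apply \eqref{eq:DT u} once more, this time with the transformed potential $\tilde u^t_i$ and eigenfunction $\tilde\theta^t_i$, giving
\begin{equation}
\tilde{\tilde u}^t_i = \tilde u^t_i\,\frac{\tilde\theta_i^{t+1}\tilde\theta_{i+1}^{t}}{\tilde\theta_i^t\tilde\theta_{i+1}^{t+1}}.
\end{equation}
Substituting $\tilde\theta^t_i = (1-\lambda^2)^{-i}(1-\lambda^2/\beta^2)^{t}/\theta^t_i$, the prefactors $(1-\lambda^2)^{-i}$ and $(1-\lambda^2/\beta^2)^{t}$ cancel in the ratio (each shift in $i$ or $t$ producing reciprocal factors that pair off), and the $\theta$-dependent part inverts the ratio $\theta_i^{t+1}\theta_{i+1}^{t}/(\theta_i^t\theta_{i+1}^{t+1})$ appearing in $\tilde u^t_i$. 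Multiplying $\tilde u^t_i$ by this inverted ratio recovers $u^t_i$ exactly.

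The main obstacle I expect is the first identity \eqref{eq:DT id}: the algebra of clearing denominators in $1/\tilde u^t_i + \delta\tilde u^t_{i-1}$ and recognising the correct application of \eqref{eq:dlin1} to produce the stated factor is delicate, since one must use the linear equation at the right index and keep careful track of the $(1-\lambda^2)$ terms. Once that identity is in hand, the remaining two parts are essentially bookkeeping with the explicit gauge factors in \eqref{eq:adjoint} and the cancellation in the double transformation.
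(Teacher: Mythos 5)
Your overall architecture coincides with the paper's (substitute \eqref{eq:DT u} into the left side of \eqref{eq:DT id}, verify $\tilde\theta^t_i$ directly, check that the double transformation is the identity), and your treatment of the second and third claims is sound: in particular your cancellation of the gauge prefactors in $\tilde\theta_i^{t+1}\tilde\theta_{i+1}^{t}/(\tilde\theta_i^t\tilde\theta_{i+1}^{t+1})$ is exactly the paper's tau-function argument $\widehat\tau^t_i=\tilde\theta^t_i\theta^t_i\tau^t_i=(1-\lambda^2)^{-i}(1-\lambda^2/\beta^2)^{t}\tau^t_i$ read through \eqref{eq:u tau}.

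There is, however, a concrete problem in your plan for the key identity \eqref{eq:DT id}. After substituting \eqref{eq:DT u} the left-hand side becomes
\begin{equation}
\frac{\theta^t_i}{\theta^t_{i+1}\theta^t_{i-1}\theta^{t+1}_i}\left(\frac1{u^t_i}\,\theta^{t+1}_{i+1}\theta^t_{i-1}+\delta u^t_{i-1}\,\theta^t_{i+1}\theta^{t+1}_{i-1}\right),
\end{equation}
which contains the time-shifted values $\theta^{t+1}_{i-1}$, $\theta^{t+1}_{i}$, $\theta^{t+1}_{i+1}$. The spatial equation \eqref{eq:dlin1}, which you identify as the key tool, relates only $\theta^t_{i-1},\theta^t_i,\theta^t_{i+1}$ at a fixed time level and cannot eliminate these (invoking it at time $t+1$ would instead drag in the unknown $u^{t+1}$). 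The indispensable ingredient is the time-evolution equation \eqref{eq:dlin2}, used in the two equivalent forms $\beta\theta^{t+1}_{i}=\theta^{t}_{i-1}+(\beta-1)u^{t}_{i-1}\theta^{t}_{i}=-(1-\lambda^2)\theta^{t}_{i+1}-\tfrac{\beta-1}{\delta u^{t}_{i}}\theta^{t}_{i}$, the second form being obtained by combining with \eqref{eq:dlin1}. One form eliminates $\theta^{t+1}_{i+1}$, the other $\theta^{t+1}_{i-1}$, the cross terms proportional to $(\beta-1)\theta^t_{i-1}\theta^t_{i+1}$ cancel, and applying the same two forms once more to the remaining $\beta\theta^{t+1}_i$ in the denominator yields precisely $1/u^t_i+\delta u^t_{i-1}$. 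So \eqref{eq:dlin1} enters only indirectly, through the derivation of the second form; as written, ``clearing denominators and using \eqref{eq:dlin1}'' would stall at the first step. Once \eqref{eq:dlin2} is installed as the working relation, the rest of your outline goes through as described.
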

\begin{proof}
From \eqref{eq:dlin1} and \eqref{eq:dlin2} the $\theta^t_i$
satisfy
\begin{gather}
  \label{eq:dlin1 th}
  \theta_{i-1}^{t}-\frac2{1+\delta}\left(\frac1{u_{i}^{t}}+\delta u_{i-1}^{t}\right)\theta_{i}^{t}+(1-\lambda^2)\theta_{i+1}^{t}=0\\
  \label{eq:dlin2 th}
  \beta\theta_{i}^{t+1}=\theta_{i-1}^{t}+(\beta-1)u_{i-1}^{t}\theta_{i}^{t}=-(1-\lambda^2)\theta_{i+1}^{t}-\frac{(\beta-1)}{\delta u_{i}^{t}}\theta_{i}^{t},
\end{gather}
for some parameter $\lambda$, with $\beta = \dfrac{1-\delta}{1+\delta}$.
Using \eqref{eq:dlin2 th}, we get
\begin{align*}
\frac1{\tilde u^t_i}+\delta\tilde u^t_{i-1}&=\frac{\theta^t_i}{\theta^t_{i+1}\theta^t_{i-1}\theta^{t+1}_i}
\left(\frac1{u^t_i}\theta^{t+1}_{i+1}\theta^t_{i-1}+\delta u^t_{i-1}\theta^t_{i+1}\theta^{t+1}_{i-1}\right)\\
&=\frac{\theta^t_i\strut^2}{\beta\theta^t_{i+1}\theta^t_{i-1}\theta^{t+1}_i}
\left(
\frac1{u^t_i}\theta^t_{i-1}-
\delta u^t_{i-1}\theta^t_{i+1}(1-\lambda^2)\right)\\
&=\frac{\theta^t_i\strut^2}{\theta^t_{i+1}\theta^t_{i-1}}
\left(\frac1{u^t_i}+\delta u^t_{i-1}\right).
\end{align*}
This proves \eqref{eq:DT id}.

That $\tilde\theta^t_i$ is a solution of the transformed linear system follows by similar straightforward calculations. Let $\widehat u^t_i$ be the result of applying the Darboux transformation defined by $\tilde\theta^t_i$ to $\tilde u^t_i$. Then from \eqref{eq:DT tau},
\begin{equation}
\widehat\tau^t_i=\tilde\theta^t_i\tilde\tau^t_i=\tilde\theta^t_i\theta^t_i\tau^t_i=(1-\lambda^2)^{-i}(1-\lambda^2/\beta^2)^{t}\tau^t_i,
\end{equation}
and so, using \eqref{eq:u tau}, $\widehat u^t_i=u^t_i$, as required.
\end{proof} 

\begin{rem}\label{rem:discreteBS}
A generic solution $\theta^t_i$ of \eqref{eq:dlin1}, \eqref{eq:dlin2} has asymptotic form as $i\to\pm\infty$ given by \eqref{eq:infty} and \eqref{eq:-infty}.
Consequently, for $\tilde\theta^t_i$ given by \eqref{eq:1/theta}, 
\begin{equation}
  \tilde\theta_i^t\sim \frac{1}{a (1-\lambda)^{i}(1-\lambda/\beta)^{-t} + b (1+\lambda)^{i}(1+\lambda/\beta)^{-t}},
\end{equation}
as $i\to-\infty$ and
\begin{equation}
  \tilde\theta_i^t\sim \frac{1}{c (1-\lambda)^{i}(1-\lambda/\beta)^{-t} + d (1+\lambda)^{i}(1+\lambda/\beta)^{-t}},
\end{equation}
as $i\to+\infty$. When $0<\lambda<\beta<1$, the terms $(1-\lambda)^{i}$ and $(1+\lambda)^{i}$ tend to $\infty$ as $i\to-\infty$ and as $i\to+\infty$, respectively. Hence $\tilde\theta^t_i\to0$ as $i\to\pm\infty$, showing it has finite norm \eqref{def: dnorm} and therefore corresponds to a bound state. 
\end{rem}
As in the continuous case \cite{DeiftTrub1979}, a Darboux transformation determined by a generic eigenfunction $\theta_i^t$ can be used to add a bound state to $u_i^t$. That is, compared to $u_i^t$, the transformed potential $\tilde u_i^t$ will have an extra eigenvalue added to its discrete spectrum,  where we think of these solutions to the dKdV equation as potentials in spectral equation \eqref{eq:dlin1}. Such a  transformation is often referred to as a {\it dressing} transformation. Conversely, the Darboux transformation determined by the eigenfunction $\tilde\theta_i^t$ removes this bound state from $\tilde u_i^t$. A transformation based on a bound state eigenfunction is therefore referred to as an {\it undressing} transformation.

\begin{lem}\label{udDarboux}
Let $\Theta^t_i$ be a solution of linear problem \eqref{eq:lin1}--\eqref{eq:lin4}, and let the Darboux transformation \eqref{eq:ud darboux} determined by $\Theta^t_i$ map $U^t_i$ to $\tilde U^t_i$. Then
\begin{equation}
\label{eq:uDT id}
\max(\tilde U^t_{i-1}-1,-\tilde U^t_{i})=
2\Theta^t_i-\Theta^t_{i+1}-\Theta^t_{i-1}
+\max(U^t_{i-1}-1,-U^t_{i}).
\end{equation}
The transformed linear problem, \eqref{eq:lin1}--\eqref{eq:lin4} with potential $\tilde U^t_i$, has particular solution 
\begin{equation}\label{eq:u 1/theta}
\tilde\Theta^t_i:=\kappa i-\omega t-\Theta^t_i,   
\end{equation}
and the Darboux transformation defined by $\tilde\Theta^t_i$ maps $\tilde U^t_i$ back to $U^t_i$.
\end{lem}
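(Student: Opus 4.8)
The plan is to read Lemma~\ref{udDarboux} as the ultradiscrete shadow of the preceding (discrete) Lemma and to obtain each of its three assertions by ultradiscretising the corresponding discrete identity, the one genuinely ultradiscrete ingredient being the gauge argument for the involution. Throughout I would use the order assignments $u^t_i=O(\delta^{-U^t_i})$, $\theta^t_i=O(\delta^{-\Theta^t_i})$, $1-\lambda^2=O(\delta^\kappa)$ and $1-\lambda^2/\beta^2=O(\delta^\omega)$, together with the dictionary that, for positive quantities, a product ultradiscretises to a sum of exponents, a quotient to a difference, and a sum to a maximum. The observation that makes this rigorous is that every discrete identity I intend to pass to the limit is assembled entirely from manifestly positive quantities (recall $\theta^t_i>0$ and, for $0<\lambda<\beta<1$, both $1-\lambda^2$ and $1-\lambda^2/\beta^2$ are positive), so there is no cancellation and the sum-to-max correspondence is exact in the limit.

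For the identity \eqref{eq:uDT id} I would start from \eqref{eq:DT id}. Its left-hand side $1/\tilde u^t_i+\delta\tilde u^t_{i-1}$ has summands of order $-(-\tilde U^t_i)$ and $-(\tilde U^t_{i-1}-1)$, hence ultradiscretises to $\max(\tilde U^t_{i-1}-1,-\tilde U^t_i)$; on the right the prefactor $(\theta^t_i)^2/(\theta^t_{i+1}\theta^t_{i-1})$ contributes $2\Theta^t_i-\Theta^t_{i+1}-\Theta^t_{i-1}$ and the bracket contributes $\max(U^t_{i-1}-1,-U^t_i)$, giving exactly \eqref{eq:uDT id}. One should also note that the order of $\tilde u^t_i$ is $-\tilde U^t_i$ with $\tilde U^t_i$ as in \eqref{eq:ud darboux}, since \eqref{eq:DT u} is again a ratio of positive factors. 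If one instead wants a proof valid for an \emph{arbitrary} solution $\Theta^t_i$ of \eqref{eq:lin1}--\eqref{eq:lin4}, not merely one arising as a limit of a positive discrete eigenfunction, I would rewrite the right-hand side of \eqref{eq:uDT id} using \eqref{eq:lin1} for $\Theta^t_i$, turning it into $\Theta^t_i+\max(-\kappa-\Theta^t_{i-1},-\Theta^t_{i+1})$, and then check that the left-hand side, after substituting \eqref{eq:ud darboux}, reduces to the same expression once the time-shifted terms $\Theta^{t+1}_\bullet$ are eliminated using the evolution equations \eqref{eq:lin3} and \eqref{eq:lin4}. This elimination, which amounts to a short but genuine case analysis on the maxima, is the main obstacle.

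For the second assertion I would ultradiscretise \eqref{eq:1/theta}: the factor $(1-\lambda^2)^{-i}$ has order $-\kappa i$, the factor $(1-\lambda^2/\beta^2)^t$ has order $\omega t$, and $1/\theta^t_i$ has order $\Theta^t_i$, so that $\tilde\theta^t_i=O(\delta^{-\tilde\Theta^t_i})$ with $\tilde\Theta^t_i$ precisely \eqref{eq:u 1/theta}. That $\tilde\Theta^t_i$ solves the transformed system is then the ultradiscrete limit of the discrete statement that $\tilde\theta^t_i$ solves the transformed discrete linear problem; in the self-contained variant, the component of this claim coming from \eqref{eq:lin1} is exactly the reformulation of \eqref{eq:uDT id} above, and the remaining equations \eqref{eq:lin2}--\eqref{eq:lin4} follow by the same substitute-and-reduce computation.

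Finally, the involution is cleanest at the ultradiscrete level and needs no limit. Composing the two transformations, \eqref{eq:ud darboux} gives $\widehat T^t_i=\tilde T^t_i+\tilde\Theta^t_i=(T^t_i+\Theta^t_i)+(\kappa i-\omega t-\Theta^t_i)=T^t_i+\kappa i-\omega t$, which differs from $T^t_i$ only by the gauge \eqref{Tgauge} with $(\alpha,\beta,\gamma)=(\kappa,-\omega,0)$; since \eqref{eq:ud H sub} is invariant under this gauge, $\widehat U^t_i=U^t_i$. Equivalently, a one-line computation shows that the discrete curl $\tilde\Theta^t_{i+1}+\tilde\Theta^{t+1}_i-\tilde\Theta^t_i-\tilde\Theta^{t+1}_{i+1}$ equals $-(\tilde U^t_i-U^t_i)$, so the second transformation cancels the first exactly. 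I expect no difficulty here.
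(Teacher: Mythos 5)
Your proposal follows the paper's own route exactly: the paper likewise obtains \eqref{eq:uDT id} and \eqref{eq:u 1/theta} as ultradiscrete limits of \eqref{eq:DT id} and \eqref{eq:1/theta}, explicitly remarks that a direct proof of \eqref{eq:uDT id} ``seems to be very difficult'' (the same obstacle you flag), and treats the verification of the transformed linear system as routine. Your gauge argument for the involution is just the ultradiscrete shadow of the paper's computation $\widehat\tau^t_i=(1-\lambda^2)^{-i}(1-\lambda^2/\beta^2)^{t}\tau^t_i$, so nothing is genuinely different.
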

\begin{proof}
Formula \eqref{eq:uDT id} is the ultradiscrete limit of \eqref{eq:DT id} and \eqref{eq:u 1/theta} is the ultradiscrete limit of \eqref{eq:1/theta}. A direct proof of \eqref{eq:uDT id} seems to be very difficult, but given that result, it is an entirely routine procedure to verify that $\tilde\Theta^t_i$ satisfies the transformed versions of \eqref{eq:lin1}--\eqref{eq:lin4}.
\end{proof}

The asymptotic behaviour, when $|i|\gg 1$ and $U_i^t=0$, of a generic solution to the ultradiscrete linear problem  can be easily deduced from the system
\begin{gather}
\label{eq:vaclin1}
  \max(\Phi^{t}_{i+1}-\kappa,\Phi^{t}_{i-1})=\Phi^{t}_{i},\\
\label{eq:vaclin2}  
 \max(\Phi^{t+1}_{i+1}-\kappa,\Phi^{t+1}_{i-1})=\Phi^{t+1}_{i},\\
\label{eq:vaclin3}  
  \max(\Phi^{t+1}_{i+1},\Phi^{t}_{i+1}-1)=\Phi^t_i,\\
 \label{eq:vaclin4}
  \max(\Phi^{t}_{i}+\kappa-\omega,\Phi^{t+1}_{i}+\kappa-1)=\Phi^{t+1}_{i+1}.
\end{gather}
Recall that $\omega$ is a free (non-negative) parameter and that $\kappa=\min(1,\omega)$. Let us consider the case where $\omega$ is not zero (and thus $\kappa>0$ as well). We  find from \eqref{eq:vaclin1} that either $\Phi_{i-1}^t = \Phi_i^t \,(\geq  \Phi_{i+1}^t-\kappa)$ or $\Phi_{i+1}^t-\kappa = \Phi_i^t \,(\geq \Phi_{i-1}^t)$, the former situation propagating naturally as $i\to -\infty$ and the latter as $i\to+\infty$. Hence, among the various possible asymptotic behaviours for a solution to \eqref{eq:lin1}--\eqref{eq:lin4}, there is one `generic' solution with asymptotic behaviour of the form
\begin{equation}\label{genericbehavphi}
\Phi_i^t \sim {\rm constant} \quad\text{as}~i\to-\infty\ \qquad\text{and}\qquad \Phi_i^t \sim \kappa i + \varphi(t) \quad\text{as}~i\to+\infty .
\end{equation}
A similar result follows from \eqref{eq:vaclin2} for $\Phi_i^{t+1}$ but equations \eqref{eq:vaclin3} and \eqref{eq:vaclin4} are clearly needed if one wants to understand how the asymptotics of $\Phi_i^{t}$ and $\Phi_i^{t+1}$ are related. Let us take the constant asymptotic value for $\Phi_i^{t}$ at $-\infty$ to be 0 (which is possible because of the linearity of equation \eqref{eq:lin1}). Equation \eqref{eq:vaclin3} then tells us that in the asymptotic regime where $i\to-\infty$ we have $\Phi_i^{t+1}=0$ as well. Equation \eqref{eq:vaclin4}, in this case, is trivially satisfied: $\max(\kappa-\omega, \kappa-1)=0 \Leftrightarrow \kappa=\min(1,\omega)$. On the other hand, when $i\to+\infty$, we find from \eqref{eq:vaclin4} that $\Phi_{i+1}^{t+1} \geq \kappa i + \varphi(t) + \kappa-\omega$ which implies that it must grow along with $\kappa i$. Hence the generic asymptotic behaviour for $\Phi_{i}^{t+1} $ is exactly that of \eqref{genericbehavphi} for some value of $\varphi(t+1)$, to be determined. Setting $\Phi_i^{t+1}=\kappa i + \varphi(t+1)$ in \eqref{eq:vaclin4} as $i\to+\infty$ we obtain
\begin{equation}
\max(\varphi(t) -\omega, \varphi(t+1) -1) = \varphi(t+1),
\end{equation}
which implies that $\varphi(t+1)=\varphi(t) - \omega$ (note that in this case equation \eqref{eq:vaclin3} is trivially satisfied). We therefore say that a {\em generic} solution of the ultradiscrete linear problem \eqref{eq:lin1}--\eqref{eq:lin4} has asymptotic behaviour
\begin{equation}\label{udgeneig-asympt}
\Phi_i^t \sim \left\{\begin{array}{lr} 0 & ~(i\to-\infty) \\ \kappa ~(i - \phi - c t) & (i\to+\infty)\end{array}\right. ,
\end{equation}
for some constant $\phi$ and where $c=\max(1,\omega)$.

In the next section (Section \ref{sec:line sol})  we will show how to explicitly describe generic solutions to \eqref{eq:lin1}--\eqref{eq:lin4} and how such solutions can be seen to give rise to dressing transformations for the potential in the ultradiscrete linear problem.

\begin{rem}\label{rem:BSdef}
By analogy with the discrete case, if $\Theta_i^t$ of Lemma \ref{udDarboux} is a generic solution to the ultradiscrete linear problem \eqref{eq:lin1}--\eqref{eq:lin4}, $\tilde\Theta_i^t$ can be regarded as the ultradiscrete counterpart of a bound state $\tilde\theta^t_i$ and will be referred to as a {\it bound state} eigenfunction as well, even though it does not tend to zero asymptotically. In fact, in Section \ref{sec:bound states} we will show that such $\tilde\Theta_i^t$ indeed have many properties in common with their discrete counterparts. In particular they can be shown to define undressing transformations for potentials in the ultradiscrete linear problem \eqref{eq:lin1}--\eqref{eq:lin4}.
\end{rem}

\section{Ultradiscrete eigenfunctions and dressing transformations}\label{sec:line sol}
We will show in this section that for any $U^t_i$ the ultradiscrete linear system \eqref{eq:lin1}--\eqref{eq:lin4} has two solutions which may be combined (using $\max$) to give a two parameter family of solutions. One of these parameters is the free parameter $\omega\ge0$ of the linear system. As part of the verification of these solutions we will see that each solution exists if and only if $\omega\ge\omax$ where $\omax$ is the mass of the heaviest soliton in $U^t_i$. As in the previous section, it is sometimes useful to distinguish two cases, $V^t\le1$ and $V^t>1$,  for $V^t$ as in \eqref{eq:V^t}. Recall that in the former case, $V^t$ is independent of $t$ and $\omax=V^t$ and in the latter case, $\omax>1$.

\subsection{Solution $\Phi^t_i=\sum_{j<i}U^{t-1}_j$}\label{sec:basicsol1}
{Note that, using \eqref{eq:naive},  \eqref{eq:lin2} can be seen to be the time update of \eqref{eq:lin1},  consequently} we only need to check \eqref{eq:lin1}, \eqref{eq:lin3} and \eqref{eq:lin4}. For any $t$, substituting $\Phi^t_i=\sum_{j<i}U^{t-1}_j$ in \eqref{eq:lin1} gives
\begin{align}
\max\left(\sum_{j<i+1}U^{t-1}_j-\kappa,\sum_{j<i-1}U^{t-1}_j\right)&=\sum_{j<i}U^{t-1}_j+\max(U^{t}_{i-1}-1,-U^{t}_{i})\nonumber
\intertext{and so, using \eqref{eq:naive} to rewrite the RHS,}
\label{eq:sol1 (16)}
\max(U^{t-1}_{i}+U^{t-1}_{i-1}-\kappa,0)&=\max(U^{t-1}_{i}+U^{t-1}_{i-1}-1,0).
\end{align}
We consider two cases: if $V^t\le1$, $V^{t-1}=V^t=\omax\le1$ (cf. Remark \ref{rem:omega max}) and so \eqref{eq:sol1 (16)} is equivalent to $\max(U^{t-1}_{i}+U^{t-1}_{i-1}-\kappa,0)=0$, that is  $U^{t-1}_{i}+U^{t-1}_{i-1}\le\kappa$. {Since $\kappa=\min(1,\omega)$, in this case \eqref{eq:lin1} is satisfied if and only if $\omega\ge\kappa\ge\omax$.} In the case $V^t>1$, we have $V^{t-1}>1$ and $\omax>1$. For some $j$, $\max(U^{t-1}_{j}+U^{t-1}_{j-1}-\kappa,0)=U^{t-1}_{j}+U^{t-1}_{j-1}-1>0$, giving $\kappa=1$. In this case \eqref{eq:lin1} is satisfied if and only if $\omega\ge1$. In both cases, the solution of \eqref{eq:lin1} is verified (at least) when $\omega\ge\omax$.

Next, substituting $\Phi^t_i=\sum_{j<i}U^{t-1}_j$ in \eqref{eq:lin3} gives
\begin{align*}
\max\left(\sum_{j<i+1}U^{t}_j,\sum_{j<i+1}U^{t-1}_j+U^t_i-1\right)&=\sum_{j<i}U^{t-1}_j,
\end{align*}
which is equivalent to
\begin{align*}
\min\left(\sum_{j<i+1}(U^{t-1}_j-U^{t}_j),1-U^t_i\right)=\min\left(\sum_{j>i}(U^{t}_j-U^{t-1}_j),1-U^t_i\right)&=U^{t-1}_i,
\end{align*}
using conservation of mass \eqref{eq:cons U}. This is simply the downdate rule \eqref{eq:downdate rule} and hence \eqref{eq:lin3} is satisfied without restriction on $\omega$.

Finally, \eqref{eq:lin4} gives
\begin{align*}
\max\left(\sum_{j<i}U^{t-1}_j+\kappa-\omega,\sum_{j<i}U^t_j+U^t_i+\kappa-1\right)&=\sum_{j<i+1}U^t_j,
\intertext{and so}
\max\left(1+\sum_{j<i}U^{t-1}_j-\sum_{j\le i}U^t_j-\omega,0\right)&=1-\kappa.
\end{align*}
which may be written as
\begin{equation}\label{eq:sol1 (19)}
\max\left(1+Y^t_i-\omega,0\right)=\max(1-\omega,0),
\end{equation}
where $Y^t_i=\sum_{j>i}U^t_j-\sum_{j\ge i}U^{t-1}_j$ is the conserved density defined in \eqref{eq:Y}. If $V^t\le1$, then $Y^t_i=0$ for all $i$ (Lemma~\ref{lem:V^t}) and so the condition \eqref{eq:sol1 (19)} holds without restriction. Else $V^t>1$ and for all $i$, $1+Y^t_i\le\omax$ (Theorem~\ref{thm:cons X Y}) {since} $\omax>1$ (Corollary~\ref{cor:newcor}). If $\omega\ge\omax>1$ then \eqref{eq:sol1 (19)} is {trivially} satisfied for all $i$. On the other hand, if $\omega<\omax$, for some $j$, $1+Y^t_j = 1 + \max Y_i^t =\omax$ and \eqref{eq:sol1 (19)} {gives $\omax=\max(1,\omega)$, which means that $\omax$ either takes the value $1$ or $\omega$, both values leading to a contradiction.} Hence \eqref{eq:sol1 (16)} is satisfied for all $i$ if and only if $\omega\ge\omax$. 

It follows that the system of ultradiscrete linear equations \eqref{eq:lin1}--\eqref{eq:lin4} is satisfied by $\Phi^t_i=\sum_{j<i}U^{t-1}_j$ for all $i$ if and only if $\omega\ge\omax$.

\subsection{Solution $\Phi^t_i=\kappa i-\omega t+\sum_{j\ge i}U^t_j$}\label{sec:basicsol2}
The verification of this second solution is very similar to that of the first. For any $t$, substituting $\Phi^t_i=\kappa i-\omega t+\sum_{j\ge i}U^t_j$ in \eqref{eq:lin1} gives
\begin{align*}
\max\left(\kappa i-\omega t+\sum_{j\ge i+1}U^{t}_j,\kappa (i-1)-\omega t+\sum_{j\ge i-1}U^{t}_j\right)&=
\kappa i-\omega t+\sum_{j\ge i}U^{t}_j+\max(U^{t}_{i-1}-1,-U^{t}_{i})\nonumber
\intertext{and therefore,}
\max(U^{t}_{i-1}+U^{t}_{i}-\kappa,0)&=\max(U^{t}_{i-1}+U^{t}_{i}-1,0),
\end{align*}
and so by the argument used following \eqref{eq:sol1 (16)}, it is sufficient that $\omega\ge\omax$.

Next, substituting $\Phi^t_i=\kappa i-\omega t+\sum_{j\ge i}U^t_j$ in \eqref{eq:lin3} gives
\begin{align*}
\max\left(\kappa-\omega+\sum_{j\ge i+1}U^{t+1}_j,\kappa + \sum_{j\ge i}U^{t}_j-1\right)&=\sum_{j\ge i}U^{t}_j,
\end{align*}
which by similar algebra as in the case of the first solution can be seen to be equivalent to
\begin{align*}
\max\left(1+Y_i^{t+1}-\omega,0\right)=\max(1-\omega,0).
\end{align*}
Since, in case $V^t\le1$ we have that $V^{t+1}=V^t\le1$ and therefore $Y_j^{t+1}=0$, and since $V^t>1$ implies that $V^{t+1}>1$ as well, exactly the same reasoning as for \eqref{eq:sol1 (19)} leads to the conclusion that equation \eqref{eq:lin3} is satisfied if and only if $\omega\ge \omax$.

Substituting $\Phi^t_i=\kappa i-\omega t+\sum_{j\ge i}U^t_j$ in \eqref{eq:lin4} gives
\begin{align*}
\max\left(\sum_{j\ge i} U_j^t, \sum_{j\ge i} U_j^{t+1}+ U_i^t-1\right)=\sum_{j\ge i+1}U_j^{t+1},
\end{align*}
which is equivalent to
\begin{align*}
U_i^{t+1}=\min\left(\sum_{j\ge i}(U_j^{t+1}-U_j^t), 1-U_i^t\right)=\min\left(1-U_i^t, \sum_{j< i}(U_j^{t}-U_j^{t+1}) \right),
\end{align*}
which is just the update rule \eqref{eq:update rule} and hence equation \eqref{eq:lin4} is automatically satisfied.

In summary, we find that the requirement for $\Phi^t_i=\kappa i-\omega t+\sum_{j\ge i}U^t_j$ to satisfy the system  \eqref{eq:lin1}--\eqref{eq:lin4}  is again that $\omega\ge\omax$.

\begin{thm}\label{thm:gen sol}
For any $U^t_i$, the linear system \eqref{eq:lin1}--\eqref{eq:lin4} has solutions
\begin{equation}
  \Phi^t_i=\sum_{j<i}U^{t-1}_j\text{ and \ } \Phi^t_i=\kappa i-\omega t+\sum_{j\ge i}U^t_j,
\end{equation}
where $\omega\ge \omax$. The max-linear combination of these gives a two parameter $(\omega,\phi)$ family of solutions
\begin{equation}\label{eq:max sol}
  \Theta^t_i=\max\left(\sum_{j<i}U^{t-1}_j,\kappa (i-\varphi^t)+\sum_{j\ge i}U^t_j\right),
\end{equation}
where $\kappa=\min(1,\omega)$, $\varphi^t=\phi+ct$ and $c=\max(1,\omega)$. Moreover, there exist some $m^t\in\mathbb Z$, dependent on $t$, such that 
\begin{equation}\label{eq:max sol alt}
  \Theta^t_i=
  \begin{cases}
    \displaystyle\sum_{j<i}U^{t-1}_j&i<m^t\\
    \displaystyle\kappa (i-\varphi^t)+\sum_{j\ge i}U^t_j&i\ge m^t.
  \end{cases}
\end{equation}
\end{thm}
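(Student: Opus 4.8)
I would begin by disposing of the first two assertions, which need almost no new argument. The two preceding subsections already verify that each of $\sum_{j<i}U^{t-1}_j$ and $\kappa i-\omega t+\sum_{j\ge i}U^t_j$ solves \eqref{eq:lin1}--\eqref{eq:lin4} exactly when $\omega\ge\omax$. Using $\kappa c=\omega$ and $\varphi^t=\phi+ct$, the second argument of the maximum in \eqref{eq:max sol} is
\[
\kappa(i-\varphi^t)+\sum_{j\ge i}U^t_j=\Big(\kappa i-\omega t+\sum_{j\ge i}U^t_j\Big)-\kappa\phi,
\]
that is, the second basis solution shifted by the constant $-\kappa\phi$; by the $(\max,+)$-linearity of the system noted in Section~\ref{sec:dKdV and udKdV} it is again a solution, and hence so is the maximum \eqref{eq:max sol} of two solutions. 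The genuinely new content is therefore the piecewise description \eqref{eq:max sol alt}.

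For that I would abbreviate $A_i:=\sum_{j<i}U^{t-1}_j$ and $B_i:=\kappa(i-\varphi^t)+\sum_{j\ge i}U^t_j$ and set $D_i:=A_i-B_i$, so that $\Theta^t_i=\max(A_i,B_i)$ and \eqref{eq:max sol alt} amounts to showing $D_i>0$ for $i<m^t$ and $D_i\le0$ for $i\ge m^t$ at a suitable threshold $m^t$. The key step is to prove that $(D_i)$ is non-increasing in $i$. Computing the forward difference,
\[
D_{i+1}-D_i=(A_{i+1}-A_i)-(B_{i+1}-B_i)=U^{t-1}_i-(\kappa-U^t_i)=U^t_i+U^{t-1}_i-\kappa.
\]
Since $\omega\ge\omax$ gives $\kappa=\min(1,\omega)\ge\min(1,\omax)=\kmax$, and since the sharpened bound \eqref{eq:basic kmax} gives $U^t_i+U^{t-1}_i\le\kmax$, these combine to $D_{i+1}-D_i\le\kmax-\kappa\le0$, so $(D_i)$ is non-increasing.

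It then remains to locate the crossing. Because $U^t_i$ has finite support, for all sufficiently negative $i$ we have $A_i=0$ and $\sum_{j\ge i}U^t_j=\sum_i U^t_i$, the conserved total mass, so that $D_i=-\kappa(i-\varphi^t)-\sum_i U^t_i\to+\infty$ as $i\to-\infty$ (here $\kappa=\min(1,\omega)>0$); symmetrically $D_i\to-\infty$ as $i\to+\infty$. Together with monotonicity this furnishes a unique smallest index $m^t$ with $D_{m^t}\le0$, whence $D_i>0$ for every $i<m^t$ and $D_i\le D_{m^t}\le0$ for every $i\ge m^t$; this is precisely \eqref{eq:max sol alt}. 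I expect the monotonicity computation to be the crux, and its one delicate ingredient is the improved inequality \eqref{eq:basic kmax} of Remark~\ref{newposremark4} together with $\kappa\ge\kmax$: without this sharpening of the naive bound $U^t_i+U^{t-1}_i\le1$, the sign of $D_{i+1}-D_i$ could not be controlled when $\omega<1$, and the clean two-region structure would break down.
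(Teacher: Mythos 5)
Your proposal is correct and follows essentially the same route as the paper: the basic solutions are taken from the preceding verifications, the maximum is a solution by $(\max,+)$-linearity, and the split-point structure comes from showing the difference of the two branches is monotone via $\kappa\ge\kmax$ and \eqref{eq:basic kmax}, then locating the sign change from the asymptotics. Your $D_i$ is just the negative of the paper's $F^t_i$ from \eqref{eq:F}, and your choice of $m^t$ (the smallest index with $D_{m^t}\le0$) coincides with the paper's ``smallest nonnegative value of $F^t_i$'' prescription.
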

\begin{proof}
The basic solutions were verified in the preceding paragraphs and the fact that their maximum is also a solution follows from the general properties of $\max$. 

Define the difference of the two basic solutions 
\begin{equation}\label{eq:F}
F^t_i=\kappa(i-\varphi^t)+\sum_{j\ge i}U^{t}_j-\sum_{j<i}U^{t-1}_j.
\end{equation}
Since $\omega\ge\omax$ and so $\kappa\ge\kmax$, \eqref{eq:basic kmax} gives
\begin{equation}\label{eq:F inc}
F^t_{i+1}-F^t_i=\kappa-U^t_i-U^{t-1}_i\ge\kmax-U^t_i-U^{t-1}_i\ge0,
\end{equation}  
and so the sequence $F^t_i$ is weakly increasing in $i$. Notice that $F_i\to-\infty$ as $i\to-\infty$.  The alternative form of the solution \eqref{eq:max sol alt} then follows immediately. 
The {\em split point} $m^t$ is any of the (consecutive) integers for which $F^t_{m^t}$ attains its smallest nonnegative value.
\end{proof}

{Notice that the solution \eqref{eq:max sol alt} has exactly the asymptotic behaviour \eqref{udgeneig-asympt},
\begin{equation}
\Theta_i^t \sim \left\{\begin{array}{lr} 0 & ~(i\to-\infty) \\ \kappa ~(i - \varphi^t) & (i\to+\infty)\end{array}\right. ,
\end{equation}
and we conclude that with equations \eqref{eq:max sol alt} (or equivalently \eqref{eq:max sol})  we have found an explicit expression for a generic solution to the ultradiscrete linear problem. Besides the spectral parameter $\omega$ this solution has one more free parameter: a phase constant $\phi$ which fixes the asymptotic behaviour.}

\begin{exa}\label{exa:genericeigf}
This example illustrates the construction of a {such an} ultradiscrete eigenfunction. Consider the ultradiscrete state $U^0_i=\dots,0, 0, 0,0,0,1,\frac12,0,1,0,0,0,0,0,0\dots$ where the index of the first zero is 1. We observe that $\omax=\frac32$ and we choose $\omega=2\ge\omax$, giving $\kappa=1$, and choose $\phi=7$. We obtain
\[
\setlength{\arraycolsep}{0pt}
\renewcommand{\arraystretch}{1.5}
  \begin{array}{rCCCCCCCCCCCCCCCCCCCCCCCCCCCCCCCCCCCCCC}
U^0_i\ :&&0&0&0&0&0&1&\frac12&0&1&0&0&0&0&0&0\\
U^{-1}_i\ :&&0&0&0&\frac12&1&0&0&1&0&0&0&0&0&0&0\\
i-7+\sum_{j\ge i}U^{0}_j\ :&&\overline{\tfrac72}&\overline{\tfrac52}&\overline{\frac32}&\overline{\frac12}&\frac12&\frac32&\frac32&2&3&3&4&5&6&7&8\\
\sum_{j<i}U^{-1}_j\ :&&0&0&0&0&\frac12&\frac32&\frac32&\frac32&\frac52&\frac52&\frac52&\frac52&\frac52&\frac52&\frac52\\
F^0_i\ :&&\overline{\frac72}&\overline{\frac52}&\overline{\frac32}&\overline{\frac12}&0&0&0&\frac12&\frac12&\frac12&\frac32&\frac52&\frac72&\frac92&\frac{11}2
\end{array}
\]
The split point $m^0$ may be chosen to be any of the indices 5, 6, or 7 (giving the smallest nonnegative values of $F^0_i$) and then the ultradiscrete eigenfunction at $t=0$ is
\begin{equation}
\Theta^0_i=\max\left(i-7+\sum_{j\ge i}U^{0}_j,\sum_{j<i}U^{-1}_j\right)=\dots,0,0,0,0,\tfrac12,\tfrac32,\tfrac32,2,3,3,4,5,6,7,8,\dots.
\end{equation}

Now consider $t=1$:
\[
\setlength{\arraycolsep}{0pt}
\renewcommand{\arraystretch}{1.5}
  \begin{array}{rCCCCCCCCCCCCCCCCCCCCCCCCCCCCCCCCCCCCCC}
U^1_i\ :&&0&0&0&0&0&0&\frac12&1&0&1&0&0&0&0&0\\
U^{0}_i\ :&&0&0&0&0&0&1&\frac12&0&1&0&0&0&0&0&0\\
i-9+\sum_{j\ge i}U^{1}_j\ :&&\overline{\frac{11}2}&~\overline{\frac92}~&\overline{\frac72}&\overline{\frac52}&\overline{\frac32}&\overline{\frac12}&\frac12&1&1&2&2&3&4&5&6\\
\sum_{j<i}U^{0}_j\ :&&0&0&0&0&0&0&1&\frac32&\frac32&\frac52&\frac52&\frac52&\frac52&\frac52&\frac52\\
F^1_i\ :&&\overline{\frac{11}2}&~\overline{\frac92}~&\overline{\frac72}&\overline{\frac52}&\overline{\frac32}&\overline{\frac12}&\overline{\frac12}&\overline{\frac12}&\overline{\frac12}&\overline{\frac12}&\overline{\frac12}&\frac12&\frac32&\frac52&\frac72
\end{array}
\]
At this time, the split point $m^1$ may only be chosen to be index 12 {(as this gives the smallest nonnegative value of $F_i^1$)}
and the ultradiscrete eigenfunction at $t=1$ is
\begin{equation}
\Theta^1_i=\max\left(i-9+\sum_{j\ge i}U^{1}_j,\sum_{j<i}U^{0}_j\right)=\dots,0,0,0,0,0,0,1,\tfrac32,\tfrac32,\tfrac52,\tfrac52,3,4,5,6,\dots.
\end{equation}

\end{exa}

\subsection{Generic eigenfunctions and dressing transformations}\label{sec:dressing}
We can now explain the connection between Darboux transformations, defined in terms of generic ultradiscrete eigenfunctions $\Theta_i^t$ as given by \eqref{eq:max sol} or \eqref{eq:max sol alt}, and the work of Nakata \cite{MR2545618} who first introduced a B\"acklund transformation for the udKdV equation that acts as a dressing transformation.

The effect of a Darboux transformation on an ultradiscrete tau function $T^t_i$ and its corresponding $U^t_i$ is stated in \eqref{eq:ud darboux}. With $\Theta^t_i$ given by \eqref{eq:max sol} we have
\begin{equation}
\tilde T^t_i=T^t_i+\max\Big(\sum_{j<i}U^{t-1}_j,\kappa(i-\varphi^t)+\sum_{j\ge i}U^{t}_j\Big),
\label{eq:sym1}
\end{equation}
for $\omega\ge\omax$ and $\kappa=\min(1,\omega)>0$.

Now we use the discrete potential $Z^t_i:=T^t_i-T^{t+1}_i$ for $U^t_i$ we introduced in Section \ref{sec:sol prop}, such that $U^t_i=\Delta Z^t_i$ (cf. \eqref{eq:ud H sub}), and we impose the boundary condition $Z_-=-\frac12\sum_{j\in\mathbb{Z}}U_j^t$ for sufficiently large $-i$, for all $t$. 
This allows us to rewrite the (in practice, finite) sum $\sum_{j<i}U^{t}_j$ in terms of $T$-functions:
\begin{align}
\sum_{j<i}U^{t}_j=-Z_- + Z^t_i= \frac12\sum_{j\in\mathbb{Z}}U_j^t + T^{t}_i-T^{t+1}_i.
\end{align}
Using this expression at $t$ and $t-1$ in \eqref{eq:sym1}, we find
\begin{align}\nonumber
\tilde T^t_i&=T^t_i+\max\Big(\sum_{j<i}U^{t-1}_j,\kappa(i-\varphi^t)+\sum_{j\in\mathbb{Z}}U_j^t-\sum_{j< i}U^{t}_j\Big) \\
&= \frac12\sum_{j\in\mathbb{Z}}U_j^t + \tfrac12\kappa(i-\varphi^t)+\max\Big(\tfrac12\kappa(i-\varphi^t)+T^{t+1}_i,-\tfrac12\kappa(i-\varphi^t)+T^{t-1}_i\Big),
\end{align}
and using the gauge freedom for the $T$-functions ($\tilde T^t_i\sim \tilde T^t_i+\alpha i+\beta t+\gamma$ for appropriate constants $\alpha,\beta,\gamma$) we obtain an equivalent formula for the transformed $T$-function
\begin{align}
\label{eq:nakata}
\tilde T_i^t\sim\max\Big(\tfrac12\kappa(i-\varphi^t)+T^{t+1}_i,-\tfrac12\kappa(i-\varphi^t)+T^{t-1}_i\Big),
\end{align}
which is precisely the formula for the vertex operator form of the B\"acklund transformation for udKdV given in the BBS case \cite{MR2545618} (or real-valued case \cite{MR2738130}). In \cite{MR2545618}, \eqref{eq:nakata} was shown to transform $T$-functions that satisfy the ultradiscrete bilinear KdV equation  \eqref{eq:H udKdV} to functions $\tilde T$ that again satisfy the  same equation, provided that $\omega$ is not less than the maximum soliton mass contained in $U_i^t$.   

It is interesting to calculate the asymptotic value of the discrete potential $\tilde Z_i^t$ for the $T$-function given by \eqref{eq:nakata}. As explained in Section \ref{sec:sol prop}, for $i$ (or $-i$) sufficiently large, the asymptotic values of $Z_i^t$ no longer depend on $t$ and we can set $T_i^t - T_i^{t+1} = Z_\pm$ in $\tilde Z_i^t=\tilde T_i^t - \tilde T_i^{t+1}$ for \eqref{eq:nakata}, in both asymptotic regimes:
\begin{equation}
|i|\gg1 :\qquad \tilde Z_i^t = \max\Big(\tfrac12\kappa(i-\varphi^t)-Z_\pm,-\tfrac12\kappa(i-\varphi^t)+Z_\pm\Big)-\max\Big(\tfrac12\kappa(i-\varphi^{t+1})-2Z_\pm,-\tfrac12\kappa(i-\varphi^{t+1})\Big),
\end{equation}
from which we find that
\begin{equation}
\tilde Z_+ = Z_+ + \frac{\omega}{2}\qquad \text{and} \qquad \tilde Z_- = Z_- - \frac{\omega}{2}.
\end{equation}
Moreover, since the difference $Z_+-Z_-$ is equal to the total mass $\sum_{i\in\mathbb{Z}} U_i^t$ of the solution $U_i^t$ obtained from the discrete potential $Z_i^t$ (cf. Section \ref{sec:sol prop}), we immediately find that the dressing transformation has increased the mass of this solution by $\omega$:
\begin{equation}\label{eq:massloss}
\sum_{i\in\mathbb{Z}} \tilde U_i^t = \tilde Z_+ - \tilde Z_- = Z_+ - Z_- + \omega = \sum_{i\in\mathbb{Z}} U_i^t + \omega.
\end{equation}
Since we chose the gauge of the initial $T$-function such that $Z_-=-\frac12\sum_{i\in\mathbb{Z}}U_i^t$, this also shows that the transformed $T$-function given by \eqref{eq:nakata} preserves this boundary condition: $\tilde Z_-=-\frac12\sum_{i\in\mathbb{Z}}\tilde U_i^t$. When iterating the dressing Darboux transformation for increasing values of $\omega$, the link with Nakata's B\"acklund transformation will therefore always be exactly as explained above. 

Using the formula \eqref{eq:max sol alt} for $\Theta^t_i$ instead, \eqref{eq:sym1} becomes
\begin{equation}\label{eq:tilde T split}
\tilde T^t_i = \frac12\sum_{j\in\mathbb{Z}}U_j^t +
\begin{cases}
T^{t-1}_i&i<m^t\\
\kappa(i-\varphi^t)+T^{t+1}_i&i\ge m^t.  
\end{cases}
\end{equation}
As discussed in connection to Theorem \ref{thm:gen sol}, the split point $m^t$ is defined to be any of the integers for which the difference $F^t_{m^t}$ \eqref{eq:F} of the two basic solutions of Sections \ref{sec:basicsol1} and  \ref{sec:basicsol2} attains its smallest nonnegative value. The split point is thus not only $t$-dependent, but also depends on the phase parameter $\phi$.
This is a complicated implicit definition, since the limits in the summations in the formula for $F^t_i$ depend on $i$. We are unable to solve in general for $m^t$, in terms of $\phi$, although it is not difficult to compute $m^t$ in any given example. 
However, as for large $|i|$ all $T$-functions in \eqref{eq:ud H sub} are given by the same clause in \eqref{eq:tilde T split}, we can describe the transformed solution for large enough $|i|$:
\begin{equation}\label{eq:tilde U split}
\tilde U^t_i=
\begin{cases}
U^{t-1}_i&i<\min(m^{t-1},m^{t},m^{t+1})-1\\
U^{t+1}_i&i\ge\max(m^{t-1},m^{t},m^{t+1}).  
\end{cases}
\end{equation}
Hence we see that, roughly speaking, the effect of the dressing Darboux transformation is to downdate $U^t_i$ in the left part and update it in the right part in order to create a space for the new soliton, with mass $\omega$, to be inserted. This, by the way, shows that a dressing transformation of a state with finite support again yields a state with finite support. 

\begin{rem}
Notice that the fact that $\widetilde U_i^t$ has finite support, combined with the fact that the dressing \eqref{eq:nakata} maps udKdV tau functions to tau functions (as shown in \cite{MR2545618}) proves that a dressing Darboux transformation, using a generic solution to \eqref{eq:lin1}--\eqref{eq:lin4}, indeed maps solutions to the udKdV equation to solutions.
\end{rem}

\begin{exa}\label{exa:dressing}
Let us calculate the dressing of the state $U^0_i=\dots,0, 0, 0,0,0,1,\frac12,0,1,0,0,0,0,0,0,\dots$ (for which $\omax=3/2$) with the generic eigenfunction $\Theta_i^t$ we calculated for it in Example \ref{exa:genericeigf}, in the case of $\omega=2$ and phase constant $\phi=7$. Recall that the dressed state is given by $\widetilde U^0_i=U^0_i+\Theta^0_{i+1}+\Theta^{1}_i-\Theta^0_{i}-\Theta^{1}_{i+1}$. We have,
\[
\setlength{\arraycolsep}{0pt}
\renewcommand{\arraystretch}{1.5}
  \begin{array}{rCCCCCCCCCCCCCCCCCCCCCCCCCCCCCCCCCCCCCC}
U^0_i\ :&&0&0&0&0&0&1&\frac12&0&1&0&0&0&0&0&0\\
\Theta^{0}_i\ :&&0&0&0&0&\tfrac12&\tfrac32&\tfrac32&2&3&3&4&5&6&7&8\\
\Theta^{1}_i\ :&&0&0&0&0&0&0&1&\tfrac32&\tfrac32&\tfrac52&\tfrac52&3&4&5&6\\
\widetilde U_i^0\ :&&0&0&0&\frac12&1&0&\frac12&1&0&1&\frac12&0&0&0&0.
\end{array}
\]
That $\widetilde U_i^0$, compared to $U_i^0$,  has indeed gained a 2-soliton can be seen immediately on the time evolved states,
\[
\setlength{\arraycolsep}{0pt}
\renewcommand{\arraystretch}{1.5}
  \begin{array}{rCCCCCCCCCCCCCCCCCCCCCCCCCCCCCCCCCCCCCC}
\widetilde U_i^0\ :&&0&0&0&\frac12&1&0&\frac12&1&0&1&\frac12&0&0&0&0\\
\widetilde U^{1}_i\ :&&0&0&0&0&0&1&\tfrac12&0&1&0&\tfrac12&1&\tfrac12&0&0\\
\widetilde U^{-1}_i\ :&&0&\tfrac12&1&\tfrac12&0&1&\tfrac12&0&1&0&0&0&0&0&0,
\end{array}
\]
while  further analysis shows that, at $t=0$, the inserted 2-soliton is in full interaction with the other solitons:
\[
\setlength{\arraycolsep}{0pt}
\renewcommand{\arraystretch}{1.5}
  \begin{array}{rCCCCCCCCCCCCCCCCCCCCCCCCCCCCCCCCCCCCCC}
\widetilde U^{-1}_i-\widetilde U^{1}_i\ :&&0&\tfrac12&1&\tfrac12&0&0&0&0&0&0&\overline{\tfrac12}&\overline{1}&\overline{\tfrac12}&0&0\\
\widetilde X^0_i\ :&&0&0&\tfrac12&\tfrac32&2&2&2&2&2&2&2&\tfrac32&\tfrac12&0&0 ,
\end{array}\hskip.85cm
\]
where $\widetilde X_i^0 = \sum_{j< i}(\widetilde U^{-1}_j-\widetilde U^{1}_j)$. 
\end{exa}

In fact, the evolved states $\tilde U_i^{-1}$ and $\tilde U_i^1$ in the above example also show that the solitons that were already present in the original state $U_i^0$ are unaltered in the dressing transformation: asymptotically they are shifted in phase, but they are all still present, with their original masses intact. In order to explain why this is true in general, we first need to construct a transformation that actually reverses a dressing: a so-called undressing transformation.

\section{Undressing transformations and ultradiscrete bound states}\label{sec:bound states}

We have already seen in Lemma \ref{udDarboux} that to every dressing Darboux transformation from $U^t_i$ to $\tilde U^t_i$, defined by a generic eigenfunction $\Theta^t_i$ through \eqref{eq:ud darboux}, there corresponds an undressing transformation from $\tilde U^t_i$ to $U^t_i$, defined by $\tilde\Theta^t_i$, given in \eqref{eq:u 1/theta}. If the parameter $\omega>0$, then the dressing transformation adds a soliton of that mass and the undressing transformation will therefore again remove that soliton. 

In the previous section we obtained the expression 
\begin{equation}
\Theta^t_i=\max\Big(\kappa(i-\varphi^t)+\sum_{j\ge i}U^t_j,\sum_{j<i}U^{t-1}_j\Big),
\end{equation}
and, through \eqref{eq:u 1/theta}, we obtain
\begin{equation}\label{eq:exp tilde Th}
\tilde\Theta^t_i=\kappa \phi + \min\Big(\kappa(i-\varphi^t)-\sum_{j<i}U^{t-1}_j,-\sum_{j\ge i}U^t_j\Big),
\end{equation}
which, according to Lemma \ref{udDarboux} and Remark \ref{rem:BSdef}, is a bound state eigenfunction for the linear system \eqref{eq:lin1}--\eqref{eq:lin4} for the dressed solution $\tilde U _i^t$. 
However, for the purpose of obtaining an explicit undressed state $U^t_i$ from a given state $\tilde U^t_i$, the expression \eqref{eq:exp tilde Th} is of no use: it is not defined in terms of the known $\tilde U^t_i$ values, rather it is defined in terms of the unknown target values $U^t_i$. 
In this section we will obtain an alternative formula for this eigenfunction $\tilde\Theta^t_i$ expressed in terms of the known initial potential. From now on however we shall dispense with the notation $\ \tilde\ \ $ for the initial  state to which we wish to apply the undressing transformation, to stress that our construction is fully general and  does not rely on any prior dressing that might or might not have taken place. 

As explained in Remark \ref{newposremark5}, when $V^t=0$ there are no solitons and, as was done in Section \ref{sec:sol prop}, we can write an explicit solution to the ultradiscrete KdV equation (explicit in $i$ and $t$) for any such given state. In this section we shall therefore always assume that the state $U_i^t$ we want to apply the undressing transformation to is such that $V^t>0$ (or, equivalently, that $\omax>0$) and, as before, that it has finite support.

\subsection{Ultradiscrete bound state eigenfunctions}\label{sec:BSeig}
Consider the following $\min$-linear combination of the basic solutions to \eqref{eq:lin1}--\eqref{eq:lin4} that we discussed in sections \ref{sec:basicsol1} and \ref{sec:basicsol2}, for a given potential $U_i^t$:
\begin{equation}\label{eq:bar Theta}
\bar\Theta^t_i:=\min\Big(\kappa(i-\varphi^t)+\sum_{j\ge i} U^{t}_j,\sum_{j<i} U^{t-1}_j\Big).
\end{equation}
Note that the asymptotic form of the function $\bar\Theta^t_i$,
\begin{equation}\label{eq:asympt}
\bar\Theta^t_i\sim \sum_{j\in\mathbb Z} U^t_j +
\begin{cases}
\kappa(i-\varphi^t)&\text{as }i\to-\infty\\
0&\text{as }i\to\infty,  
\end{cases}
\end{equation}
is essentially the same as that for the bound state \eqref{eq:exp tilde Th}, up to an inconsequential renormalisation of the latter by $\sum_{j\in\mathbb Z} U^t_j-\kappa\phi$. It will turn out that up to this renormalisation, both functions are actually identical for all $i$ and $t$.

Let us first prove that $\bar \Theta_i^t$ also satisfies  the linear system \eqref{eq:lin1}--\eqref{eq:lin4} for $U_i^t$.
It has already been shown in \eqref{eq:F inc} that when $\omega\geq\omax$ the difference $ F^t_i$ of the two basic solutions, given by \eqref{eq:F},  is weakly increasing in $i$ and so we may express \eqref{eq:bar Theta} as
\begin{equation}\label{eq:bar Theta split}
\bar \Theta^t_i=
\begin{cases}
\displaystyle\kappa(i-\varphi^t)+ \sum_{j\ge i} U^{t}_j&i<m^t\\
\displaystyle\sum_{j<i} U^{t-1}_j&i\ge m^t
\end{cases}
\end{equation}
for some $m^t\in\mathbb Z$. Also, we compute the difference in $t$,
\begin{equation}\label{eq:diff F}
F^{t+1}_{i}- F^t_i=\sum_{j\ge i}( U^{t+1}_j- U^{t-1}_j)-\omega= X^t_i-\omega,
\end{equation}
with $X_i^t$ as in \eqref{eq:X} and where we have used the conservation of the total mass \eqref{eq:cons U}. 

From Theorem~\ref{thm:cons X Y} we have seen that $\omax=\max_i X^t_i$. Now we choose $\omega=\omax$ and $m^t$ to be any index at which this maximum is attained. In other words, $m^t$ is chosen such that $ X^t_{m^t}=\omax$. Then we have
\begin{equation}\label{eq:diff F at m}
F^{t+1}_{m^t}= F^t_{m^t}.
\end{equation}

Notice that this definition of the split point $m^t$ differs from that in \eqref{eq:max sol alt}, in Theorem \ref{thm:gen sol}. We shall see however that there exists a special choice for the phase constant $\phi$ such that $\bar\Theta_i^t$ indeed becomes a solution to the linear system  \eqref{eq:lin1}--\eqref{eq:lin4}.

The set
\begin{equation}\label{def:Mt}
M^t=\{i\in\mathbb Z: X^t_i=\omax\}
\end{equation}
may be written as the union $M^t=M^t_1\cup M_2^t\cup\cdots\cup M^t_k$ of one or more sets of consecutive integers, corresponding to disconnected peaks or plateaux of maximal height in the graph of $X^t_i$, plotted as a function of $i$ at fixed $t$. As discussed in Remark~\ref{rem:omega max}, these indicate the location of one or more solitons of maximal mass. For example, in the first example in Examples~\ref{exa:omax} in Section~\ref{sec:conserved}, we find $\omax=3$ with $M^t=\{5,6,7,8,9,10\}\cup\{14,15,16,17\}$, which indicates the presence of (at least) two solitons with mass 3 and in the second example we find $\omax=2$ with $M^t=\{5,6,7,8,9,10,11\}$ indicating the existence of at least one soliton with mass 2. In fact, in this example there are two mass-2 solitons.

Next, since $V^t>0$ (and hence $\omax>0$) we have $\kmax>0$ and we can choose the phase constant $\phi=\phi_{\text{max}}$ to be such that $F^t_{m^t}=0$, giving 
\begin{equation}\label{eq:varphi}
  \varphi^t_{\text{max}}=m^t+\dfrac{1}{\kmax}\Big(\sum_{j\ge m^t} U^{t}_j-\sum_{j<m^t} U^{t-1}_j\Big)=m^t+\dfrac{1}{\kmax}\Big(\sum_{j\ge m^t} U^{t-1}_j-\sum_{j<m^t} U^{t}_j\Big),
\end{equation}
and so the phase constant is 
\begin{equation}\label{eq:phi}
\phi_{\text{max}}=m^t-c_{\text{max}}t+\dfrac{1}{\kmax}\Big(\sum_{j\ge m^t} U^{t-1}_j-\sum_{j<m^t} U^{t}_j\Big).
\end{equation}
Using \eqref{eq:diff F at m} it is easily confirmed that this expression---despite its appearance---is indeed independent of $t$.

Note that because we chose $\phi_{\max}$ such as to have $F^{t}_{m^t}=0$, the split point $m^t$ now plays exactly the same role as for the generic solution in Theorem~\ref{thm:gen sol}. Furthermore, as $F^{t}_{m^t}=F^{t+1}_{m^t}=0$, we may choose the split point $m^{t+1}$ in the definition of $\bar\Theta^{t+1}_i$ to be $m^t$, as for $\bar\Theta^{t}_i$. This \emph{does not} however imply, in general, that $m^t\in M^{t+1}$. 
Thus, in summary, we have shown that
\begin{equation}\label{eq:overline Theta}
\bar \Theta^t_i=
\begin{cases}
\displaystyle\kmax(i-\varphi^t_{\text{max}})+\sum_{j\ge i} U^{t}_j&i<m^t\\
\displaystyle\sum_{j<i} U^{t-1}_j&i\ge m^t,
\end{cases}
\end{equation}
and 
\begin{equation}\label{eq:overline Theta +1}
\bar \Theta^{t+1}_i=
\begin{cases}
\displaystyle\kmax(i-\varphi^t_{\text{max}})-\omax+\sum_{j\ge i} U^{t+1}_j&i<m^t\\
\displaystyle\sum_{j<i} U^{t}_j&i\ge m^t,
\end{cases}
\end{equation}
where $m^t\in M^t$, as defined in \eqref{def:Mt}, and $\varphi^t_{\text{max}}=\phi_{\text{max}}+c_{\text{max}} t$ can be calculated directly from \eqref{eq:varphi}. 

\begin{thm}\label{th:BS}
If $V^t>0$, the pair $\bar\Theta^t_i$ and $\bar\Theta^{t+1}_i$ given by \eqref{eq:overline Theta} and \eqref{eq:overline Theta +1} respectively, satisfy the ultradiscrete linear equations \eqref{eq:lin1}--\eqref{eq:lin4}. 
\end{thm}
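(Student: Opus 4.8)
The plan is to exploit the structure already exposed in \eqref{eq:overline Theta} and \eqref{eq:overline Theta +1}: both $\bar\Theta^t_i$ and $\bar\Theta^{t+1}_i$ are obtained by gluing the second basic solution of Section~\ref{sec:basicsol2} (the \emph{left branch}) to the first basic solution of Section~\ref{sec:basicsol1} (the \emph{right branch}) across a single common split point $m^t\in M^t$, with $M^t$ as in \eqref{def:Mt}. Writing $A^t_i:=\kmax(i-\varphi^t_{\text{max}})+\sum_{j\ge i}U^t_j$ for the left branch and $B^t_i:=\sum_{j<i}U^{t-1}_j$ for the right branch, so that $\bar\Theta^t_i=A^t_i$ for $i<m^t$ and $\bar\Theta^t_i=B^t_i$ for $i\ge m^t$, the choice $\omega=\omax$ together with $X^t_{m^t}=\omax$ forces, through \eqref{eq:diff F at m}, that $F^t_{m^t}=F^{t+1}_{m^t}=0$; hence the two branches take equal values at $i=m^t$ at \emph{both} times $t$ and $t+1$. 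Since each branch is on its own a bona fide solution of \eqref{eq:lin1}--\eqref{eq:lin4} for every $i$ (this is exactly what Sections~\ref{sec:basicsol1} and \ref{sec:basicsol2} establish, and $\omega=\omax\ge\omax$ is admissible), I would reduce the whole statement to checking the handful of indices $i$ at which the stencil of an equation genuinely straddles the seam at $m^t$.

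First I would dispose of the equations whose stencil involves only the two consecutive indices $i$ and $i+1$, namely \eqref{eq:lin3} and \eqref{eq:lin4}. For each $i$ the pair $(i,i+1)$ lies either in $\{\,j\le m^t\,\}$ or in $\{\,j\ge m^t\,\}$, and because the two branches coincide at $m^t$, every value occurring in the stencil can be read off from a single branch; the equation then holds because that branch solves the system, so no genuine computation is needed for \eqref{eq:lin3} or \eqref{eq:lin4}. The same coincidence at $m^t$ settles the width-three equations \eqref{eq:lin1} and \eqref{eq:lin2} for all $i\ne m^t$: for $i\le m^t-1$ the stencil $\{i-1,i,i+1\}$ is evaluated entirely on the left branch (its right end, when it equals $m^t$, agrees with that branch), and symmetrically for $i\ge m^t+1$ on the right branch.

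The only genuine verifications are therefore \eqref{eq:lin1} and \eqref{eq:lin2} at the single index $i=m^t$, where $\Phi_{m^t-1}$ is supplied by the left branch while $\Phi_{m^t}$ and $\Phi_{m^t+1}$ come from the right branch. Using $A^t_{m^t}=B^t_{m^t}$ and the elementary slope formulas $A^t_{i+1}-A^t_i=\kmax-U^t_i$ and $B^t_{i+1}-B^t_i=U^{t-1}_i$, I would collapse each equation, after cancelling the common value $\bar\Theta^t_{m^t}$, to a scalar identity in the local potential values. For \eqref{eq:lin1} this is
\[
-\kmax+\max(U^{t-1}_{m^t},\,U^t_{m^t-1})=\max(U^t_{m^t-1}-1,\,-U^t_{m^t}),
\]
and for \eqref{eq:lin2} the mirror identity
\[
-\kmax+\max(U^t_{m^t},\,U^{t+1}_{m^t-1})=\max(U^t_{m^t}-1,\,-U^t_{m^t-1}).
\]

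Establishing these two identities is where the real content lies, and I expect it to be the main obstacle: the naive gluing could destroy a $\max$ if its dominant term were inherited from the wrong branch, and it is precisely the choice $\omega=\omax$ together with the local soliton structure that prevents this. I would split into the cases $V^t\le1$ and $V^t>1$. Inequality \eqref{eq:basic kmax} (used at $m^t$, and at $m^t-1$ after the shift $t\mapsto t+1$) bounds the wrong-branch terms by $\kmax$, while Lemma~\ref{lem:m}(ii), which gives $U^t_{m^t-1}+U^{t-1}_{m^t-1}=\kmax$ because $X^t_{m^t}=\omax$ is a global maximum, supplies the matching equality. In the case $V^t>1$ (so $\kmax=1$) the two clauses of Lemma~\ref{lem:m}(i) are exactly what is required: its first clause closes the identity for \eqref{eq:lin1} and its second clause closes the one for \eqref{eq:lin2}, each forcing the relevant ``$=1$'' precisely when $U^t_{m^t-1}+U^t_{m^t}<1$. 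In the case $V^t\le1$ I would instead use that $(m^t-1,m^t)$ realises the maximal local sum $V^t=\kmax$ (combining Lemma~\ref{lem:m}(ii) with the tandem relation $U^{t-1}_i=U^t_{i+1}$ of Lemma~\ref{lem:V^t} and Corollary~\ref{cor:cons}), which forces the inequalities, such as $U^t_{m^t+1}\le U^t_{m^t-1}$, that make both scalar identities hold.
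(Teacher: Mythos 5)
Your proposal is correct and follows essentially the same route as the paper: reduce everything to the two width-three equations \eqref{eq:lin1}, \eqref{eq:lin2} at the single seam $i=m^t$ (using that the two branches are basic solutions agreeing at $m^t$ at both time levels), and then close the resulting scalar $\max$-identities — which are algebraically the same as the paper's condition \eqref{eq:proof} — via Lemma~\ref{lem:m}(i) together with \eqref{eq:naive} when $V^t>1$, and via Lemma~\ref{lem:m}(ii) plus the tandem relation $U^{t-1}_i=U^t_{i+1}$ when $V^t\le1$. No gaps; the argument matches the paper's proof in both structure and the lemmas invoked.
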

\begin{proof}
For linear equations \eqref{eq:lin1} and \eqref{eq:lin2} we may use Theorem~\ref{thm:gen sol} to prove that they are satisfied by $\bar \Theta^t_i$ and $\bar \Theta^{t+1}_i$ whenever all of the indices $i-1$, $i$ and $i+1$ lie in the same interval $(-\infty,m^t]$ or $[m^t,\infty)$. It remains to prove \eqref{eq:lin1} and \eqref{eq:lin2} when $i=m^t$. This is done below. Equations \eqref{eq:lin3} and \eqref{eq:lin4} only involve indices $i$ and $i+1$ which both lie in either $(-\infty,m^t]$ or $[m^t,\infty)$ and so Theorem~\ref{thm:gen sol} deals with all cases.

Because of the choice $X_{m^t}^t=\omax$, satisfying \eqref{eq:lin1} and \eqref{eq:lin2} at $i=m^t$ requires that
\begin{align}
\max(U^{t-1}_{m^t}+ U^{t}_{m^t}, U^{t}_{m^t-1}+ U^{t}_{m^t})&=
\max(U^{t}_{m^t}+\ U^{t}_{m^t-1}, U^{t+1}_{m^t-1}+ U^{t}_{m^t-1})\nonumber\\
\label{eq:proof}
&=\max( U^{t}_{m^t}+ U^{t}_{m^t-1}-1,0)+\kmax.
\end{align}
First consider the case in which $V^t\le1$. Then certainly $U^t_{m^t-1}+ U^t_{m^t}\le1$ and for all $i,t$, $U^{t-1}_i= U^t_{i+1}$ (cf. Corollary \ref{cor:cons}) and the maximum $\kmax=\omax$ is attained at $m^t$ (Corollary \ref{cor:newcor}). Hence the requirement \eqref{eq:proof} becomes
\begin{align*}
\max(U^{t}_{m^t+1}+ U^{t}_{m^t}, U^{t}_{m^t-1}+ U^{t}_{m^t})=\max( U^{t}_{m^t}+ U^{t}_{m^t-1}, U^{t}_{m^t-2}+ U^{t}_{m^t-1})=\kmax,
\end{align*}
which is satisfied since $U^t_{m^t}+ U^t_{m^t-1}=\omax=\kmax$ is maximal.  

Finally we deal with the case $\tilde V^t>1$ in which $\omax>1$ and $\kmax=1$. We proved in Lemma~\ref{lem:m} (i) that if $V^t>1$ and $X^t_m$ is maximal then either
\begin{equation}\label{eq:U cond 1}
U^t_{m^t-1}+ U^t_{m^t}\ge1
\end{equation} 
or
\begin{equation}\label{eq:U cond 2}
U^t_{m^t-1}+ U^t_{m^t}<1\text{~ and ~} U^{t-1}_m+ U^t_{m^t}= U^t_{m^t-1}+ U^{t+1}_{m^t-1}=1.
\end{equation} 

If \eqref{eq:U cond 1} holds then \eqref{eq:proof} becomes 
\begin{align*}
\max( U^{t-1}_{m^t}, U^{t}_{m^t-1})= U^{t}_{m^t-1}\text{ ~and ~}
\max( U^{t}_{m^t}, U^{t+1}_{m^t-1})= U^{t}_{m^t},
\end{align*}
and so $ U^{t-1}_{m^t}\le U^{t}_{m^t-1}$ and $ U^{t}_{m^t}\ge U^{t+1}_{m^t-1}$. These are simple consequences of \eqref{eq:naive} when \eqref{eq:U cond 1} holds.

On the other hand, if \eqref{eq:U cond 2} holds then \eqref{eq:proof} becomes
\begin{align*}
\max(1, U^{t}_{m^t-1}+ U^{t}_{m^t})=\max( U^{t}_{m^t}+ U^{t}_{m^t-1},1)=1,
\end{align*}
which is identically satisfied. 
\end{proof}

Finally, using the explicit dependence of $\varphi^t_{\text{max}}$ on the split point $m^t$ given by \eqref{eq:varphi}, we obtain an alternative expression for the eigenfunction \eqref{eq:overline Theta}:
\begin{equation}\label{eq:bar Theta alt}
\bar \Theta^t_i=
\begin{cases}
\displaystyle\kmax(i-m^t)+ \sum_{j=i}^{m^t-1} U^{t}_j + \sum_{j< m^t} U^{t-1}_j&i\le m^t\\
\displaystyle\sum_{j<i} U^{t-1}_j&i\ge m^t,
\end{cases}
\end{equation}
noting that the two cases in the formula agree at $i=m^t$. 

\begin{prop}\label{prop: cluster}
Let $\bar\Theta^t_i(m_1)$ and $\bar\Theta^t_i(m_2)$ be given by \eqref{eq:bar Theta alt} for $m^t=m_1,m_2$ where $m_1<m_2\in M^t$. Then for all $i$,
\begin{equation}
\bar\Theta^t_i(m_1)\ge\bar\Theta^t_i(m_2).
\end{equation}
In particular,  
\begin{equation}
\bar\Theta^t_i(m_1)=\bar\Theta^t_i(m_2),
\end{equation}
if and only if $m_1$ and $m_2$ are in the same block $M^t_j\subset M^t$.
\end{prop}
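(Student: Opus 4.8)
The plan is to deduce both assertions from a single direct computation of the difference $D_i := \bar\Theta^t_i(m_1)-\bar\Theta^t_i(m_2)$, using the explicit two-clause formula \eqref{eq:bar Theta alt}. Since $m_1<m_2$, the two split points partition $\mathbb Z$ into three ranges, $i\le m_1$, $m_1\le i\le m_2$ and $i\ge m_2$, and in each range I would substitute the appropriate clause of \eqref{eq:bar Theta alt} for each of $\bar\Theta^t_i(m_1)$ and $\bar\Theta^t_i(m_2)$. In the middle range the two eigenfunctions are governed by different clauses (the second clause for $m_1$, the first for $m_2$), which is the only mildly delicate bookkeeping; in the outer ranges both are governed by the same clause, and for $i\ge m_2$ they are literally identical.

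Telescoping the partial sums, I expect every range to collapse to a sum of the same elementary non-negative terms, namely
\begin{equation}
D_i=\begin{cases}
\displaystyle\sum_{j=m_1}^{m_2-1}(\kmax-U^t_j-U^{t-1}_j)& i\le m_1,\\
\displaystyle\sum_{j=i}^{m_2-1}(\kmax-U^t_j-U^{t-1}_j)& m_1\le i\le m_2,\\
0& i\ge m_2.
\end{cases}
\end{equation}
By the sharpened bound \eqref{eq:basic kmax} each summand $\kmax-U^t_j-U^{t-1}_j$ is non-negative, so $D_i\ge0$ for all $i$, which is the first claim. Moreover $D_i$ attains its largest value on the range $i\le m_1$, so $D_i=0$ for \emph{all} $i$ if and only if $\sum_{j=m_1}^{m_2-1}(\kmax-U^t_j-U^{t-1}_j)=0$; since the summands are non-negative this is equivalent to the pointwise saturation $U^t_j+U^{t-1}_j=\kmax$ for every $m_1\le j\le m_2-1$.

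It then remains to identify this saturation condition with $m_1$ and $m_2$ lying in the same block of $M^t$, and this equivalence is where the real work lies. For the implication that same block forces saturation, I would apply Lemma~\ref{lem:m}(ii) at each index $m\in\{m_1+1,\dots,m_2\}$: every such $m$ lies in $M^t$, so the lemma yields $U^t_{m-1}+U^{t-1}_{m-1}=\kmax$, which is precisely saturation on $\{m_1,\dots,m_2-1\}$. For the converse I would use the two auxiliary sequences $F^t_i$ and $F^{t+1}_i$ of \eqref{eq:F}: saturation forces $F^t_{j+1}-F^t_j=\kmax-U^t_j-U^{t-1}_j=0$ for $m_1\le j\le m_2-1$, so $F^t$ is constant on $[m_1,m_2]$; since $F^{t+1}_i-F^t_i=X^t_i-\omax$ by \eqref{eq:diff F} and both endpoints satisfy $X^t_{m_1}=X^t_{m_2}=\omax$, the sequence $F^{t+1}$ agrees with $F^t$ at $m_1$ and $m_2$, and being itself weakly increasing (by the argument of \eqref{eq:F inc}) it must be constant on the whole interval, whence $X^t_i=\omax+(F^{t+1}_i-F^t_i)=\omax$ for every $m_1\le i\le m_2$, placing all these indices in $M^t$. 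The main obstacle is exactly this two-way equivalence: monotonicity of $F^t$ and $F^{t+1}$ alone cannot produce saturation from block membership (it only forces $U^{t-1}_i=U^{t+1}_i$ on the interval), so Lemma~\ref{lem:m}(ii) is indispensable for that direction, whereas the reverse implication hinges on the endpoints lying in $M^t$ so that the weakly increasing $F^{t+1}$ is pinned to the constant $F^t$ across the entire interval.
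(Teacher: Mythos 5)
Your computation of the difference $\bar\Theta^t_i(m_1)-\bar\Theta^t_i(m_2)$ over the three ranges, the appeal to \eqref{eq:basic kmax} for the inequality, and the use of Lemma~\ref{lem:m}(ii) to get saturation $U^t_j+U^{t-1}_j=\kmax$ from same-block membership all coincide with the paper's proof. Where you genuinely diverge is in the converse direction (saturation implies same block): the paper argues by contraposition, noting that separated maxima of $X^t_i$ force some $k$ with $\Delta X^t_k<0$ between $m_1$ and $m_2$, whence Proposition~\ref{prop:X ineq} gives $U^t_k+U^{t-1}_k<\kmax$ and hence strict inequality of the eigenfunctions somewhere. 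You instead argue directly: saturation makes $F^t$ constant on $[m_1,m_2]$, the identity $F^{t+1}_i-F^t_i=X^t_i-\omax$ of \eqref{eq:diff F} pins the weakly increasing $F^{t+1}$ to that constant at both endpoints and hence throughout, forcing $X^t_i=\omax$ on the whole interval. Both arguments are correct; the paper's is shorter because Proposition~\ref{prop:X ineq} already packages the needed implication, while yours is self-contained at the level of the $F$-sequences and makes transparent exactly why the two endpoints being in $M^t$ is what propagates maximality of $X^t_i$ across the gap. Either way the proposal is sound.
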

\begin{proof}
From \eqref{eq:bar Theta alt},
\begin{equation}
\bar\Theta^t_i(m_2)-\bar\Theta^t_i(m_1)=
\begin{cases}
\displaystyle\sum_{j=m_1}^{m_2-1}( U^t_j+ U^{t-1}_j-\kmax)&i\le m_1\\
\displaystyle\sum_{j=i}^{m_2-1}( U^t_j+ U^{t-1}_j-\kmax)&m_1<i<m_2\\
0&i\ge m_2.
\end{cases}
\end{equation}
Then, using \eqref{eq:basic kmax}, $ U^t_j+ U^{t-1}_j\le\kmax$ for all $j$ and it follows immediately that $\bar\Theta^t_i(m_1)\ge\bar\Theta^t_i(m_2)$.

If $m_1,m_2$ belong to the same block then, by Lemma~\ref{lem:m} (ii), $ U^t_j+ U^{t-1}_j=\kmax$ for all $m_1-1\le j\le m_2-1$, and hence $m_1\le j\le m_2-1$, and so it follows that $\bar\Theta^t_i(m_1)=\bar\Theta^t_i(m_2)$.

If $m_1,m_2$ do not belong to the same block then, since $ X^t_{m_1}$ and $ X_{m_2}^t$ are separated maxima, there is some $m_1\le k<m_2$ for which $\Delta  X_k<0$. Then by Proposition~\ref{prop:X ineq}, $ U^t_k+ U^{t-1}_k<\kmax$ and so $\bar\Theta^t_i(m_1)>\bar\Theta^t_i(m_2)$ for some $i$. 

Hence $\bar\Theta^t_i(m_1)=\bar\Theta^t_i(m_2)$ if and only if $m_1$ and $m_2$ belong to the same block $M_j^t$.
\end{proof}
Note that this also implies that, within the same block $M_j^t$, the value of $\phi_{\max}$ calculated from \eqref{eq:phi} is the same regardless the value of $m^t$ one chooses. This follows immediately from the asymptotic behaviour of the eigenfunction $\bar\Theta_i^t$:
\begin{equation}
\bar\Theta^t_i\sim \sum_{j\in\mathbb{Z}} U_j^t +
\begin{cases}
\kmax(i-\phi_\text{max}-c_\text{max} t)&\text{if }i<\ell^t\\
0&\text{if } i>r^t,  
\end{cases}
\end{equation}
with $\ell^t$ the left-most boundary of the support of $U_i^t$ and $r^t$ the right-most boundary of the support of $U_i^{t-1}$. 

This asymptotic behaviour also implies that for such eigenfunctions, the ultradiscrete analogue $S\!E(\bar\Theta_i^t)$ of a square eigenfunction 
\begin{equation}\label{def: SE}
S\!E(\bar\Theta_i^t):=\bar\Theta_i^t+\bar\Theta_{i-1}^t+\kappa_\text{max}(c_\text{max} t + 1 - i)
\end{equation}
has asymptotic behaviour
\begin{equation}
S\!E(\bar\Theta_i^t) \sim 2\sum_{j\in\mathbb{Z}} U_j^t +
\begin{cases}
\kmax(i-2\phi_\text{max}-c_\text{max} t)&\text{if }i<\ell^t\\
\kappa_\text{max}(c_\text{max} t + 1 - i)&\text{if } i>r^t+1,  
\end{cases}
\end{equation}
meaning that it decresases to $-\infty$ at both limits $i\to\pm\infty$, taking a finite maximal value in between. This shows that the eigenfunction $\bar\Theta_i^t$ is a natural ultradiscrete counterpart to the bound state eigenfunctions for the discrete KdV Lax pair we discussed in Section \ref{sec:darboux} (cf. Remark \ref{rem:discreteBS}). In this sense, Proposition \ref{prop: cluster} is telling us that in the ultradiscrete case the spectrum for the linear system is not simple: we can have different eigenfunctions for the same value of $\omax$. This reflects the possibility of having several solitons with the same mass in any given solution to the ultradiscrete KdV equation.

Notice that the middle, non-asymptotic, part of the ultradiscrete squared eigenfunction $S\!E(\bar\Theta_i^t)$ has a relatively small extent, typically smaller than the size of the support of the state $U_i^t$, and that the asymptotic wave fronts in $S\!E(\bar\Theta_i^t)$, for $i\to\pm\infty$, both move in the positive $i$ direction at the same speed $c_{\text{max}}$. We have already seen that 
$c_{\max}=\max(1,\omax)$ is the maximal soliton speed in the state $U_i^t$. The fact that the wave front facing the positive $i$ direction evolves unperturbed implies that, asymptotically, nothing can overtake it and that, in fact, none of the constituent parts of $U_i^t$ can move towards $i\sim+\infty$ at a speed greater than $c_{\text{max}}$.  Hence, the $\omax$ solitons are indeed the fastest structures contained in $U_i^t$.

We will now show that a Darboux transformation using $\bar\Theta_i^t$ removes one of these fastest solitons from $U_i^t$.

\subsection{The undressing transformation}\label{sec:undressing}
In this section we consider the effect of a Darboux transformation defined in terms of $\bar\Theta^t_i$. We will see that it gives a soliton-removing/undressing transformation $U^t_i\to \widehat U^t_i$. We shall also describe the action of this Darboux transformation on the $T$-functions for the state $U_i^t$ we wish to undress.

As we saw in Section \ref{sec:dressing}, the effect of a dressing transformation on a given solution of udKdV is very complicated and it seems impossible to give a simple, explicit, expression for the dressed state. The best we could do was the asymptotic formula \eqref{eq:tilde U split}.  The effect of an undressing Darboux transformation that removes a soliton can be described much more precisely. 

The undressing eigenfunction $\bar\Theta^t_i$ given by \eqref{eq:bar Theta alt} has a split point which is the same for $t$ and $t+1$ and which is completely determined by the state $U_i^t$ to which the transformation will be applied.
The target state for the transformation (or `undressed' state) will be denoted by $\widehat U^t_i$. Then, from \eqref{eq:bar Theta alt} we have
\begin{equation}
  \bar\Theta^t_{i+1}-\bar\Theta^t_i=
  \begin{cases}
    \kmax- U^{t}_i&i<m^t\\
     U^{t-1}_i&i\ge m^t.
  \end{cases}
\end{equation}
Since the split point for $\bar\Theta_i^{t+1}$ can be taken to be the same as for $\bar\Theta_i^{t}$,  the result of the undressing Darboux transformation is remarkably simple:
\begin{equation}\label{eq:undress U}
  \widehat U^t_i=U^t_i+(\bar\Theta^t_{i+1}-\bar\Theta^t_i)-(\bar\Theta^{t+1}_{i+1}-\bar\Theta^{t+1}_i)=
  \begin{cases}
    U^{t+1}_i&i<m^t\\
    U^{t-1}_i&i\ge m^t,
  \end{cases}
\end{equation}
(to be compared with \eqref{eq:tilde U split} for the dressing transformation). This formula tells us, roughly speaking, that the undressed solution is obtained by removing the soliton near to $m^t$ and filling the gap by joining the update of the left part and the downdate of the right part. The target state in the undressing is therefore again of finite support. Moreover, it is easily verified that the total mass of the initial state $ U_i^t$ is indeed reduced by $\omax$ in the undressing:
\begin{equation}
\sum_{i\in\mathbb{Z}} \widehat U_i^t = \sum_{i<m^t}  U_i^{t+1} + \sum_{i\ge m^t}  U_i^{t-1}= \sum_{i\in\mathbb{Z}}  U_i^{t-1} -X_{m^t}^t = \sum_{i\in\mathbb{Z}}  U_i^{t}-\omax,
\end{equation}
with $X_i^t$ as defined in \eqref{eq:X}. It is worth emphasizing that since $\bar\Theta_i^t$ can only be constructed for $\omega=\omax$, the undressing transformation we just obtained can only remove a soliton that has maximal mass in $U_i^t$.

\begin{exa}\label{exa:undressing1}
As an example, let us calculate the eigenfunction $\bar\Theta_i^t$ and target state $\widehat U_i^t$ for the undressing transformation that reverses the dressing of Example \ref{exa:dressing}, in which a soliton with mass $\omega=2$ was added to the initial state $\dots,0, 0, 0,0,0,1,\frac12,0,1,0,0,0,0,0,0,\dots$ at $t=0$, for a phase constant $\phi=7$. We therefore start from the state $U_i^0\ : \dots,0,0,0,\frac12,1,0,\frac12,1,0,1,\frac12,0,0,0,0,\dots$ (the $\tilde U_i^t$ state of Example \ref{exa:dressing}) and first analyse its soliton content. This will tell us which solitons can be removed and what the appropriate split points are.
\[
\setlength{\arraycolsep}{0pt}
\renewcommand{\arraystretch}{1.5}
  \begin{array}{rCCCCCCCCCCCCCCCCCCCCCCCCCCCCCCCCCCCCCC}
U_i^0\ :&&0&0&0&\frac12&1&0&\frac12&1&0&1&\frac12&0&0&0&0\\
X^0_i\ :&&0&0&\tfrac12&\tfrac32&2&2&2&2&2&2&2&\tfrac32&\tfrac12&0&0\\
\end{array}
\]
We find that an $\omax=2$ soliton can be removed at split points $m^t=5, \cdots, 11$. As all these split points belong to the same block in $X_i^0$ the appropriate value for $\phi_{\max}$ is unique and can, for example, be obtained from \eqref{eq:phi} at $m^t=5$ and $t=0$:
$$\phi_{\max}=5 + \sum_{j\ge5} U_j^{-1} - \sum_{j<5} U_j^0 = 5 + \tfrac52 - \tfrac12 = 7.$$
We will see shortly that it is not a coincidence that this value exactly matches the value of the phase constant that was used in the dressing. The downdate $U_i^{-1}$, together with the update $U_i^1$ of the initial state, can also be used to obtain the undressed state $\widehat U_i^0$ by means of \eqref{eq:undress U}:
\[
\setlength{\arraycolsep}{0pt}
\renewcommand{\arraystretch}{1.5}
  \begin{array}{rCCCCCCCCCCCCCCCCCCCCCCCCCCCCCCCCCCCCCC}
U^{1}_i\ :&&0&0&0&0&[0&1&\tfrac12&0&1&0&\tfrac12&1&\tfrac12&0&0]\\

U^{-1}_i\ :&&[0&\tfrac12&1&\tfrac12]&\,0&1&\tfrac12&0&1&0&0&0&0&0&0\\
\widehat U_i^0\ :&&0&0&0&0&\,0&1&\tfrac12&0&1&0&0&0&0&0&0,
\end{array}
\]
for $m^t=5$, where the bracketed values indicate the parts of  $U_i^1$ and $U_i^{-1}$ that are not used in the construction of $\widehat U_i^0$. We see that the target state $\widehat U_i^0$ indeed matches the initial state of Example \ref{exa:dressing}.

It is also interesting to calculate the bound state eigenfunction $\bar\Theta_i^0$ used in the undressing and to compare it with the eigenfunction $\Theta_i^0 = \dots, ,,0,0,0,0,\tfrac12,\tfrac32,\tfrac32,2,3,3,4,5,6,7,8, \dots$ that was used in Example \ref{exa:dressing} to dress the state $\widehat U_i^0$.
\[
\setlength{\arraycolsep}{0pt}
\renewcommand{\arraystretch}{1.5}
  \begin{array}{rCCCCCCCCCCCCCCCCCCCCCCCCCCCCCCCCCCCCCC}
(i-7) + \text{$\displaystyle\sum_{j\ge i} U_j^0$}\ :&&\overline{\tfrac32}&\overline{\tfrac12}&\tfrac12&\tfrac32&2&2&3&\tfrac72&\tfrac72&\tfrac92&\tfrac92&5&6&7&8\\
\text{$\displaystyle\sum_{j< i} U_j^{-1}$}\ :&&0&0&\tfrac12&\tfrac32&2&2&3&\tfrac72&\tfrac72&\tfrac92&\tfrac92&\tfrac92&\tfrac92&\tfrac92&\tfrac92\\
\bar\Theta_i^0\ :&&\overline{\tfrac32}&\overline{\tfrac12}&\tfrac12&\tfrac32&2&2&3&\tfrac72&\tfrac72&\tfrac92&\tfrac92&\tfrac92&\tfrac92&\tfrac92&\tfrac92\\
\text{$\displaystyle\sum_{j\in\mathbb{Z}} U_j^0+ (i-7) -\bar\Theta_i^0$}\ :&&0&0&0&0&\tfrac12&\tfrac32&\tfrac32&2&3&3&4&5&6&7&8,
\end{array}
\]
for $\sum_{j\in\mathbb{Z}} U_j^0=\frac{9}{2}$ and where  a bar designates negative values. We see that $\bar\Theta_i^0+\Theta_i^0=\sum_{j\in\mathbb{Z}} U_j^0+ (i-7)$ which suggests that the (generic) eigenfunction used in the dressing of Example \ref{exa:dressing} is in fact adjoint to $\bar\Theta_i^0$, in the sense of \eqref{eq:u 1/theta} (up to a renormalisation by $\sum_{j\in\mathbb{Z}} U_j^0 - \phi_{\max}$). We will see in the next section  that this is indeed the case and that, in general, there always exists a generic eigenfunction that reverses the effect of an undressing Darboux transformation.
\end{exa}

First, let us describe the effect of the undressing transformation on the $T$-function for $U_i^t$. Starting from the form \eqref{eq:bar Theta} for $\bar\Theta_i^t$ and introducing a discrete potential $Z_i^t=T^t_i-T^{t+1}_i$ for $U_i^t$ with boundary condition 
$Z_-=-\frac12\sum_{j\in\mathbb{Z}}U_j^t$ as $i\to-\infty$ (just as we did in Section \ref{sec:dressing}), we find $\sum_{j<i}U^{t}_j= \frac12\sum_{j\in\mathbb{Z}}U_j^t + T^{t}_i-T^{t+1}_i$, for all $t$ and we can write the undressed $T$-function, $\widehat T_i^t = T_i^t + \bar\Theta_i^t $, as 
\begin{equation}
\widehat T_i^t = \frac12\sum_{j\in\mathbb{Z}}U_j^t +\tfrac12\kappa_{\max} (i -\varphi_{\max}^t)  + \min\Big( \tfrac12 \kappa_{\max} (i -\varphi_{\max}^t) + T_i^{t+1} , 
-\tfrac12 \kappa_{\max} (i -\varphi_{\max}^t) + T_i^{t-1}\Big).
\end{equation}
Using gauge freedom, as before, this suggests that the transformation $T_i^t \mapsto \widehat T_i^t$,
\begin{equation}\label{Tundressing}
\widehat T_i^t \sim  \min\Big( \tfrac12 \kappa_{\max} (i -\varphi_{\max}^t) + T_i^{t+1} , 
-\tfrac12 \kappa_{\max} (i -\varphi_{\max}^t) + T_i^{t-1}\Big),
\end{equation}
and the dressing given by \eqref{eq:nakata} might be inverse to each other. More precisely, consider the following chain 
\begin{equation}\label{Tchain}
T_i^t \xrightarrow{~~\bar\Theta_i^t~} \widehat T_i^t \xrightarrow{~~\widehat\Theta_i^t~} \tilde T_i^t ,
\end{equation}
of an undressing Darboux transformation using a bound state $\bar\Theta_i^t$ with $(\omax, \phi_{\max})$  determined on the initial potential $U_i^t=T_{i+1}^t+T_i^{t+1}-T_i^t-T_{i+1}^{t+1}$, followed by a dressing Darboux transformation in terms of a generic eigenfunction $\widehat\Theta_i^t$ for the undressed potential $\widehat U_i^t = \widehat T_{i+1}^t + \widehat T_i^{t+1} - \widehat T_i^t - \widehat T_{i+1}^{t+1}$, with $\omega=\omax$ and $\phi=\phi_{\max}$. The resulting $T$-function is obtained, up to a gauge, by applying \eqref{eq:nakata} to $\widehat T_i^t$ (as in\eqref{Tundressing}) which gives:
\begin{equation}
\tilde T_i^t = T_i^t + \frac{\omax}{2} + {\cal M}(T),
\end{equation}
where
\begin{equation}
{\cal M}(T) =\max\Big(\min\big(0, \kappa_{\max} (i-\varphi_{\max}^{t+1}) + T_i^{t+2}-T_i^t\big) ,\min\big(0, -\kappa_{\max} (i-\varphi_{\max}^{t-1}) + T_i^{t-2}-T_i^t\big)\Big).
\end{equation}
Asymptotically, it is easy to check that $\lim_{i\to\pm\infty} (T_i^{t+2}-T_i^t) = -2Z_\pm$ and $\lim_{i\to\pm\infty} (T_i^{t-2}-T_i^t) = 2Z_\pm$ where, as before, the constants $Z_+$ and $Z_-$ are the asymptotic values of the discrete potential $Z_i^t$ at $i\to+\infty$ and $i\to-\infty$ respectively. Hence ${\cal M}(T)=0$ for all $t$, for large enough $|i|$. It seems difficult to show, in all generality, that ${\cal M}(T)\equiv 0$ for all $i$ and $t$ but it was brought to our attention \cite{nakata-priv} that such a result can be proven for the reverse situation where \eqref{Tundressing} is used to remove a soliton that was first added by means of \eqref{eq:nakata}. There is however a different way to prove the desired result.

\subsection{Reversing an undressing transformation by dressing}\label{sec:inverse}
Consider the equivalent of the undressing/dressing chain \eqref{Tchain} on $U_i^t$:
\begin{equation}\label{Uchain}
U_i^t \xrightarrow{~~\bar\Theta_i^t~} \widehat U_i^t \xrightarrow{~~\widehat\Theta_i^t~} \tilde U_i^t.
\end{equation}
As shown in \eqref{eq:undress U}, the effect of undressing $U_i^t$ by a Darboux transformation \eqref{eq:ud darboux} in terms of a $\bar\Theta_i^t$ given by \eqref{eq:overline Theta}, is simply
\begin{equation}\label{UtoUhat}
 U_i^t \mapsto~ \widehat U^t_i=
  \begin{cases}
    U^{t+1}_i&i<m^t\\
    U^{t-1}_i&i\ge m^t,
  \end{cases}
\end{equation}
for a split point $m^t\in M^t$ \eqref{def:Mt} determined on $U_i^t$. As this undressing has taken out a soliton with mass $\omax$, the remaining solitons in $U_i^t$ necessarily have masses less than or equal to $\omax$ and we can therefore always consider a dressing Darboux transformation on $\widehat U_i^t$ using a generic eigenfunction 
$\widehat\Theta_i^t$ for $\widehat U_i^t$ with $\omega=\omax$. Since the phase constant $\phi$ in this generic eigenfunction is completely free, we can always choose it to coincide with $\phi_{\max}$ of $\bar\Theta_i^t$, given by \eqref{eq:phi}. {Moreover, as will see in Lemma \ref{lem:samesplit}, there is always an index $m\in\mathbb{Z}$ that conforms to both the definition of a split point for the bound state eigenfunction $\bar\Theta_i^t$ (for $U_i^t$) as well as to that for a split point for the generic eigenfunction 
$\widehat\Theta_i^t$ (for $\widehat U_i^t$) we are considering here. Taking this index as the common split point $m^t$ for both functions, we write } $\widehat\Theta_i^t$ as 
\begin{equation}\label{def:thetahat}
  \widehat\Theta^t_i=
  \begin{cases}
    \displaystyle\sum_{j<i} \widehat U^{t-1}_j&i<m^t\\
    \displaystyle\kappa_{\max} (i-\varphi_{\max}^t)+\sum_{j\ge i} \widehat U^t_j&i\ge m^t
  \end{cases}
\end{equation}
(cf. \eqref{eq:max sol alt}) where the split point $m^t$ is now exactly the same as that in \eqref{UtoUhat}. Taken together then, these two relations tell us that
\begin{equation}
  \widehat\Theta^t_i=
  \begin{cases}
    \displaystyle\sum_{j<i} U^{t}_j&i<m^t\\
    \displaystyle\kappa_{\max} (i-\varphi_{\max}^t)+\sum_{j\ge i} U^{t-1}_j&i\ge m^t,
  \end{cases}
\end{equation}
or, because of \eqref{eq:overline Theta}, that we have
\begin{equation}\label{eq:adjointrel}
\bar\Theta_i^t + \widehat\Theta_i^t = \kappa_{\max} (i-\varphi_{\max}^t)+\sum_{j\in\mathbb{Z}} U^{t}_j ,\quad \text{for all~}  i, t\in\mathbb{Z}.
\end{equation}
Since both $\bar\Theta_i^t$ and $\widehat\Theta_i^t$ solve the linear system \eqref{eq:lin1}--\eqref{eq:lin4}, for the respective potentials $U_i^t$ and $\widehat U_i^t$ which are related by Darboux transformation, we conclude from \eqref{eq:adjointrel} that $\bar\Theta_i^t$ and $\widehat\Theta_i^t$ are in fact related as in \eqref{eq:u 1/theta}, up to a trivial renormalisation by $\sum_{j\in\mathbb{Z}} U^{t}_j - \kappa_{\max} \phi_{\max}$. Hence, 
\begin{equation}
U_i^t \xrightarrow{~~\bar\Theta_i^t~} \widehat U_i^t \xrightarrow{~~\widehat\Theta_i^t~} \tilde U_i^t \equiv U_i^t,
\end{equation}
and we find that:
\begin{thm}\label{inversenature}
For any undressing Darboux transformation,  defined by a bound state eigenfunction $\bar\Theta_i^t$ \eqref{eq:overline Theta}, there exists a dressing Darboux transformation that reinserts the soliton that was taken out in the undressing, thereby reconstructing the original solution to udKdV one had before undressing. 
\end{thm}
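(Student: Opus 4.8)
The plan is to produce the reversing transformation explicitly and then to identify it, by means of Lemma \ref{udDarboux}, with the adjoint of the given undressing. The whole argument is essentially assembled from the material developed in Sections \ref{sec:line sol} and \ref{sec:undressing}; what remains for the proof is to verify that the two eigenfunctions involved are genuinely adjoint and that the reversing transformation is of dressing type.

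First I would observe that the undressing Darboux transformation defined by $\bar\Theta_i^t$ removes a soliton of maximal mass $\omax$, so that by \eqref{UtoUhat} every soliton remaining in the undressed potential $\widehat U_i^t$ has mass at most $\omax$. Consequently $\omega=\omax$ is an admissible spectral parameter for a generic eigenfunction of $\widehat U_i^t$, which is precisely what makes the reversing transformation a \emph{dressing} transformation rather than a second undressing. I would then take $\widehat\Theta_i^t$ to be the generic eigenfunction furnished by Theorem \ref{thm:gen sol} for $\widehat U_i^t$, with spectral parameter $\omega=\omax$ and with the phase $\phi=\phi_{\max}$ already fixed by $\bar\Theta_i^t$ through \eqref{eq:phi}.

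The central step is to select a single index $m^t\in M^t$ that is simultaneously a valid split point for the bound state $\bar\Theta_i^t$ (for $U_i^t$) and for the generic eigenfunction $\widehat\Theta_i^t$ (for $\widehat U_i^t$); this compatibility is the content of Lemma \ref{lem:samesplit}. With this common split point, writing $\widehat\Theta_i^t$ in the form \eqref{def:thetahat} and substituting the explicit undressing formula \eqref{UtoUhat} rewrites $\widehat\Theta_i^t$ entirely in terms of the original potential $U_i^t$. Comparing the result clause-by-clause with the explicit form \eqref{eq:overline Theta} of $\bar\Theta_i^t$, and invoking conservation of total mass \eqref{eq:cons U}, then yields the additive identity \eqref{eq:adjointrel}.

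It only remains to interpret \eqref{eq:adjointrel}. It states that $\widehat\Theta_i^t$ equals $\kappa_{\max}i-\omax t-\bar\Theta_i^t$ up to the additive constant $\sum_{j\in\mathbb Z}U_j^t-\kappa_{\max}\phi_{\max}$, so that $\widehat\Theta_i^t$ is exactly the adjoint eigenfunction of $\bar\Theta_i^t$ in the sense of \eqref{eq:u 1/theta}. Lemma \ref{udDarboux} then guarantees that the Darboux transformation defined by $\widehat\Theta_i^t$ carries $\widehat U_i^t$ back to $U_i^t$, completing the chain \eqref{Uchain} with $\tilde U_i^t\equiv U_i^t$ and thereby reinserting the removed soliton. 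The only genuinely delicate point is the split-point compatibility of Lemma \ref{lem:samesplit}: one must confirm that the index at which $X_i^t$ attains its maximum, used to construct $\bar\Theta_i^t$, also lies in the band where the two basic solutions of Theorem \ref{thm:gen sol} for $\widehat U_i^t$ coincide. Once this is secured, everything else is routine substitution using \eqref{UtoUhat}.
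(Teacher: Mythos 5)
Your proposal is correct and follows essentially the same route as the paper: fix $\omega=\omax$ and $\phi=\phi_{\max}$, use Lemma \ref{lem:samesplit} to obtain a common split point, substitute \eqref{UtoUhat} into \eqref{def:thetahat} to derive the additive identity \eqref{eq:adjointrel}, and conclude via the adjoint relation \eqref{eq:u 1/theta} and Lemma \ref{udDarboux}. The only delicate ingredient is indeed the split-point compatibility, which both you and the paper delegate to Lemma \ref{lem:samesplit}.
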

\begin{proof}
As is clear from the above arguments, such a dressing transformation is obtained from a (generic) eigenfunction $\widehat\Theta_i^t$ as in \eqref{def:thetahat} for the undressed state $\widehat U_i^t$,   if we can take the split point $m^t$ for $\widehat\Theta_i^t$ such that it coincides with a valid split point for the bound state eigenfunction $\bar\Theta_i^t$, in the block $M_j^t$ that corresponds to the soliton that was taken out from the original solution. This is always possible by Lemma \ref{lem:samesplit}.
\end{proof}

The following Lemma will be shown in Appendix \ref{proofLemmasamesplit}.
\begin{lem}\label{lem:samesplit}
If we consider an undressing of a state $U_i^t$ by a bound state eigenfunction  $\bar\Theta_i^t$ \eqref{eq:overline Theta} with $(\omax,\phi_{\max})$, resulting in an undressed state $\widehat U_i^t$, then the left-most index in the block $M_j^t$ in $X_i^t$ \eqref{eq:X} that corresponds to the soliton that was taken out by the undressing, is also a valid split point for a generic eigenfunction $\widehat\Theta_i^t$ with  $(\omax,\phi_{\max})$  \eqref{def:thetahat} for $\widehat U_i^t$.
\end{lem}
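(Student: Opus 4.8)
The plan is to verify the split-point condition of Theorem~\ref{thm:gen sol} directly for the generic eigenfunction $\widehat\Theta^t_i$ of the undressed state $\widehat U^t_i$ (taken with $\omega=\omax$ and $\phi=\phi_{\text{max}}$). Writing $\ell$ for the left-most index of the block $M_j^t$ and $\widehat F^t_i$ for the difference of the two basic solutions built on $\widehat U^t_i$, it suffices to show that $\ell$ is the smallest index at which $\widehat F^t_i\ge 0$, since $\widehat F^t_i$ is weakly increasing --- this is \eqref{eq:F inc} applied to $\widehat U^t_i$, which is legitimate because undressing the heaviest soliton leaves a state of maximal mass at most $\omax$. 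That $\ell$ is at the same time a valid split point for $\bar\Theta^t_i$ is immediate from $\ell\in M_j^t\subseteq M^t$ and Proposition~\ref{prop: cluster}, which guarantees that the value $\phi_{\text{max}}$ fixed by $F^t_\ell=0$ is independent of the chosen index of the block.

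The crucial preliminary is to describe $\widehat U^t$ and its downdate $\widehat U^{t-1}$ near $\ell$. The undressing at time $t$ uses the split point $\ell$ (a common split point of $\bar\Theta^t$ and $\bar\Theta^{t+1}$, since $F^t_\ell=0$ and $F^{t+1}_\ell=F^t_\ell+X^t_\ell-\omax=0$), so \eqref{UtoUhat} gives $\widehat U^t_j=U^{t-1}_j$ for $j\ge\ell$ and $\widehat U^t_{\ell-1}=U^{t+1}_{\ell-1}$. For the downdate I would not use $\ell$ but $\ell-1$: Lemma~\ref{lem:m}(ii) gives $U^t_{\ell-1}+U^{t-1}_{\ell-1}=\kmax$, so $F^t$ is flat from $\ell-1$ to $\ell$ and $F^t_{\ell-1}=0$, while Proposition~\ref{lem:m-1}(ii) gives $X^{t-1}_{\ell-1}=\omax$, whence $F^{t-1}_{\ell-1}=F^t_{\ell-1}+\omax-X^{t-1}_{\ell-1}=0$ by \eqref{eq:diff F}. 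Thus $\ell-1$ is a common split point of $\bar\Theta^{t-1}$ and $\bar\Theta^t$, so $\widehat U^{t-1}_j=U^t_j$ for $j<\ell-1$ and $\widehat U^{t-1}_j=U^{t-2}_j$ for $j\ge\ell-1$.

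Substituting these descriptions together with the expression \eqref{eq:varphi} for $\varphi^t_{\text{max}}$ (at $m^t=\ell$) into $\widehat F^t_\ell$ and $\widehat F^t_{\ell-1}$, the telescoping sums collapse and I expect
\begin{equation*}
\widehat F^t_\ell=U^t_{\ell-1}-U^{t-2}_{\ell-1}=\omax-X^{t-1}_\ell\ge 0,\qquad \widehat F^t_{\ell-1}=U^t_{\ell-1}+U^{t+1}_{\ell-1}-\kmax<0,
\end{equation*}
where the first inequality uses $X^{t-1}_\ell\le\omax$ (Theorem~\ref{thm:cons X Y}) together with the identity $\Delta X^{t-1}_{\ell-1}=U^{t-2}_{\ell-1}-U^t_{\ell-1}$ from \eqref{eq:basic 1}, and the second follows from Proposition~\ref{prop:X ineq} applied at $\ell-1$, where $\Delta X^t_{\ell-1}>0$ because $\ell$ is the left edge of a maximal block. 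With $\widehat F^t_i$ weakly increasing, $\widehat F^t_{\ell-1}<0\le\widehat F^t_\ell$ forces $\ell$ to be the smallest index with $\widehat F^t\ge0$, which is exactly the condition for $\ell$ to be a valid split point for $\widehat\Theta^t_i$.

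I expect the main obstacle to be the choice of split point for the downdated state. The naive choice $m^{t-1}=\ell$ is only available when $X^{t-1}_\ell=\omax$ and fails otherwise (then $\widehat F^t_\ell$ is strictly positive rather than zero); it is Proposition~\ref{lem:m-1}(ii), which shifts the maximal blocks of $X^t_i$ one step to the left under a time downdate, together with Lemma~\ref{lem:m}(ii), that singles out $\ell-1$ as the correct split point and makes the two evaluations above go through uniformly, with no case distinction.
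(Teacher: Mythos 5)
Your proposal is correct and follows essentially the same route as the paper's proof in Appendix~B: both reduce the claim to showing $\widehat F^t_{\ell-1}<0\le\widehat F^t_\ell$ for the weakly increasing difference $\widehat F^t_i$, both identify $\widehat U^{t-1}$ via the downdated undressing at split point $\ell-1$, and both rely on the same ingredients (Lemma~\ref{lem:m}(ii), Proposition~\ref{lem:m-1}(ii), Proposition~\ref{prop:X ineq} and Theorem~\ref{thm:cons X Y}) to evaluate the two boundary values. The only difference is cosmetic: you justify the downdate split point $\ell-1$ directly from $F^t_{\ell-1}=F^{t-1}_{\ell-1}=0$ via \eqref{eq:diff F}, whereas the paper writes out $\bar\Theta^{t-1}_i$ explicitly and checks that its update reproduces $\bar\Theta^t_i$ --- your shortcut is legitimate and slightly cleaner.
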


Theorem \ref{inversenature}, in fact,  offers a practical method for solving the Cauchy problem for the udKdV equation.
\begin{exa}\label{exa:undressing2} 
Let us look at the undressing and subsequent dressing of the undressed state $\widehat U_i^0$ obtained in Example \ref{exa:undressing1} (which, for simplicity however, we shall denote here as $U_i^t$). For this state we find
\[
\setlength{\arraycolsep}{0pt}
\renewcommand{\arraystretch}{1.5}
  \begin{array}{rCCCCCCCCCCCCCCCCCCCCCCCCCCCCCCCCCCCCCC}
U_i^0\ :&&0&0&0&0&\,0&1&\tfrac12&0&1&0&0&0&0&0&0\\
X^0_i\ :&&0&0&0&0&\tfrac12&\tfrac32&\tfrac32&1&1&1&0&0&0&0&0,
\end{array}
\]
which shows that it contains a soliton with mass $\omax=\tfrac32$ that can be taken out at split points 6 or 7. The phase constant $\phi_{\max}$ can be calculated as (at $m^t=6$ and $t=0$)
$$\phi_{\max}=6 + \sum_{j\ge6} U_j^{-1} - \sum_{j<6} U_j^0 = 6 + 1 - 0 = 7,$$
and the undressed state $\widehat U_i^0$ is found to be:
\[
\setlength{\arraycolsep}{0pt}
\renewcommand{\arraystretch}{1.5}
  \begin{array}{rCCCCCCCCCCCCCCCCCCCCCCCCCCCCCCCCCCCCCC}
U^{1}_i\ :&&\,0&0&0&0&0&[0&\tfrac12&1&0&1&\tfrac12&1&\tfrac12&0&0]\\
U^{-1}_i\ :&&[0&\,0&0&\tfrac12&1]&\,0&0&1&0&0&0&0&0&0&0\\
\widehat U_i^0\ :&&\,0&0&0&0&0&\,0&0&1&0&0&0&0&0&0&0,
\end{array}
\]
where the bracketed values again indicate the parts of  $U_i^1$ and $U_i^{-1}$ that are not used in the construction of $\widehat U_i^0$ (for $m^t=6$). In this case, the undressed state is very simple and clearly moves at speed 1 towards the right. Hence, we can immediately write its $T$-function, as if it were part of a background (cf. formula \eqref{backgroundT})
\begin{equation}
\widehat T^t_i=\dfrac{1}{2} |i-t-8|.
\end{equation}
If we now re-insert the mass $\tfrac32$ soliton using \eqref{eq:nakata}, for $\omega=\tfrac32$ and $\phi=7$, we obtain the $T$-function
\begin{equation}\label{analyticT}
T_i^t=\frac12 \max\Big(i-7 - \tfrac32 t+ |i-t-9|,- (i-7 - \tfrac32 t)+ |i-t-7|\Big),
\end{equation}
from which we can calculate an explicit function $U_i^t=T_{i+1}^t + T_{i}^{t+1} - T_i^t - T_{i+1}^{t+1}$ that coincides with $U_i^0$ at $t=0$.
In other words, we have in fact solved the Cauchy problem for the udKdV equation with initial condition $U_i^0$. Figure \ref{fig:Ex7} shows the time evolution of the initial state $U_i^0$ calculated using the update rule \eqref{eq:update rule}, plotted as dots, together with the exact solution $U_i^t$ found from \eqref{analyticT} with $i$ taken to be a real variable, plotted as a continuous line. We observe that, as they should, the two plots coincide at integer values of $i$.
\begin{figure}[htbp]\centering
\includegraphics*[width=12.5cm]{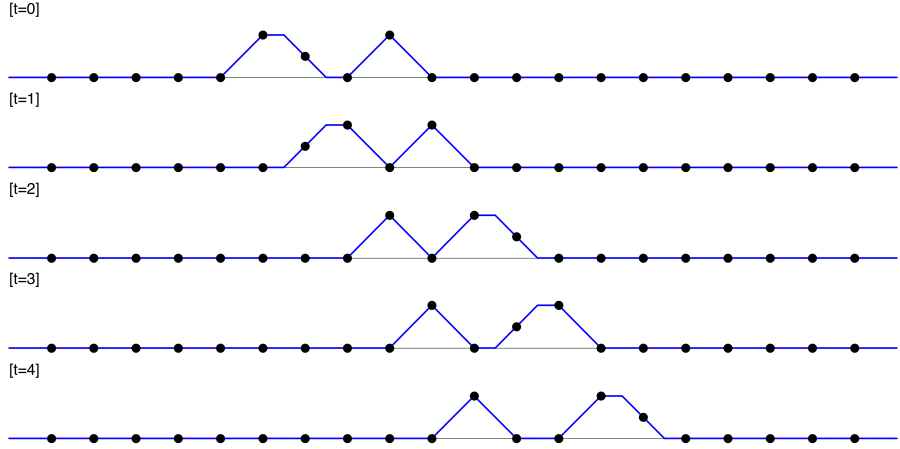}\vskip-.2cm
\caption{\label{fig:Ex7} Simulation (dots) and exact solution obtained from \eqref{analyticT} (solid line) shown together.}
\end{figure}

\end{exa}

\section{Solution to the Cauchy problem for udKdV}\label{sec:cauchy}
We have seen that the dressing and undressing transformations, using the same parameters $\omega$ and $\phi$, are inverse to one another. As described in \cite{MR2738130}, this leads to a method for solving the Cauchy problem for udKdV. This method can be summarized as:

\begin{thm}\label{thm:cauchy}
{For the udKdV equation \eqref{eq:update rule} any given initial state $U_i^0$, with finite support}, can be completely undressed until only a background state remains. The data $(\omega, \phi)$ obtained at each step in the undressing, together with the background state, suffice to construct an exact solution $U_i^t$ for  \eqref{eq:update rule} that coincides with  $U_i^0$ at $t=0$.
\end{thm}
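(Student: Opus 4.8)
The plan is to solve the Cauchy problem in two stages: an iterative \emph{undressing} that strips $U_i^0$ down to a background, and then an explicit \emph{dressing} that rebuilds the full space-time solution from the data recorded along the way. First I would set up the undressing loop. Given any state with $\omax>0$, Theorem~\ref{th:BS} supplies the bound state eigenfunction $\bar\Theta_i^t$ built from $(\omax,\phi_{\max})$, with $\phi_{\max}$ read off from \eqref{eq:phi}, and the Darboux transformation \eqref{eq:undress U} produces a new finite-support state $\widehat U_i^t$ whose conserved total mass \eqref{eq:cons U} has dropped by $\omax$. Recording the pair $(\omax,\phi_{\max})$ and feeding $\widehat U_i^t$ back in yields a chain $U^{(0)}=U_i^0\to U^{(1)}\to\cdots$ of finite-support udKdV solutions (Lemma~\ref{udDarboux} together with the support analysis following \eqref{eq:undress U}). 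By Theorem~\ref{inversenature} each undressed state is the dressing of its predecessor with parameter $\omega=\omax$, so Theorem~\ref{thm:gen sol} forces $\widehat\omega_{\max}\le\omax$; the maximal mass is therefore non-increasing along the chain, and the loop is well defined until $\omax=0$, which by Remark~\ref{newposremark5} is exactly the statement that a background (or trivial) state has been reached.

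The main obstacle is to prove that this loop terminates after finitely many steps, and I would dispose of it by splitting on the value of $V^t$, exploiting that $V^t\le1\iff\omax\le1$ so that, $\omax$ being non-increasing, the regimes are visited in order. While $V^t>1$ we have $\omax>1$ (Corollary~\ref{cor:newcor}), so each undressing depletes the finite total mass by strictly more than $1$; hence at most $\sum_i U_i^0$ such steps can occur before $V^t\le1$. In the remaining engulfed regime $V^t\le1$ the state translates rigidly, $U_i^{t\pm1}=U_{i\mp1}^t$ by Lemma~\ref{lem:V^t}, and $X_i^t=U_i^t+U_{i-1}^t$ by Corollary~\ref{cor:newcor}(i). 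Substituting these into \eqref{eq:undress U} shows that the undressing simply deletes the two sites $m^t-1,m^t$, which carry the mass $X_{m^t}^t=\omax>0$, and reglues the two shifted halves with no cancellation. Thus each engulfed undressing strictly decreases the integer cardinality of the support, and finitely many such steps exhaust it, leaving a background. Combining the two bounds gives termination.

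Finally I would reconstruct the solution. The terminal background $B_i^0$ has the closed-form tau function \eqref{backgroundT}, valid for all $t$, so it is already an exact all-time solution. By Theorem~\ref{inversenature} every undressing with data $(\omega_k,\phi_k)$ is reversed by the vertex-operator dressing \eqref{eq:nakata} carrying the same parameters; I would therefore apply \eqref{eq:nakata} to the background tau function, with the recorded data $(\omega_k,\phi_k)$ in reverse order. Since \eqref{eq:nakata} maps udKdV tau functions to udKdV tau functions \cite{MR2545618}, this produces an explicit $T_i^t$ for all $i,t$, whence $U_i^t$ from \eqref{eq:ud H sub} is an exact solution of \eqref{eq:update rule}; by the reversibility just invoked it coincides with $U_i^0$ at $t=0$. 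The only care needed here is the bookkeeping of gauge and phase so that at each reconstruction stage the split-point coincidence of Lemma~\ref{lem:samesplit} holds, guaranteeing that the dressing exactly reinserts the soliton removed by the corresponding undressing; the remainder is the routine iteration of the explicit operator \eqref{eq:nakata}.
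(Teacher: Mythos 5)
Your overall strategy --- iteratively undress with the bound state eigenfunction of Theorem~\ref{th:BS}, record $(\omax,\phi_{\text{max}})$ at each step, stop when $\omax=0$ (Remark~\ref{newposremark5}), then rebuild the solution by applying \eqref{eq:nakata} to the background $T$-function \eqref{backgroundT} with the recorded data in reverse order, invoking Theorem~\ref{inversenature} and Lemma~\ref{lem:samesplit} for exact reversibility --- is the same as the paper's, and the reconstruction half of your argument is sound. The genuine problem is your termination argument in the regime $V^t>1$. You claim that since each such undressing removes mass $\omax>1$, ``at most $\sum_i U_i^0$ such steps can occur''. This presupposes that the total mass of the states along the chain is bounded below by $0$ (or by some constant known in advance), which is false for real-valued states: individual $U^t_i$ and the total mass $\sum_i U^t_i$ may be negative, and the residual background typically carries negative mass (in the paper's worked example the initial mass is $9$ while the final background has mass $-\tfrac{11}{6}$). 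A state consisting of a single mass-$3$ soliton on a background of mass $-10$ has total mass $-7$ yet requires one undressing, so your bound gives nothing. Without an a priori lower bound on the masses reachable along the chain --- which is essentially equivalent to knowing in advance that the soliton content is finite, i.e.\ to what is being proved --- the mass-depletion count does not close.

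The paper sidesteps this with a different monotone quantity: the undressing \eqref{eq:undress U} glues the update (shifted rightwards) to the downdate (shifted leftwards), so the spatial \emph{extent} of the support decreases at every step regardless of the sign of the masses, and a finite-support state can therefore only be undressed finitely many times before $\omax=0$. Your cardinality argument in the regime $V^t\le1$ is correct as far as it goes (there the rigid translation $U^{t\pm1}_i=U^t_{i\mp1}$ makes \eqref{eq:undress U} literally delete the two sites $m^t-1,m^t$, at least one of which is nonzero since $U^t_{m^t-1}+U^t_{m^t}=X^t_{m^t}=\omax>0$), but it does not extend to $V^t>1$, where the glued halves are genuine updates and downdates and no such site count is available. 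Replacing your phase-one mass count by the support-extent argument (or any other integer-valued quantity that strictly decreases when $V^t>1$) would repair the proof; as written, termination is not established.
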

\begin{proof}
That any initial state  $U_i^0$ with finite support can be fully undressed, down to a pure background state (i.e. a state for which $V^t=0$) follows immediately from the fact that an undressing transformation reduces the extent of the state (cf. \eqref{eq:undress U}), which means that  after repeated undressings one eventually ends up with the trivial state, or with a state for which $\omax=0$ and further undressing is impossible. In either case one has $V^t=0$ (see Remark \ref{newposremark5}).
We can then proceed as follows:
\begin{enumerate}
  \item Given initial data $U^0_i$, use an undressing transformation to remove one of the heaviest solitons and record $\omega=\omax$ and $\phi=\phi_{\max}$ for this soliton. 
  \item Repeat until all solitons are removed and only a background state $B^0_i$ remains, at which point we have obtained the full set of spectral data: a finite number of pairs $(\omega_j, \phi_j)$ ($j=1, \dots, N$) for non-increasing soliton masses ($\omega_1\geq \omega_2 \geq \cdots \geq \omega_N$) and the background $B_i^0$, or just a background if the initial state did not contain any solitons.
  \item Evolve the background---it simply translates at speed 1---to give $B^t_i=B^0_{i-t}$. 
  
 This is straightforward if calculated in terms of the $T$-function \eqref{backgroundT} for the background, which we shall denote by $T_i^{[0],t}$.
  \item Add back all the solitons in reverse order using the sets of parameters $(\omega, \phi)$ obtained at each undressing, to obtain the exact solution as a function of $i$ and $t$. 
  
This can be done by simple iteration of the dressing transformation \eqref{eq:nakata} with parameters $(\omega_{N-j},\phi_{N-j})$, 
\begin{equation}
T_i^{[j],t} \xrightarrow{~(\omega_{N-j},\phi_{N-j})\,} T_i^{[j+1],t},
\end{equation}
for $j$ running from $0$ to $N-1$. Since dressing transformations for the same $\omega$ commute \cite{MR2545618}, we always obtain a unique ``fully dressed" $T$-function, $T_i^{[N],t}$, which is guaranteed to solve the bilinear udKdV equation \eqref{eq:H udKdV}. From this $T$-function we can then calculate $U_i^t=T_{i+1}^{[N],t}+T_i^{[N],t+1}-T_i^{[N],t}-T_{i+1}^{[N],t+1}$, which solves the udKdV equation \eqref{eq:update rule} and which, by construction, coincides with $U_i^0$ at $t=0$.
\end{enumerate}\vskip-.2cm
\end{proof}

\begin{rem}\label{rem:separation}
Notice that the first statement in Theorem \ref{thm:cauchy} tells us that any initial state for which $V^t>1$ must, asymptotically, separate into a train of solitons with speeds greater than 1 (see also Remark \ref{rem:omega max}) and a remaining part that travels, unchanged, with speed 1 and that consists of solitons with $\omega\leq 1$, possibly embedded into a background.
\end{rem}

\begin{rem}
When solving the Cauchy problem according to the above algorithm, we can halt the undressing part as soon as we obtain a state  for which $\omax=1$,  because such a state simply evolves unchanged at speed 1 (Lemma \ref{lem:V^t}) and we can obtain its $T$-function by means of \eqref{backgroundT} as if it were a pure background state.
\end{rem}

\begin{cor}\label{cor:darbouxaction}
The undressing transformation \eqref{eq:undress U}, in terms of a bound state eigenfunction \eqref{eq:overline Theta}, maps between (finite support) solutions of the udKdV equation and the corresponding undressing transformation for $T$-functions \eqref{Tundressing} maps between solutions of the bilinear udKdV equation \eqref{eq:H udKdV}.
\end{cor}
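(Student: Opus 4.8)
The goal is to show both that $\widehat U^t_i$ has finite support and solves \eqref{eq:update rule}, and that $\widehat T^t_i$ from \eqref{Tundressing} solves \eqref{eq:H udKdV}; the second assertion then follows from the first through the substitution \eqref{eq:ud H sub}. Finite support is immediate from the explicit formula \eqref{eq:undress U}: since $\widehat U^t_i$ equals $U^{t+1}_i$ for $i<m^t$ and $U^{t-1}_i$ for $i\ge m^t$, and $U$ has finite support, so does $\widehat U$; correspondingly $\widehat T^t_i=T^t_i+\bar\Theta^t_i$ is assembled from $T^{t\pm1}_i$ and carries the expected asymptotics.

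To establish the solution property without circularity I would argue at the level of the linear system, rather than invoke ``dressing preserves solutions'' on $\widehat U$ (which would assume the conclusion). The pivotal observation is \eqref{eq:adjointrel}, which realizes $\widehat\Theta^t_i=\kmax(i-\varphi^t_{\max})+\sum_{j}U^t_j-\bar\Theta^t_i$ as the adjoint of $\bar\Theta^t_i$ in the sense of \eqref{eq:u 1/theta}. Because $\bar\Theta^t_i$ solves \eqref{eq:lin1}--\eqref{eq:lin4} for the potential $U^t_i$ (Theorem~\ref{th:BS}), Lemma~\ref{udDarboux} guarantees that $\widehat\Theta^t_i$ solves \eqref{eq:lin1}--\eqref{eq:lin4} for the Darboux-transformed potential $\widehat U^t_i$, and this conclusion presupposes nothing about $\widehat U$ beyond its definition. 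I would then run the calculations of Sections~\ref{sec:basicsol1} and~\ref{sec:basicsol2} in reverse, using that $\widehat\Theta^t_i$ has precisely the split form \eqref{def:thetahat}: on the branch $i\ge m^t$ equation \eqref{eq:lin4} for $\widehat\Theta$ is algebraically the update rule \eqref{eq:update rule} for $\widehat U$, while on the branch $i<m^t$ equation \eqref{eq:lin3} is the downdate rule \eqref{eq:downdate rule}; equations \eqref{eq:lin1} and \eqref{eq:lin2} reduce merely to the spectral bound and hold automatically at $\omega=\omax$. These reductions are legitimate because $\widehat U$ conserves total mass---$\sum_i\widehat U^t_i=\sum_i U^t_i-\omax$ is $t$-independent by \eqref{eq:cons U} and the $t$-invariance of $\omax$ (Theorem~\ref{thm:cons X Y})---which is exactly the ingredient that makes the update rule at a given index equivalent to the downdate rule at the next time step at the same index.

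The step I expect to be the main obstacle is gluing these two halves into the full evolution: the argument delivers the update rule only for $i\ge m^t$ and the downdate rule only for $i<m^t$, and the per-index symmetric form of udKdV (cf.\ the derivation of \eqref{eq:downdate rule} from \eqref{eq:update rule}) must be shown to hold for \emph{every} $i$. I would close this by tracking the single maximal block $M^t_j$ that is being removed: by Proposition~\ref{lem:m-1}(iii) this block drifts to the right, so the split points may be chosen non-decreasing in $t$, and then the update rule on $i\ge m^t$ (from the pair $(t,t{+}1)$) together with the downdate rule on $i<m^{t+1}$ (from the pair $(t{+}1,t{+}2)$) overlaps to cover all indices, yielding the symmetric $(t,t{+}1)$-relation everywhere and hence \eqref{eq:update rule}. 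As an independent check---and a way to bypass this gluing entirely---I would invoke Theorem~\ref{inversenature} together with the reconstruction of Theorem~\ref{thm:cauchy}: the latter exhibits $U$ as the end of a chain of dressings of its background, every stage of which is a genuine solution, and the former identifies $\widehat U$ with the penultimate stage, with the soliton of mass $\omax$ removed, thereby confirming that $\widehat U$ solves udKdV and, via \eqref{Tundressing} and \eqref{eq:ud H sub}, that $\widehat T$ solves the bilinear equation \eqref{eq:H udKdV}.
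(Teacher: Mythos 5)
Your closing ``independent check'' is, in fact, the paper's own proof: the argument given there exhibits $U_i^t$ as the last link in the dressing chain of Theorem~\ref{thm:cauchy}, uses \eqref{eq:adjointrel} to show that $\widehat T_i^t=T_i^t+\bar\Theta_i^t$ is gauge equivalent, in the sense of \eqref{Tgauge}, to the penultimate $T$-function $T_i^{[N-1],t}$, which satisfies \eqref{eq:H udKdV} by construction (Nakata's result on \eqref{eq:nakata}), and then reads off the statement for $\widehat U_i^t$ from \eqref{eq:ud H sub}. So taken as a whole your proposal does reach a valid proof, and it is essentially the paper's one.

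Your primary route --- verifying the udKdV evolution of $\widehat U_i^t$ directly from the fact that $\widehat\Theta_i^t$ solves \eqref{eq:lin1}--\eqref{eq:lin4} for the potential $\widehat U_i^t$ --- is a genuinely different and attractive idea, but as written it has two gaps. First, the gluing does not close in the way you describe. The update rule for the pair $(t,t+1)$ comes from \eqref{eq:lin4} at time $t$ and needs $\widehat\Theta^t_i$, $\widehat\Theta^{t+1}_i$, $\widehat\Theta^{t+1}_{i+1}$ all on the right branch of \eqref{def:thetahat}, i.e.\ $i\ge\max(m^t,m^{t+1})$; the downdate rule for the \emph{same} pair $(t,t+1)$ comes from \eqref{eq:lin3} written at time $t+1$ and needs $\widehat\Theta^{t+1}_i$, $\widehat\Theta^{t+1}_{i+1}$, $\widehat\Theta^{t+2}_{i+1}$ on the left branch, so it involves the split point shared by times $t+1$ and $t+2$, not the one you quote. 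Even granting monotone split points, the indices adjacent to the split point are covered only if one exploits the one-index overlap of the two branches (where the analogue of \eqref{eq:F} for $\widehat U$ vanishes) and controls how that zero set moves with $t$; Proposition~\ref{lem:m-1}(iii) concerns the maximal blocks of $X_i^t$ for the original $U$, not the split points of the generic eigenfunction for $\widehat U$, whose $t$-dependence the paper explicitly says it cannot describe in general. Second, the bilinear statement does not ``follow through the substitution \eqref{eq:ud H sub}'': the map $T\mapsto U$ is unchanged by adding any $f(i)+g(t)$ to $T$, whereas \eqref{eq:H udKdV} is only invariant under the affine gauge \eqref{Tgauge}, so knowing that the potential of $\widehat T$ solves \eqref{eq:update rule} does not imply that $\widehat T$ itself solves \eqref{eq:H udKdV}. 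The paper's global argument settles both points at once, which is presumably why it is the one used.
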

\begin{proof}
Suppose that $U_i^t$ is a solution to the udKdV equation (of finite support) that contains at least one soliton. For this solution, let $T_i^{[N-1],t}$ denote the $T$-function obtained after $N-1$ dressings in the algorithm in the proof of Theorem \ref{thm:cauchy}, for some particular choice of initial undressing $(\omega_1, \phi_1)$, and let $T_i^t$ denote the result of the last dressing, i.e.: $U_i^t=T_{i+1}^{t}+T_i^{t+1}-T_i^{t}-T_{i+1}^{t+1}$ and
\begin{equation}
T_i^t = T_i^{[N-1],t} + \widehat  \Theta_i^t,
\end{equation}
where $\widehat \Theta_i^t$ is the generic eigenfunction \eqref{def:thetahat} for $\widehat U_i^t=T_{i+1}^{[N-1],t}+T_i^{[N-1],t+1}-T_i^{[N-1],t}-T_{i+1}^{[N-1],t+1}$ with $\omax=\omega_1$ and $\phi_{\max}=\phi_1$. Note that, by construction, $T_i^{[N-1],t}$ satisfies the bilinear udKdV equation \eqref{eq:H udKdV}. Then, because of \eqref{eq:adjointrel}, we have that the bound state eigenfunction $\bar \Theta_i^t$ for $(\omega_1, \phi_1)$ that undresses $T_i^t$ to $\widehat T_i^t=T_i^t + \bar \Theta_i^t$ (and therefore $U_i^t$ to $\widehat U_i^t = \widehat T_{i+1}^t + \widehat T_i^{t+1} - \widehat T_i^t - \widehat T_{i+1}^{t+1}$) can be expressed as 
\begin{equation}
\bar\Theta_i^t = \kappa (i-\varphi^t)+\sum_{j\in\mathbb{Z}} U^{t}_j - \widehat\Theta_i^t ,
\end{equation}
and that $\widehat T_i^t$ and $T_i^{[N-1],t}$ are therefore gauge equivalent:
\begin{equation}
\widehat T_i^t = T_i^{[N-1],t} + \kappa (i-\varphi^t)+\sum_{j\in\mathbb{Z}} U^{t}_j.
\end{equation}
Hence, the undressed $T$-function $\widehat T_i^t$ solves the bilinear equation  \eqref{eq:H udKdV} and $\widehat U_i^t$, which has finite support,  therefore solves the udKdV equation.

\end{proof}

To finish we give a detailed worked example of the undressing and redressing procedure.
\begin{exa}
Consider the initial state 
\begin{equation}
U^0_i=(\dots,0,0,0,\tfrac23,\tfrac23,-\tfrac12,1,\tfrac12,1,1,0,0,-\tfrac13,1,1,1,-1,1,1,1,0,0,0,0,0,0\dots).
\end{equation}
First we will show how to characterise the soliton content in terms of the {spectral data, i.e.:} pairs $(\omega,\phi)$ and a background. Then we reconstruct the solution from this data at an arbitrary time. As above, we use a bar notation for negative numbers. First we determine the data for the maximal soliton(s) in $U^0_i$ ; {the left-most 0 displayed has index $i=1$. Recall that  $X^0_i=\sum_{j<i}(U^{-1}_j-U^{1}_j)$, as defined in \eqref{eq:X}.}
\[
\setlength{\arraycolsep}{1pt}
\renewcommand{\arraystretch}{1.5}
\begin{array}{rCCCCCCCCCCCCCCCCCCCCCCCCCCCCCCCCCCCCCCCCCCCCC}
U^0_i\ :&&0& 0& 0& \tfrac23& \tfrac23& \overline{\tfrac12}& 1&  \tfrac12& 1& 1& 0& 0& \overline{\tfrac13}& 1& 1& 1&\overline1& 1& 1& 1& 0& 0& 0& 0& 0& 0\\
U^{1}_i\ :&&0& 0& 0& 0&\tfrac13& 1&   \overline{\tfrac12} & \tfrac12 & 0& 0& 1& 1& 1&  \overline{\tfrac13}& 0& 0& 2& 0& 0& 0& 1& 1& 1& 0& 0& 0\\
U^{-1}_i\ :&&0& \tfrac23 & 1& \tfrac13&\tfrac13&  \tfrac32& 0& \tfrac12 & 0& 0& \tfrac13& 1&  \tfrac43& 0& 0& 0& 2& 0& 0& 0& 0& 0& 0& 0& 0& 0\\
U^{-1}_i-U^{1}_i\ :&&0& \tfrac23 & 1& \tfrac13& 0& \tfrac12 & \tfrac12 & 0& 0& 0& \overline{\tfrac23}& 0& \tfrac13&\tfrac13& 0& 0& 0& 0& 0& 0& \overline1& \overline1& \overline1& 0& 0& 0\\
X^0_i\ :&&0& 0& \tfrac23 &  \tfrac53& 2& 2&  \tfrac52& 3& 3& 3& 3& \tfrac73&\tfrac73& \tfrac83& 3& 3& 3& 3& 3& 3& 3& 2& 1& 0& 0& 0,
\end{array}
\]
{from which it is clear that $X_i^0$} attains its maximum $\omax=3$ in two disjoint clusters, at $m^0\in\{8,9,10,11\}$ or $m^0\in\{15,16,17,18,19,20,21\}$. {We can remove a soliton from the state at $t=0$ by using the undressing \eqref{eq:undress U} with $m^0$ taking any one of these maximising values.} The phase parameter for this soliton is given by the formula \eqref{eq:phi}, at $t=0$,
\begin{equation}
\phi=m^0+\frac1{\kappa_{\text{max}}}\Big(\sum_{j\ge m^0}U^{-1}_j-\sum_{j<m^0}U^{0}_j\Big).
\end{equation}

By Proposition~\ref{prop: cluster}, {the result of the undressing only depends} on which cluster $m^0$ belongs to and so, for example, the choices $m^0=8$, 9, 10,11 all give the same result. We first choose $m^0=11$, {removing} the left-most $3$-soliton
\[
\setlength{\arraycolsep}{1pt}
\renewcommand{\arraystretch}{1.5}
\begin{array}{rCCCCCCCCCCCCCCCCCCCCCCCCCCCp{1.5in}}
U^{0}_i\ (i<11):     &&0& 0& 0& \tfrac23& \tfrac23& \overline{\tfrac12}& 1&  \tfrac12& 1& 1&&& && &&&&& &&&&&&&($\text{sum}=\tfrac{13}3$)\\
U^{1}_i\ (i<11):     &&0& 0& 0& 0&\tfrac13& 1&   \overline{\tfrac12} & \tfrac12 & 0& 0& & &&  &&&&&&&&&&&&\\
U^{-1}_i\  (i\ge11):&&& &&&& &  & &&& \tfrac13& 1&  \tfrac43& 0& 0& 0& 2& 0& 0& 0& 0& 0& 0& 0& 0& 0&($\text{sum}=\tfrac{14}3$)\\
\text{undressed }U^0_i\ : &&0& 0& 0& 0&\tfrac13& 1&   \overline{\tfrac12} & \tfrac12 & 0& 0& \tfrac13& 1&  \tfrac43& 0& 0& 0& 2& 0& 0& 0& 0& 0& 0& 0& 0& 0
\end{array}
\]
In this case, $\kappa_{\text{max}}=\min(1,\omax=3)=1$ and so 
\begin{equation}
  \phi=11+\tfrac{14}3-\tfrac{13}3=\tfrac{34}3.
\end{equation}
{The spatial extent of the undressed state (which we denote $\widetilde U_i^0$) is clearly smaller than that of the initial state $U_i^0$ and its mass has been reduced by 3. Furthermore, analysing the solitonic content of $\widetilde U_i^0$,  
\[
\setlength{\arraycolsep}{1pt}
\renewcommand{\arraystretch}{1.5}
\begin{array}{rCCCCCCCCCCCCCCCCCCCCCCCCCCCCCCCCCCCCCCCCCCCCC}
\widetilde U^0_i\ :&&0& 0& 0& 0&\tfrac13& 1&   \overline{\tfrac12} & \tfrac12 & 0& 0& \tfrac13& 1&  \tfrac43& 0& 0& 0& 2& 0& 0& 0& 0& 0& 0& 0& 0& 0\\
\widetilde U^{1}_i\ :&&0&0& 0& 0& 0&0& \tfrac43&   \overline{\tfrac12} & \tfrac12 & 0& 0& 0& \overline{\tfrac13}& 1& 1& 1& \overline{1}& 1& 1& 1& 0& 0& 0& 0& 0& 0\\
\widetilde U^{-1}_i\ :&&0&0&0& \tfrac23 & \tfrac23&\overline{\tfrac12}&  \tfrac12& \tfrac13&1 & 1& \tfrac23& 0& \overline{\tfrac13}& 1& 1& 1& \overline{1}& 0& 0& 0& 0& 0& 0& 0&0&0\\
\widetilde U^{-1}_i- \widetilde U^{1}_i\ :&&0&0&0& \tfrac23 & \tfrac23&\overline{\tfrac12}&  \overline{\tfrac56}& {\tfrac56}&\tfrac12 & 1& \tfrac23& 0&0& 0&0& 0& 0&\overline{1}&\overline{1}& \overline{1}& 0& 0& 0& 0&0&0\\
\widetilde X^0_i\ :&&0& 0&0&0& \tfrac23 &  \tfrac43& \tfrac56& 0&  \tfrac56& \tfrac43& \tfrac73& 3& 3& 3&3& 3& 3& 3& 2& 1& 0& 0& 0& 0& 0& 0,
\end{array}
\]
we see that the left-most 3-soliton has indeed disappeared.
}

Alternatively, we could choose $m^0=15,16,17,18,19,20,21$ {(which all give the same result) to remove a $3$-soliton on the right}.  We will take $m^0=17$
\[
\setlength{\arraycolsep}{1pt}
\renewcommand{\arraystretch}{1.5}
\begin{array}{rCCCCCCCCCCCCCCCCCCCCCCCCCCCp{1.5in}}
U^0_i\ (i<17) : &&0& 0& 0& \tfrac23& \tfrac23& \overline{\tfrac12}& 1&  \tfrac12& 1& 1&0&0&\overline{\tfrac13} &1&1 &1&&&&&&&&&&&($\text{sum}=7$)\\
U^{1}_i\ (i<17) :  &&0& 0& 0& 0&\tfrac13& 1&   \overline{\tfrac12} & \tfrac12 & 0& 0&1&1&1&\overline{\tfrac13}&0&0&\\
U^{-1}_i\  (i\ge17) :&&&&&&&&&&&&&&&&&&2&0&0&0&0&0&0&0&0&0&($\text{sum}=2$)\\
\text{undressed }U^0_i\ :&&0& 0& 0& 0&\tfrac13& 1&   \overline{\tfrac12} & \tfrac12 & 0& 0&1&1&1&\overline{\tfrac13}&0&0&2&0&0&0&0&0&0&0&0&0,
\end{array}
\]
and obtain
\begin{equation}
  \phi=17+2-7=12,
\end{equation}
{as the phase constant for the 3-soliton we removed.}

  If we continue with the  $m^0=17$ case and again calculate  $X^0_i$  for the {undressed state $\widetilde U_i^0$} we get:
\[
\setlength{\arraycolsep}{1pt}
\renewcommand{\arraystretch}{1.5}
\begin{array}{rCCCCCCCCCCCCCCCCCCCCCCCCCCCCCCCCCCCCCCCCCCCCCCCCCCC}
\widetilde U^0_i\ :&&0&0&0&0&\tfrac13& 1& \overline{\tfrac12}& \tfrac12& 0& 0& 1& 1& 1& \overline{\tfrac13}& 0& 0& 2&0&0&0&0&0&0&0&0&0\\
\widetilde U^{1}_i\ :&&0&0&0& 0& 0& 0& \tfrac43& \overline{\tfrac12}& \tfrac12& 0& 0& 0& 0& \tfrac43& 1& \tfrac13& \overline{1}& 1&1& 1& 0&0&0&0&0&0\\
\widetilde U^{-1}_i\ :&&0&0&0&\tfrac{2}{3}&\tfrac{2}{3}&\overline{\tfrac{1}{2}}&1&\tfrac{1}{2}&1&1&0&0&\overline{\tfrac{1}{3}}&1&1&1&\overline1&0&0&0&0&0&0&0&0&0\\
\widetilde U^{-1}_i-\widetilde U^{1}_i\ :&&0&0&0&\tfrac{2}{3}&\tfrac{2}{3}&\overline{\tfrac{1}{2}}&\overline{\tfrac{1}{3}}&1&\tfrac{1}{2}&1&0&0&\overline{\tfrac{1}{3}}&\overline{\tfrac{1}{3}}&0&\tfrac{2}{3}&0&\overline1&\overline1&\overline1&0&0&0&0&0&0\\
\widetilde X^0_i\ :&&0&0&0&0&\tfrac{2}{3}&\tfrac{4}{3}&\tfrac{5}{6}&\tfrac{1}{2}&\tfrac{3}{2}&2&3&3&3&\tfrac{8}{3}&\tfrac{7}{3}&\tfrac{7}{3}&3&3&2&1&0&0&0&0&0&0.
\end{array}
\]
{Notice here that although a 3-soliton has been removed on the right, there are still two disjoint clusters where $\widetilde X_i^0$ attains its maximum $\omax=3$.}   This tells us that the right-most cluster must, originally, have had a least two 3-solitons contained in it.   Indeed it turns out that there are three 3-solitons in the system altogether,  one in the left-most cluster of 3's and two in the right-most cluster of 3's {in $X_i^0$}.  {The order in which they are removed does not matter:  there are three different possibilities  LRR (left then right then right),  RLR or RRL but once all 3-solitons have been removed, the end result does not depend on the order of the undressings. Since the undressing transformation is valid for all times $t$,  this is easily checked on the asymptotic state $t \rightarrow +\infty$, where all 3-solitons are well separated and for which the effect of the undressing  \eqref{eq:undress U} is explicit.
}

We can repeat the undressing procedure until all solitons have been removed. {The intermediate states we obtain depend on the exact sequence of undressings we chose, but the final soliton-free state is independent of the order in which we perform the undressings.
 The complete process is summarised in the following table, in which we list besides the initial and final states, all intermediate states (for one particular order of the undressings), the split points $m^0$ at which a soliton is removed from each state as well as the value of $\omega$ and the phase constant for that soliton:}
\[
\setlength{\arraycolsep}{1pt}
\renewcommand{\arraystretch}{1.5}
\begin{array}{cccccccCCCCCCCCCCCCCCCCCCCCCCCCCCccccccccc}
&&&&\multicolumn{21}{c}{U^0_i}&&&&&&&m^0&&&&(\omax,\phi_{\text{max}})\\
\hline
{\rm (a)}&&&&0& 0& 0& \tfrac23& \tfrac23& \overline{\tfrac12}& 1&  \tfrac12& 1& 1& 0& 0& \overline{\tfrac13}& 1& 1& 1&\overline1& 1& 1& 1& 0& 0& 0& 0& 0& 0&&17&&&&(3,12)\\
{\rm (b)}&&&&0&0&0&0&\tfrac13& 1& \overline{\tfrac12}& \tfrac12& 0& 0& 1& 1& 1& \overline{\tfrac13}& 0& 0& 2&0&0&0&0&0&0&0&0&0&&11&&&&(3,\frac{34}3)\\
{\rm (c)}&&&&0&0&0& 0& 0& 0& \tfrac43& \overline{\tfrac12}& \tfrac12& 0&0&0&\overline{\tfrac{1}{3}}&1&1&1&\overline1&0&0&0&0&0&0&0&0&0&&15&&&&(3,12)\\
{\rm (d)}&&&&0&0&0&0&0&0&\overline{\tfrac13}&\frac32&\overline{\tfrac13}&\tfrac12&0&0&0&\overline{\tfrac13}&0&\overline{1}&0&0&0&0&0&0&0&0&0&0&&8&&&&(\frac43,7)\\
{\rm (e)}&&&&0&0&0&0&0&0&0&\overline{\tfrac12}&\frac12&0&0&0&\overline{\tfrac13}&0&\overline{1}&0&0&0&0&0&0&0&0&0&0&0&&10&&&&(\tfrac12, \tfrac{22}3)\\
{\rm (f)}&&&&0&0&0&0&0&0&0&0&\overline{\tfrac12}&0&0&\overline{\tfrac13}&0&\overline{1}&0&0&0&0&0&0&0&0&0&0&0&0&&\multicolumn{5}{c}{\text{(no soliton).}}
\end{array}
\]
In Figure~\ref{fig:undress} we illlustrate {how the values of $\omax$ and $m^0$ are determined for each state ((a) $\sim$ (f)). }  

\begin{figure}[htbp]\centering
\includegraphics*[width=11.5cm]{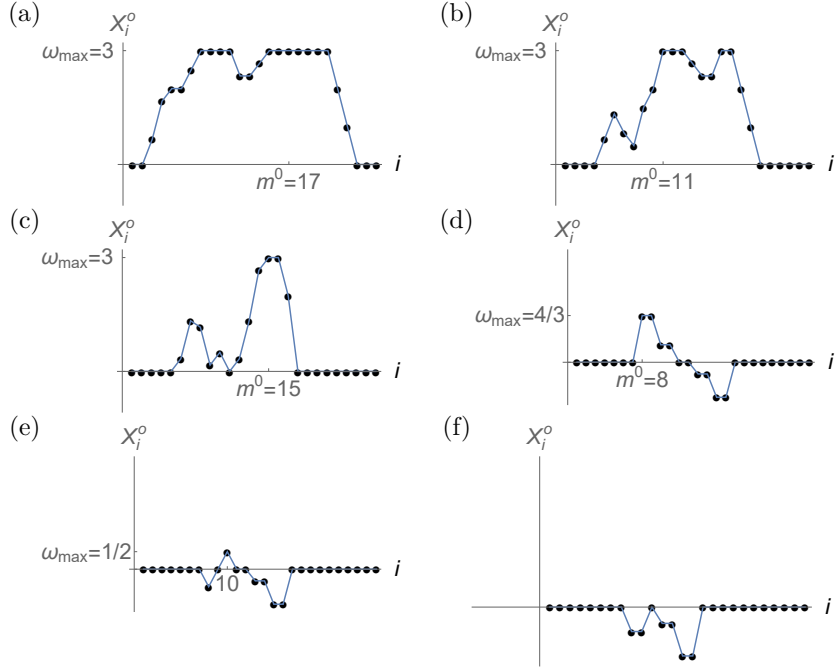}\vskip-.15cm
\caption{\label{fig:undress} Finding $\omega_{\text{max}}$ and $m^0$ after successive undressings of the initial data.}
\end{figure}

The solution at a general time $t$ may be reconstructed using the dressing transformation \eqref{eq:nakata}. To begin, we write the background state at time $t$, $B^t_i$, in terms of $T$-functions. 
{As explained in Section \ref{sec:sol prop}, the background $T$-function is given by}
\begin{equation}
  T^t_i=\tfrac12 \sum_{j\in\mathbb Z}|i-t-j|B^0_j .
\end{equation}
For the current example,
\begin{equation}
\setlength{\arraycolsep}{1pt}
\renewcommand{\arraystretch}{1.5}
\begin{array}{ccCCCCCCCCCCCCCCCCCCCCCc}
B^0_i&=&0&0&0&0&0&0&0&0&\overline{\tfrac12}&0&0&\overline{\tfrac13}&0&\overline{1}&0&0&0&0&0&0&0
\end{array}
\end{equation}
where the index of the first zero is 1, and so the background $T$-function is 
\begin{equation}
  T^t_i=-\tfrac14|i-t-9|-\tfrac16|i-t-12|-\tfrac12|i-t-14|
\end{equation}

Next, the solitons are added back using the data $(\omega,\phi)$ collected above, in weakly increasing order for the mass $\omega$,  by iterating transformation \eqref{eq:nakata}
\begin{equation}
  \tilde T^t_i=\max(\tfrac12\kappa (i-\varphi^t)+T^{t+1}_i,-\tfrac12\kappa (i-\varphi^t)+T^{t-1}_i),
\end{equation}
where $\kappa=\min(1,\omega)$ and $\varphi^t=\max(1,\omega)t+\phi$. After this sequence of dressing transformations, {using \eqref{eq:ud H sub}, the resulting $T$-function then yields an explicit expression} for the solution $U^t_i$ of udKdV, corresponding to the given initial data.

In Figure~\ref{fig:simu} we show this solution at every tenth time step in a moving frame of speed 1. Solitons of mass not exceeding 1 and background are stationary in this frame. The plots show both the time evolution of the initial state calculated using the update rule \eqref{eq:update rule}, plotted as dots, as well as the exact solution found by the sequence of dressing Darboux transformations described above, with $i$ taken to be a real variable, plotted as a continuous line. We observe that, as they should, the two plots coincide at integer values of $i$. One observes the emergence from the initial conditions of three mass 3 solitons (very quickly) and a mass $\frac43$ soliton (around $t=34$). Also, the state translating at speed 1 (and hence stationary in the moving frame) is a mass $\frac12$ soliton copropagating with the background, as is clear from state (e) in the undressing.

\begin{figure}[htbp]\centering
\includegraphics*[width=14.5cm]{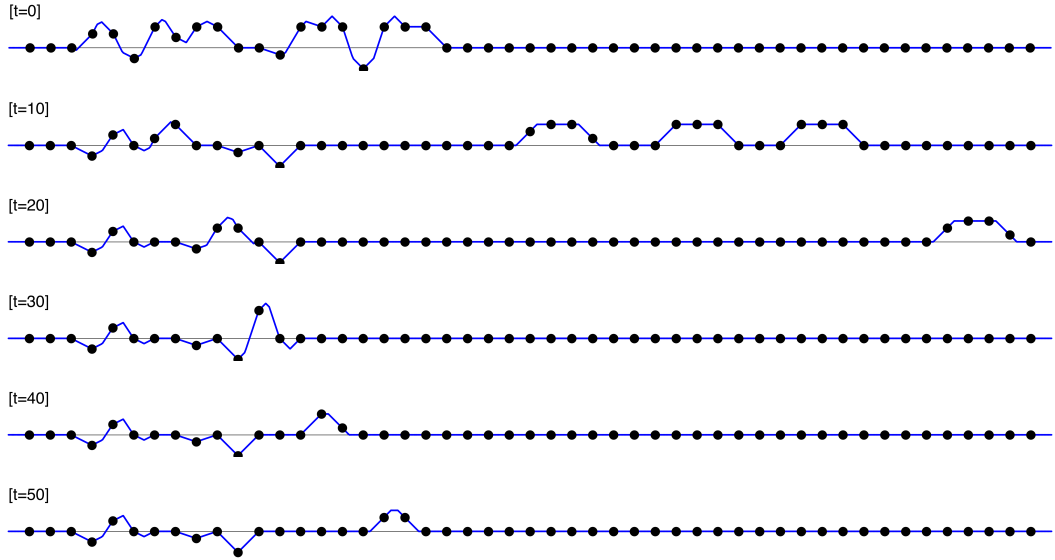}\vskip-.15cm
\caption{\label{fig:simu} Simulation (dots) and exact solution (solid line) shown together. The solution is plotted at time steps $t=0,10,\dots,50$ in a frame moving to the right at speed 1 {(the left-most dot in each plot is located at $i=t+1$).}}
\end{figure}

\end{exa}

\section{Conclusions} 
{We have given an explicit description of eigenfunctions for the $\max$-linear  system for the udKdV equation and we have shown how the soliton adding (dressing) and soliton removing (undressing) procedures for the udKdV equation (over $\mathbb R$, for solutions with compact support) defined by these eigenfunctions, may be performed explicitly, and in complete generality. These processes were shown to be ultradiscete analogues of the Darboux transformation for the dKdV equation.}

 As part of proving these results we discovered two new conserved densities, {$\max X_i^t$ and $\max Y^t_i$, with} $X_i^t$ and $Y^t_i$ defined by \eqref{eq:X} and \eqref{eq:Y}. {These new conserved densities} correspond to conservation of maximal soliton mass and speed respectively.

\section{Acknowledgements}
J{JC}N would like to acknowledge partial support from the Edinburgh Mathematical Society Research Fund and from the Glasgow Mathematical Journal Trust.

RW would like to acknowledge support from the Japan Society for the Promotion of Science (JSPS), through the JSPS grant: KAKENHI grant number 15K04893. He would also like to express his gratitude for financial support (from EMS and GMJT) during a visit to Glasgow in spring 2016, {during which this work took shape}.

\appendix
\section{A proof of Lemma~\ref{lem:m}}\label{ap:proof of 9}
\paragraph{Proof of part (i)} There is a local maximum at $i=m$ if and only if $X_{m-1}^t\le X_m^t$ and $X_{m}^t\ge X_{m+1}^t$, i.e.\ $\Delta X^t_{m-1}\ge0$ and $\Delta X^t_{m}\le0$. Hence Lemma~\ref{lem:m}(i) states that 
\begin{gather*}
(\Delta X^t_{m-1}\ge0\text{ and \/}\Delta X^t_{m}\le0\text{ and \/}X^t_m>1)\\
\implies((U^t_{m-1}+U^t_{m}\ge1\text{ or \/}U^{t-1}_m+U^t_{m}=1)\text{ and \/} (U^t_{m-1}+U^t_{m}\ge1\text{ or \/}U^t_{m-1}+U^{t+1}_{m-1}=1)).
\end{gather*}
The contrapositive version of this implication is
\begin{gather*}
((U^t_{m-1}+U^t_{m}<1\text{ and \/}U^{t-1}_m+U^t_{m}<1)\text{ or \/} (U^t_{m-1}+U^t_{m}<1\text{ and \/}U^t_{m-1}+U^{t+1}_{m-1}<1))\\
\implies(\Delta X^t_{m-1}<0\text{ or \/}\Delta X^t_{m}>0\text{ or \/}X^t_m\le1).
\end{gather*}
It is sufficient to prove the stronger results:
\begin{equation}\label{eq:lem1}
(U^t_{m-1}+U^t_{m}<1\text{ and \/}U^{t-1}_m+U^t_{m}<1)\implies (\Delta X^t_{m-1}<0\text{ or \/} X^t_m\le 1),
\end{equation}
and
\begin{equation}\label{eq:lem2}
(U^t_{m-1}+U^t_{m}<1\text{ and \/}U^{t}_{m-1}+U^{t+1}_{m-1}<1)\implies (\Delta X^t_{m}>0\text{ or \/} X^t_m\le 1).
\end{equation}

Let $U^t_{m-1}+U^t_{m}<1$. Then, using \eqref{eq:naive}, we get 
\begin{equation}\label{eq:naive eq 1}
-U^t_{m}=\max(U^{t}_{m-1}-1,-U^{t}_{m})=\max(U^{t-1}_m-1,-U^{t-1}_{m-1})\ge -U^{t-1}_{m-1},\text{ i.e.\ $U^{t}_m\le U^{t-1}_{m-1}$}
\end{equation}
and
\begin{equation}\label{eq:naive eq 2}
-U^t_{m-1}=\max(U^{t}_{m}-1,-U^{t}_{m-1})=\max(U^{t+1}_{m-1}-1,-U^{t+1}_{m})\ge -U^{t+1}_{m},\text{ i.e.\ $U^{t+1}_m\ge U^{t}_{m-1}$}.
\end{equation}
Then, using \eqref{eq:basic 2}, these give
\begin{equation}\label{eq:Y ineq}
  \Delta Y^t_{m-1}\ge0\text{\ \ \ and \ \ }\Delta Y^{t+1}_{m-1}\le0.
\end{equation} 

Now suppose that, in addition, $U^{t-1}_m+U^t_{m}<1$. By \eqref{eq:basic 4}, 
\begin{equation}\label{eq:Y 0}
Y^t_m=0,  
\end{equation}
and so $\Delta Y^t_{m-1}=-Y^t_{m-1}\le0$. Taken together with \eqref{eq:Y ineq} we get $\Delta Y^t_{m-1}=0$ and so by \eqref{eq:basic 2},
\begin{equation}\label{eq:U=U}
U^t_m=U^{t-1}_{m-1}.
\end{equation}

Also from \eqref{eq:Y ineq}, $\Delta Y^{t+1}_{m-1}\le0$ and so $ Y^{t+1}_{m-1}\ge Y^{t+1}_{m}\ge0$. There are two possibilities: either $Y^{t+1}_{m-1}=0$, and so $Y^{t+1}_{m}=0$ or $Y^{t+1}_{m-1}>0$. In the former case, since from \eqref{eq:Y 0} $Y^t_m=0$ also and \eqref{eq:basic 3} gives $X^t_m=U^t_m+U^{t+1}_m\le1$. In the latter case, \eqref{eq:basic 4} gives $U^{t+1}_{m-1}+U^{t}_{m-1}=1$ and so from \eqref{eq:basic 1},
\begin{equation}
\Delta X^t_{m-1}=U^{t-1}_{m-1}-U^{t+1}_{m-1}=U^{t-1}_{m-1}+U^{t}_{m-1}-1=U^{t}_{m}+U^{t}_{m-1}-1<0,
\end{equation}
using \eqref{eq:U=U}. This completes the proof of \eqref{eq:lem1}. 

The proof of \eqref{eq:lem2} is very similar. Now suppose that, rather than $U^{t-1}_m+U^t_{m}<1$, we have $U^{t}_{m-1}+U^{t+1}_{m-1}<1$. By \eqref{eq:basic 4}, 
\begin{equation}\label{eq:Y 0'}
Y^{t+1}_{m-1}=0,  
\end{equation}
and so $\Delta Y^{t+1}_{m-1}=Y^{t+1}_{m}\ge0$. Taken together with \eqref{eq:Y ineq} we get $\Delta Y^{t+1}_{m-1}=0$ and so by \eqref{eq:basic 2},
\begin{equation}\label{eq:U=U'}
U^{t+1}_m=U^{t}_{m-1}.
\end{equation}

Also from \eqref{eq:Y ineq}, $\Delta Y^{t}_{m-1}\ge0$ and so $ Y^{t}_{m}\ge Y^{t}_{m-1}\ge0$. There are two possibilities: either $Y^{t}_{m}=0$, and so $Y^{t}_{m-1}=0$ or $Y^{t}_{m}>0$. In the former case, since from \eqref{eq:Y 0'} $Y^{t+1}_{m-1}=0$ also and \eqref{eq:basic 3} gives $X^t_m=U^{t-1}_{m-1}+U^{t}_{m-1}\le1$. In the latter case, \eqref{eq:basic 4} gives $U^{t}_{m}+U^{t-1}_{m}=1$ and so from \eqref{eq:basic 1},
\begin{equation}
\Delta X^t_{m}=U^{t-1}_{m}-U^{t}_{m-1}=1- U^t_m-U^t_{m-1}>0,
\end{equation}
using \eqref{eq:U=U'}. This completes the proof of \eqref{eq:lem2}. 

\paragraph{Proof of part (ii)} In the case that $V^t\le1$ Lemma~\ref{lem:V^t} gives $Y^t_i=0$ for all $i$. Then, by \eqref{eq:basic 3}, 
\begin{equation}
\kmax=\omax=X^t_m=U^{t-1}_{m-1}+U^t_{m-1},
\end{equation}
as required.

For $V^t>1$, $\kmax=1$ and we use a proof by contradiction. Suppose that $U^{t-1}_{m-1}+U^{t}_{m-1}<1$. Then, by \eqref{eq:basic 4}, $Y^{t}_{m-1}=0$. In general $Y^{t+1}_{m-1}\ge0$ and we must distinguish two cases. In the case, $Y^{t+1}_{m-1}=0$,  
\begin{equation}
X^t_m=U^{t-1}_{m-1}+U^{t}_{m-1}<1,
\end{equation} 
using \eqref{eq:basic 3}, and so $X^t_m$ is not (globally) maximal. On the other hand, if $Y^{t+1}_{m-1}>0$, \eqref{eq:basic 4} gives $U^{t}_{m-1}+U^{t+1}_{m-1}=1$ and then by \eqref{eq:basic 1},
\begin{equation}
  \Delta X^t_{m-1}=U^{t-1}_{m-1}-U^{t+1}_{m-1}=U^{t-1}_{m-1}+U^{t}_{m-1}-1<0,
\end{equation}
by assumption, which means that $X^t_m$ is not maximal.

\section{Proof of Lemma \ref{lem:samesplit}}\label{proofLemmasamesplit}
Let $m^t$ be the left-most split point in the block $M_j^t$ that corresponds to the soliton that was taken out from $U_i^t$ in the undressing with the bound state eigenfunction $\bar\Theta_i^t$. Since in that case $X_{m^t}^t=\omax$ and $\Delta X_{m^t-1}^t>0$, we have from Proposition \ref{lem:m-1} (ii) that $X_{m^t-1}^{t-1}=\omax$ as well.

Let $\widehat F_i^t$ denote the difference between the two basic solutions to the linear system \eqref{eq:lin1}--\eqref{eq:lin4} for $\widehat U_i^t$ (as in \eqref{eq:F}) with $\omega=\omax$ and $\phi=\phi_{\max}$:
\begin{equation}
\widehat F^t_i=\kappa_{\max}(i-\varphi_{\max}^t)+\sum_{j\ge i}\widehat U^{t}_j-\sum_{j<i}\widehat U^{t-1}_j.
\end{equation}
It then suffices to show that $\widehat F_i^t$ takes its smallest non-negative value at $i=m^t$ (cf. the proof of Theorem \ref{thm:gen sol}).

Since $m^t$ is one of the split points that give rise to the undressing \eqref{UtoUhat}, we have $\widehat U_i^t=U_i^{t-1}$ for $i\geq m^t$, and therefore
\begin{align}
\widehat F_{m^t}^t &= \kappa_{\max}(i-\varphi_{\max}^t)+\sum_{j\ge m^t} U^{t-1}_j-\sum_{j<m^t}\widehat U^{t-1}_j\nonumber\\
&= \sum_{j<m^t} \big( U_j^t - \widehat U_j^{t-1} \big),\label{fhatsums}
\end{align}
where we have used the definition \eqref{eq:varphi} for $\varphi_{\max}^t$. We will now show that $\widehat F_{m^t-1}^t<0$ but $\widehat F_{m^t}^t\geq 0$ which, since the sequence $\widehat F_i^t$ is non-decreasing in $i$ (cf. Theorem \ref{thm:gen sol}), tells us that $\widehat F_{m^t}^t$ is indeed the smallest non-negative value in the sequence $\widehat F_i^t$.

From \eqref{eq:F inc} we have
\begin{align}
\widehat F_{m^t-1}^t &= \widehat F_{m^t}^t + \widehat U_{m^t-1}^t + \widehat U_{m^t-1}^{t-1} - \kappa_{\max}\nonumber\\
&= \sum_{j<m^t-1} \big(  U_j^t - \widehat U_j^{j-1} \big) + \big( U_{m^t-1}^t + U _{m^t-1}^{t+1} - \kappa_{\max}\big),
\end{align}
since $\widehat U_{m^t-1}^t = U_{m^t-1}^{t+1}$ because of \eqref{UtoUhat}.
Furthermore, because of our choice of $m^t$ for which $\Delta X_{m^t-1}^t>0$ and because of Proposition \ref{prop:X ineq}, it turns out that $U_{m^t-1}^t + U _{m^t-1}^{t+1} - \kappa_{\max}<0$ and therefore that
\begin{equation}
\widehat F_{m^t-1}^t < \sum_{j<m^t-1} \big(  U_j^t - \widehat U_j^{t-1} \big).\label{fhatdownsums}
\end{equation}
Note that the $\widehat U_i^{t-1} $  in expressions \eqref{fhatsums} and \eqref{fhatdownsums}  are the result of undressing the downdate $U_i^{t-1}$ of $U_i^t$ with the downdate $\bar \Theta_i^{t-1}$ of the bound state eigenfunction $\bar \Theta_i^t$, for an appropriate split point $m^{t-1}$. Under our assumptions on $m^t$ we have that $X_{m^t-1}^{t-1}=\omax$ and that $m^t-1$ is indeed one of the split points for $\bar \Theta_i^{t-1}$, but it remains to be ascertained that this split point is part of the correct block $M_j^{t-1}$, or in other words that $\bar \Theta_i^{t-1}$ with the choice $m^{t-1}=m^t-1$, is indeed the downdate of $\bar \Theta_i^{t}$ (for split point $m^t$).

Using expression \eqref{eq:bar Theta alt} for the bound state eigenfunction $\bar \Theta_i^{t-1}$ for $U_j^{t-1}$, with $m^{t-1}=m^t-1$, we have
\begin{equation}\label{eq:bar Theta down}
\bar \Theta^{t-1}_i=
\begin{cases}
\displaystyle\kmax(i-m^t +1)+ \sum_{j=i}^{m^t-2} U^{t-1}_j + \sum_{j< m^t-1} U^{t-2}_j&i\le m^t-1\\
\displaystyle\sum_{j<i} U^{t-2}_j&i\ge m^t-1,
\end{cases}
\end{equation}
If we now follow the recipe explained in Section \ref{sec:BSeig} for obtaining the update of this bound state eigenfunction, we find that
\begin{equation}\label{eq:bar Theta downup}
\bar \Theta^{t-1}_i\Big|_{t\to t+1}=
\begin{cases}
\displaystyle\kmax(i-m^t +1)+ \sum_{j=i}^{m^t-2} U^{t}_j + \sum_{j< m^t-1} U^{t-1}_j&i\le m^t-1\\
\displaystyle\sum_{j<i} U^{t-1}_j&i> m^t-1,
\end{cases}
\end{equation}
where, although both cases coincide for $i=m^t-1$, we chose to write the second case with a sharp inequality. As $\kappa_{\max}=U_{m^t-1}^t+ U_{m^t-1}^{t-1}$ because of Lemma  \ref{lem:m} (ii), it is clear that this expression coincides with formula  \eqref{eq:bar Theta alt} for $\bar\Theta_i^t$ and thus that the pair $\bar \Theta_i^{t-1}$ (with split point $m^t-1$) and  $\bar \Theta_i^{t}$ satisfies the linear system \eqref{eq:lin1}--\eqref{eq:lin4} for $U_i^{t-1}$ (Theorem \ref{th:BS}). Then, repeating the calculation that led to the undressing formula \eqref{eq:undress U} but now for $U_i^{t-1}$, with $\bar\Theta_i^{t-1}$ and $\bar\Theta_i^t$, we obtain 
\begin{equation}
  U^{t-1}_i+\bar\Theta^{t-1}_{i+1} + \bar\Theta^{t}_i -\bar\Theta^{t-1}_i-\bar\Theta^{t}_{i+1}=
  \begin{cases}
    U^{t}_i&i <m^t-1\\
    U^{t-2}_i&i\ge m^t-1,
  \end{cases}
\end{equation}
which is nothing but the downdate of $\widehat U_i^t$ \eqref{eq:undress U} with split point $m^{t-1}=m^t-1$.
For this split point, $\bar \Theta_i^{t-1}$ is therefore indeed the downdate of $\bar \Theta_i^{t}$, used in the undressing, and we can now use the fact that $\widehat U_j^{t-1} = U_j^t$ when $j<m^t-1$ to simplify the righthand sides of \eqref{fhatsums} and \eqref{fhatdownsums}. For $\widehat F_{m^t-1}^t $, from \eqref{fhatdownsums}, we clearly have $\widehat F_{m^t-1}^t <0$ and for $\widehat F_{m^t}^t$ we find from \eqref{fhatsums} and \eqref{eq:basic 1} that
\begin{align}
\widehat F_{m^t}^t = U_{m^t-1}^t - \widehat U_{m^t-1}^{t-1} =  U_{m^t-1}^t - U_{m^t-1}^{t-2} = -\Delta X_{m^t-1}^{t-1} \geq 0,
\end{align}
since $X_{m^t-1}^{t-1}$ is maximal and therefore $\Delta X_{m^t-1}^{t-1} = X_{m^t}^{t-1} - X_{m^t-1}^{t-1} \leq 0$.
This completes the proof.

 \end{document}